\definecolor{col}{RGB}{250, 64, 47}
\newtheorem{theorem}{Theorem}
\newtheorem{result}{Result}
\newtheorem{lem}{Lemma}
\DeclareMathOperator{\poly}{poly}
\renewcommand{\var}{\operatorname{var}}
\newtheorem{defn}{Definition}
\newcommand{\llangle}{\langle\!\langle}
\newcommand{\rrangle}{\rangle\!\rangle}
\newcommand{\C}{\mathbb{C}}
\DeclarePairedDelimiterX{\inner}[2]{\langle}{\rangle}{#1|#2}
\DeclarePairedDelimiterX{\expect}[3]{\langle}{\rangle}{#1|#2|#3}
\definecolor{tensorcolor}{rgb}{0.65,0.77,0.95}
\definecolor{btensorcolor}{rgb}{0.65,0.50,0.69}
\definecolor{whitetensorcolor}{rgb}{0.93,0.93,0.93}
\definecolor{gtensorcolor}{rgb}{0.6,0.8,0.5}
\definecolor{operatorcolor}{rgb}{1.0,1.0,1.0}
\newcommand\singledx{1.8}
\newcommand{\GTensor}[5]{
	\begin{scope}[shift={(#1)}]
    \ifnum#5=0
		\draw[very thick, draw=red] (-#2,0) -- (#2,0);
            \draw[very thick] (0,#2) -- (0,0);
    \fi
    \ifnum#5=-1
		\draw[very thick] (0,0) -- (#2,0);
            \draw[very thick] (0,#2) -- (0,0);
    \fi
    \ifnum#5=1
		\draw[very thick] (-#2,0) -- (0,0);
            \draw[very thick] (0,#2) -- (0,0);
    \fi

    \ifnum#5=2
		\draw[very thick,draw=red] (-#2,0) -- (#2,0);
    \fi
    \ifnum#5=3
		\draw[very thick] (0,-#2) -- (0,#2);
    \fi
    \ifnum#5=4
		\draw[very thick] (-#2,0) -- (#2,0);
    \fi
    \ifnum#5=5
		\draw[very thick, draw=red] (-#2,0) -- (#2,0);
		\draw[very thick] (0,#2) -- (0,-#2);
    \fi
    \ifnum#5=6
		\draw[very thick] (-#2,0) -- (#2,0);
            \draw[very thick] (0,#2) -- (0,0);
    \fi
    \ifnum#5=7
		\draw[very thick, draw=red] (-#2,0) -- (#2,0);
            \draw[very thick] (0,-#2) -- (0,0);
    \fi
    \ifnum#5=8
		\draw[very thick] (-#2,0) -- (#2,0);
    \fi
    \ifnum#5=9
		\draw[very thick] (-#2,0) -- (#2,0);
            \draw[very thick] (0,-#2) -- (0,0);
    \fi
        \draw[ thick, fill=tensorcolor, rounded corners=2pt] (-#3,-#3) rectangle (#3,#3);
		\draw (0,0) node {\scriptsize #4};
	\end{scope}
}
\newcommand{\GFTensor}[5]{
	\begin{scope}[shift={(#1)}]
    \ifnum#5=0
		\draw[very thick, draw=red] (-#2,0) -- (\singledx+#2,0);
            \draw[very thick, draw=black] (0,#2) -- (0,0);
            \draw[very thick, draw=black] (\singledx,#2) -- (\singledx,0);
    \fi
    \ifnum#5=1
		\draw[very thick, draw=red] (-#2,0) -- (\singledx+#2,0);
            \draw[very thick, draw=black] (0,-#2) -- (0,0);
            \draw[very thick, draw=black] (\singledx,-#2) -- (\singledx,0);
    \fi
    \ifnum#5=2
		\draw[very thick, draw=red] (0,-#2) -- (0,#2);
            \draw[very thick, draw=red] (\singledx,-#2) -- (\singledx,#2);
    \fi
    \ifnum#5=3
		\draw[very thick] (0,-#2) -- (0,#2);
    \fi
    \ifnum#5=4
		\draw[very thick] (-#2,0) -- (#2,0);
    \fi
        \draw[ thick, fill=tensorcolor, rounded corners=2pt] (-#3,-#3) rectangle (#3+\singledx,#3);
		\draw (0+0.5*\singledx,0) node {\scriptsize #4};
	\end{scope}
}
\newcommand{\UFTensor}[5]{
	\begin{scope}[shift={(#1)}]
    \ifnum#5=0
            \draw[very thick] (0,#2) -- (0,-#2);
            \draw[very thick] (\singledx,#2) -- (\singledx,-#2);
    \fi
    \ifnum#5=1
		\draw[very thick, draw=red] (0,-#2) -- (0,0);
            \draw[very thick, draw=red] (\singledx,-#2) -- (\singledx,0);
            \draw[very thick] (0.5*\singledx,#2) -- (0.5*\singledx,0);
    \fi

    \ifnum#5=2
		\draw[very thick, draw=red] (0,#2) -- (0,0);
            \draw[very thick, draw=red] (\singledx,#2) -- (\singledx,0);
            \draw[very thick] (0.5*\singledx,-#2) -- (0.5*\singledx,0);
    \fi
    \ifnum#5=3
		\draw[very thick, draw=red] (0,#2) -- (0,-#2);
            \draw[very thick, draw=red] (\singledx,#2) -- (\singledx,-#2);
    \fi
    \ifnum#5=4
		\draw[very thick] (-#2,0) -- (#2,0);
    \fi
        \draw[ thick, fill=whitetensorcolor, rounded corners=2pt] (-#3,-#3) rectangle (#3+\singledx,#3);
		\draw (0+0.5*\singledx,0) node {\scriptsize #4};
	\end{scope}
}
\newcommand{\GPTensor}[5]{
	\begin{scope}[shift={(#1)}]
    \ifnum#5=0
		\draw[very thick, draw=red] (-#2,0) -- (#2,0);
            \draw[very thick, draw=red] (0,-#2) -- (0,#2);
    \fi
    \ifnum#5=1
		\draw[very thick, draw=red] (-#2,0.2) -- (#2,0.2);
            \draw[very thick, draw=red] (-#2,-0.2) -- (#2,-0.2);
            \draw[very thick, draw=red] (0.2,-#2) -- (0.2,#2);
            \draw[very thick, draw=red] (-0.2,-#2) -- (-0.2,#2);
    \fi

    \ifnum#5=2
		\draw[very thick, draw=red] (-#2,0) -- (#2,0);
            \draw[very thick, draw=red] (0,-#2) -- (0,#2);
    \fi

    \ifnum#5=3
		\draw[very thick] (0,-#2) -- (0,#2);
    \fi
        \draw[ thick, fill=tensorcolor, rounded corners=2pt] (-#3,-#3) rectangle (#3,#3);
	\draw (0,0) node {\scriptsize #4};
    \ifnum#5=0
		\draw[very thick] (#3/2,#3/2) -- (#2,#2);
    \fi
	\end{scope}
}
\newcommand{\UPTensor}[5]{
	\begin{scope}[shift={(#1)}]
    \ifnum#5=0
		\draw[very thick, draw=red] (-#2,0) -- (#2,0);
            \draw[very thick, draw=red] (0,-#2) -- (0,#2);
    \fi
    \ifnum#5=1
		\draw[very thick, draw=red] (-#2,0.2) -- (#2,0.2);
            \draw[very thick, draw=red] (-#2,-0.2) -- (#2,-0.2);
            \draw[very thick, draw=red] (0.2,-#2) -- (0.2,#2);
            \draw[very thick, draw=red] (-0.2,-#2) -- (-0.2,#2);
    \fi

    \ifnum#5=2
		\draw[very thick] (-#2,0) -- (#2,0);
            \draw[very thick] (0,-#2) -- (0,#2);
    \fi

    \ifnum#5=3
		\draw[very thick, draw=red] (-#2,0) -- (#2,0);
            \draw[very thick, draw=red] (0,-#2) -- (0,#2);
    \fi
        \draw[ thick, fill=whitetensorcolor, rounded corners=2pt] (-#3,-#3) rectangle (#3,#3);
	\draw (0,0) node {\scriptsize #4};
    \ifnum#5=0
		\draw[very thick] (#3/2,#3/2) -- (#2,#2);
    \fi
    \ifnum#5=2
		\draw[very thick] (#3/2,#3/2) -- (0.75*#2,0.75*#2);
    \fi
    \ifnum#5=3
		\draw[very thick] (#3/2,#3/2) -- (0.75*#2,0.75*#2);
    \fi
	\end{scope}
}
\newcommand{\EPTensor}[5]{
	\begin{scope}[shift={(#1)}]
    \ifnum#5=0
		\draw[very thick, draw=red] (-#2,0.2) -- (#2,0.2);
            \draw[very thick, draw=red] (-#2,-0.2) -- (#2,-0.2);
            \draw[very thick, draw=red] (0.2,-#2) -- (0.2,#2);
            \draw[very thick, draw=red] (-0.2,-#2) -- (-0.2,#2);
    \fi
    \ifnum#5=1
		\draw[very thick, draw=red] (-#2,0.2) -- (#2,0.2);
            \draw[very thick, draw=red] (-#2,-0.2) -- (#2,-0.2);
            \draw[very thick, draw=red] (0.2,-#2) -- (0.2,#2);
            \draw[very thick, draw=red] (-0.2,-#2) -- (-0.2,#2);
    \fi

    \ifnum#5=2
		\draw[very thick,draw=red] (-#2,0) -- (#2,0);
    \fi

    \ifnum#5=3
		\draw[very thick] (0,-#2) -- (0,#2);
    \fi
        \draw[ thick, fill=tensorcolor, rounded corners=2pt] (-#3,-#3) rectangle (#3,#3);
	\draw (0,0) node {\scriptsize #4};
	\end{scope}
}
\newcommand{\Unitary}[5]{
	\begin{scope}[shift={(#1)}]
    \ifnum#5=0
		\draw[very thick, draw=red] (-#2,0) -- (#2,0);
            \draw[very thick] (-#2,2*#3) -- (#2,2*#3);
    \fi
    \ifnum#5=-1
		\draw[very thick] (0,0) -- (#2,0);
            \draw[very thick] (0,#2) -- (0,0);
    \fi
    \ifnum#5=1
		\draw[very thick, draw=red] (-#2,0) -- (#2,0);
            \draw[very thick] (0,2*#3) -- (#2,2*#3);
    \fi

    \ifnum#5=2
		\draw[very thick, draw=red] (-#2,0) -- (#2,0);
            \draw[very thick, draw=red] (-#2,2*#3) -- (#2,2*#3);
    \fi

    \ifnum#5=3
		\draw[very thick] (0,-#2) -- (0,#2);
    \fi
        \draw[ thick, fill=tensorcolor, rounded corners=2pt] (-#3,-#3) rectangle (#3,3*#3);
		\draw (0,#3) node {\scriptsize #4};
	\end{scope}
}
\newcommand{\FUnitary}[5]{
	\begin{scope}[shift={(#1)}]
    \ifnum#5=0
		\draw[very thick, draw=red] (-#2,0) -- (#2,0);
            \draw[very thick, draw=red] (-#2,2*#3) -- (#2,2*#3);
            \draw[very thick, draw=red] (-#2,4*#3) -- (#2,4*#3);
    \fi
    \ifnum#5=1
		\draw[very thick, draw=red] (-#2,0) -- (#2,0);
            \draw[very thick, draw=red] (0,2*#3) -- (#2,2*#3);
            \draw[very thick, draw=red] (0,4*#3) -- (#2,4*#3);
    \fi
    \ifnum#5=2
		\draw[very thick,draw=red] (-#2,0) -- (#2,0);
    \fi

    \ifnum#5=3
		\draw[very thick] (0,-#2) -- (0,#2);
    \fi
        \draw[ thick, fill=tensorcolor, rounded corners=2pt] (-#3,-#3) rectangle (#3,5*#3);
		\draw (0,2*#3) node {\scriptsize #4};
	\end{scope}
}
\newcommand{\PTensor}[5]{
	\begin{scope}[shift={(#1)}]
    \ifnum#5=0
		\draw[very thick, draw = red] (0,-#2) -- (0,0);
            \draw[very thick, draw = red] (0,0) -- (0,#2);
    \fi
    \ifnum#5=1
		\draw[very thick, draw = red] (-#2,0) -- (0,0);
            \draw[very thick, draw = red] (0,0) -- (#2,0);
    \fi
    \ifnum#5=2
		\draw[very thick, draw = red] (0,-#2) -- (0,0);
            \draw[very thick, draw = red] (0,0) -- (0,#2);
    \fi
    \ifnum#5=3
		\draw[very thick, draw = red] (-#2,0) -- (0,0);
            \draw[very thick, draw = red] (0,0) -- (#2,0);
    \fi
        \draw[ thick, fill=tensorcolor, rounded corners=2pt] (-#3,-#3) rectangle (#3,#3);
		\draw (0,0) node {\scriptsize #4};
	\end{scope}
}
\newcommand{\BTensor}[5]{
	\begin{scope}[shift={(#1)}]
    \ifnum#5=0
		\draw[very thick, draw=red] (-#2,0) -- (#2,0);
    \fi
    \ifnum#5=1
		\draw[very thick,draw=red] (0,#2) -- (0,-#2);
    \fi
    \ifnum#5=2
		\draw[very thick] (-#2,0) -- (#2,0);
    \fi
    \ifnum#5=3
		\draw[very thick] (0,#2) -- (0,-#2);
    \fi

        \draw[ thick, fill=tensorcolor, rounded corners=2pt] (-#3,-#3) rectangle (#3,#3);
		\draw (0,0) node {\scriptsize #4};
	\end{scope}
}
\newcommand{\DTensor}[5]{
	\begin{scope}[shift={(#1)}]
    \ifnum#5=0
		\draw[very thick, draw=red] (-#2,0) -- (#2,0);
    \fi
    \ifnum#5=1
		\draw[very thick,draw=red] (0,#2) -- (0,-#2);
    \fi
    \ifnum#5=2
		\draw[very thick] (-#2,0) -- (#2,0);
    \fi
    \ifnum#5=3
		\draw[very thick] (0,#2) -- (0,-#2);
    \fi

        \draw[ thick, fill=whitetensorcolor, rounded corners=2pt] (-#3,-#3) rectangle (#3,#3);
		\draw (0,0) node {\scriptsize #4};
	\end{scope}
}
\newcommand{\UTensor}[5]{
	\begin{scope}[shift={(#1)}]
    \ifnum#5=0
		\draw[very thick, draw=red] (-#2,0) -- (#2,0);
    \fi
    \ifnum#5=1
		\draw[very thick,draw=red] (0,#2) -- (0,-#2);
    \fi
    \ifnum#5=2
		\draw[very thick] (-#2,0) -- (#2,0);
    \fi
    \ifnum#5=3
		\draw[very thick] (0,#2) -- (0,-#2);
    \fi
        \draw[ thick, fill=operatorcolor, rounded corners=2pt] (-#3,-#3) rectangle (#3,#3);
		\draw (0,0) node {\scriptsize #4};
	\end{scope}
}
\newcommand{\RTensor}[5]{
	\begin{scope}[shift={(#1)}]
    \ifnum#5=0
		\draw[very thick, draw=red] (-#2,0) -- (#2,0);
    \fi
    \ifnum#5=1
		\draw[very thick,draw=black] (0,#2) -- (0,-#2);
    \fi
    \ifnum#5=2
		\draw[very thick] (-#2,0) -- (#2,0);
    \fi
    \ifnum#5=3
		\draw[very thick,draw=red] (-#2,0) -- (#2,0);
    \fi
        \draw[ thick, fill=white, rounded corners=2pt] (-#3,-#3) rectangle (#3,#3);
		\draw (0,0) node {\scriptsize #4};
	\end{scope}
}
\newcommand{\GDTensor}[5]{
	\begin{scope}[shift={(#1)}]
    \ifnum#5=0
		\draw[very thick] (-#2,0) -- (#2,0);
		\draw[very thick] (0,#2) -- (0,-#2);
    \fi
    \ifnum#5=-1
		\draw[very thick] (0,0) -- (#2,0);
		\draw[very thick] (0,#2) -- (0,-#2);
    \fi
    \ifnum#5=1
		\draw[very thick] (-#2,0) -- (0,0);
		\draw[very thick] (0,#2) -- (0,-#2);
    \fi
        \draw[ thick, fill=tensorcolor, rounded corners=2pt] (-#3,-#3) rectangle (#3,#3);
    \def\dx{#3/3};
	\draw [thick]  (-#3+\dx, \dx) -- (- \dx,#3-\dx);
	\draw [thick] (-#3+1.5*\dx,-#3+1.5*\dx) -- (#3-1.5*\dx,#3-1.5*\dx);
	\draw [thick]  ( \dx, -#3 + \dx) -- (#3 - \dx,-\dx);
	\draw (0,0) node {\scriptsize #4};
	\end{scope}
}
\newcommand\subsetsim{\mathrel{%
  \ooalign{\raise0.2ex\hbox{$\subset$}\cr\hidewidth\raise-0.8ex\hbox{\scalebox{0.9}{$\sim$}}\hidewidth\cr}}}
\begin{document}


\title{Metrology of open quantum systems from emitted radiation}


\author{Siddhant Midha}
\email{siddhantm@princeton.edu}
 \affiliation{Princeton Quantum Initiative, Princeton University, Princeton, New Jersey 08540, USA}

\author{Sarang Gopalakrishnan}
\email{sgopalakrishnan@princeton.edu}
 \affiliation{Princeton Quantum Initiative, Princeton University, Princeton, New Jersey 08540, USA}
\affiliation{Department of Electrical and Computer Engineering, Princeton University, Princeton, New Jersey 08540, USA}
\date{\today}

\begin{abstract}
We explore the task of learning about the dynamics of a Markovian open quantum system by monitoring the information it radiates into its environment. For an open system with Hilbert space dimension $D$, the quantum state of the emitted radiation can be described as a temporally ordered matrix-product state (MPS). 
%
We provide simple analytical expressions for the quantum Fisher information (QFI) of the radiation state, which asymptotically scales linearly with the sensing time unless the open system has multiple steady states. We characterize the crossovers in QFI near dynamical phase transitions, emphasizing the role of temporal correlations in setting the asymptotic rate at which QFI increases. We discuss when optimal sensing is possible with instantaneously measured radiation.

\end{abstract}

\maketitle

\textit{Introduction.---}In the standard setting for quantum metrology, a known initial state is evolved under dynamics depending on a parameter $\theta$ (e.g., a field) that one is trying to learn~\cite{QFI_Multiparameter_Review,Caves_QFI,Fisher_NISQ_Meyer,MB_metrology_review,giovannetti_metrology}. After the state has evolved for some time $t$, one measures it in some basis, and repeats the experiment until the parameter has been estimated to the desired precision. The precision one can achieve with $m$ tries on a system of $N$ qubits scales as $1/(N \sqrt{m})$, the ``Heisenberg limit,'' which is achieved for certain entangled initial states. The dynamics one is trying to sense could be either Hamiltonian or dissipative; in the standard setting, dissipation hinders learning, and makes the Heisenberg limit unattainable \cite{normbounds2_demkowicz2012elusive,normbounds3_demkowicz2014using,normbounds5_zhou2018achieving,zhou2024limits}. However, dissipation also allows new protocols for sensing: instead of measuring the system, one can monitor the information it leaks into its environment, and use this measurement record to infer $\theta$. Sensing through this indirect protocol has yet lacked a unifying description (despite recent progress~\cite{Lesanovsky_BTC,theodoros_criticality,molmer_14,plenio_decoder,Catana_continuous,Gammelmark_continuous,Kiilerich_photocounting,Kiilerich_homodyne,landi_24,FI_stoch_landi,FI_stoch_binder,FI_jump_binder,Garrahan_16,DPT_sensing1,DPT_sensing2,DPT_sensing3,DPT_sensing4,albarelli2018restoring,albarelli2017ultimate,noisyQND}); however, the closely related problem of learning an initial \emph{state} (as opposed to a dynamical parameter) from the measurement record has seen extensive recent work in the context of measurement-induced phase transitions (MIPTs)~\cite{PhysRevX.9.031009, PhysRevB.100.134306,Potter_2022,annurev:/content/journals/10.1146/annurev-conmatphys-031720-030658, dehghani_neural-network_2023, PRXQuantum.5.030311, PhysRevLett.129.200602, PhysRevLett.130.220404, PhysRevLett.128.050602}. 

In the present work we explore this ``radiation-sensing'' protocol for general finite-sized quantum systems coupled to Markovian environments. Our techniques apply equally to discrete and continuous time, but we will focus on discrete-time processes generated by the iterated application of quantum channels: the radiation collected from a discrete-time process over $t$ time-steps forms a matrix-product state (MPS)~\cite{seq_gen_MPS_1,seq_gen_MPS_2,seq_gen_Astrakhantsev,seq_gen_PEPS,seq_gen_Astrakhantsev,seq_gen_holo1,seq_gen_holo2,seq_gen_holo3,PDA_sarang,seq_gen_nisq} on $t$ qudits [see Fig.~\ref{fig:fig1}], and the quantum channel is the transfer matrix of this MPS. This correspondence lets us use MPS methods to derive simple expressions for the sensing power of the emitted radiation, in terms of the quantum Fisher information (QFI) of the MPS. The QFI lower-bounds the variance that can be achieved using arbitrary measurements on the MPS. For simplicity, we will only consider time-periodic quantum processes, so that the MPS is translation-invariant. However, the generalization to time-dependent processes would be a straightforward extension of our formalism. 

\begin{figure}[b]  
    \centering
    \includegraphics[width=1.0\linewidth]{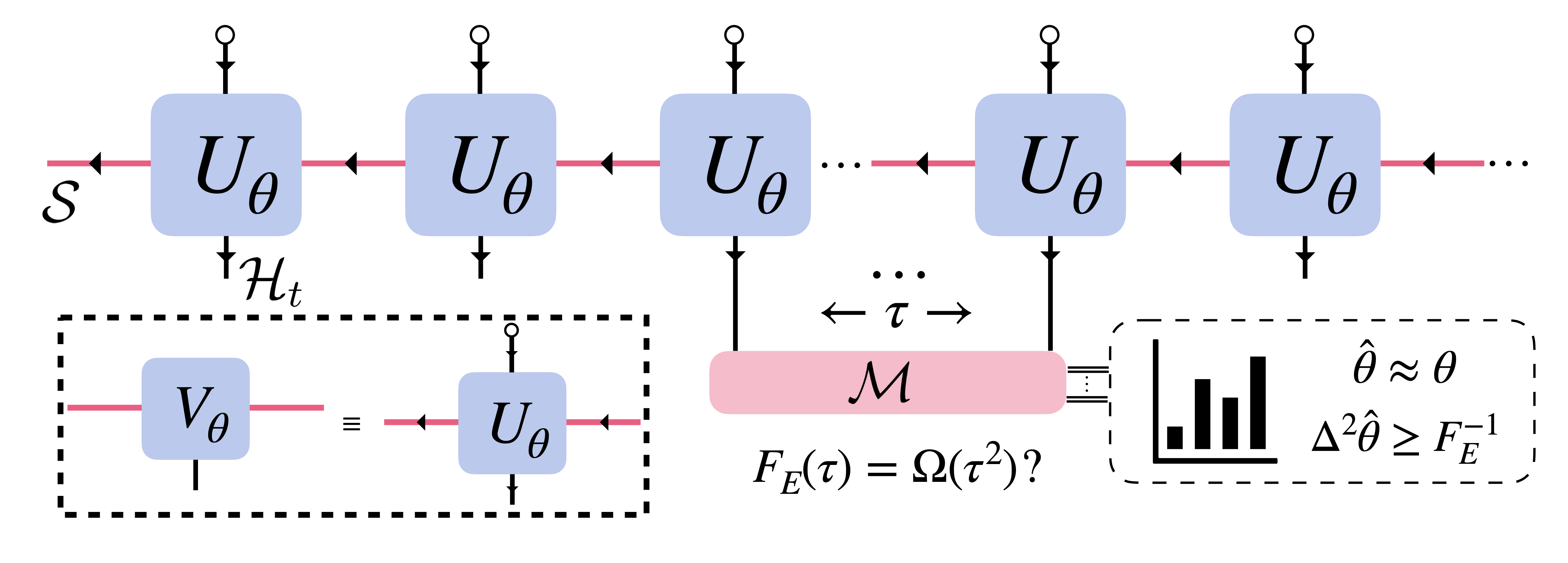}  
    \caption{Quantum sensing of monitored systems. The parameterized unitary mediating the emitter-radiation interaction $U_\theta: \mathcal{S}\otimes\mathcal{H}_t\to \mathcal{S}\otimes\mathcal{H}_t$ induces an isometry $V_\theta \equiv U_\theta \ket{0}_d$. The task of estimating the parameter is rooted in the scaling of the quantum Fisher information $F(\tau)$ with the number of emitted radiation quanta $\tau$, with the ultimate precision determined by the Cram\'{e}r-Rao bound.}
    \label{fig:fig1}  
\end{figure}

Our main results are as follows. First, we derive a simple expression for the QFI of a general quantum channel with a spectral gap $\Delta$~\footnote{For a quantum channel $\mathcal{E}_t$ generated by $\mathcal{L}$, i.e., $\mathcal{E}_t = \exp{\mathcal{L}t}$ we define the spectral gap $\Delta$ as the magnitude of the second largest eigenvalue of $\mathcal{L}$}; this corresponds to an normal MPS, and is generic for finite-dimensional open systems. Our expression immediately implies that for \emph{any} parameter estimation task, the maximum precision achievable as $t \to \infty$ scales as $(\log{D})^{-1}\sqrt{\Delta/(\Delta+1)t}$. For \emph{any} process on a finite-size open system, the achievable precision is limited to $\Omega(1/t)$ \footnote{We use standard asymptotic notation: we say that $f = O(g)$ if there exists a $c > 0$ and $x_0$ such that $f(x) \leq c g(x)$ for $x \geq x_0$. We denote $f=\Omega(g)$ if $g = O(f)$, and $f = \Theta(g)$ if $f = O(g)$ and $g =O(f)$.}. Second, we illustrate this formalism for the concrete case of a ``dissipative time crystal'' on $N$ qubits~\cite{Iemini_BTC,Lesanovsky_BTC,Metrology_2_BTC,Metrology_BTC}. In the time crystal, the long-time and large-size limits do not commute, since $\Delta \to 0$ as $N \to \infty$: at times $t \ll N$, the QFI scales as $N^2 t^2$ (so the achievable precision scales optimally as $1/(N t)$), while for $t \gg N$ the QFI scales as $N^3 t$. Finally, we explore the case of channels with multiple precisely degenerate steady states (e.g., protected by a symmetry). We write down general conditions for the QFI to scale as $t^2$ for arbitrarily late times. We consider a concrete class of such channels, for which the radiation state is a GHZ-type state. We identify conditions under which single-qudit measurements on the emitted radiation saturate the optimal scaling.

\textit{Preliminaries.---}Metrology addresses the following task: given a family of states, $\rho_\theta$, parameterized by the continuous variable $\theta$, one aims to construct a locally unbiased estimator $\hat{\theta}$ based on arbitrary measurements of $\rho_\theta$. 
The fundamental precision limit is set by the quantum Cram\'{e}r-Rao bound \cite{Caves_QFI}, which dictates that $(\Delta\hat{\theta})^2 \geq 1/F(\rho_{\theta})$, where $F(\rho_{\theta})$ is the \textit{quantum Fisher information} (QFI) of the state $\rho_\theta$ (see Sec.~\ref{sec:QFI} of SM \cite{SM}) and $\Delta\hat{\theta}$ is the estimation error. The QFI is given as $F(\rho_\theta) = \text{Tr}[\rho_\theta L_\theta^2]$, where $L_\theta$ is the symmetric logarithmic derivative implicitly defined as $\partial_\theta\rho_\theta =: \{\rho_\theta,L_\theta\}/2$. For pure states $\ket{\psi_\theta}$, this reduces to $F(\psi_\theta) = 4\left(\inner{\dot{\psi}_\theta}{\dot{\psi}_\theta} - |\inner{\dot{\psi}_\theta}{\psi_\theta}|^2\right)$ with $\ket{\dot{\psi}_\theta} = \partial_\theta\ket{\psi_\theta}$. 

In the setting we consider, the open system (or ``emitter'') is a state in a Hilbert space $\mathcal{S} \cong \mathbb{C}^D$. It will suffice to consider an initial pure state $\ket{\psi_0}$ in $\mathcal{S}$. Because we are considering Markovian dynamics, at each time step $t$, a fresh Hilbert space $\mathcal{H}_t \cong \mathbb{C}^d$ (consisting of a $d$-dimensional qudit) is introduced, and an isometry $V_\theta: \mathcal{S} \to \mathcal{S} \otimes \mathcal{H}_t$ is applied. For simplicity we specialize to the time-translation invariant case where $V_\theta$ does not depend on $t$. 
After the system evolves for $T$ time steps, the full state of the system and universe is a pure state $\ket{\Psi; t} \in \mathcal{S} \otimes \prod_{t = 1}^T \mathcal{H}_t$. To get a Trotterized version of the continuous-time dynamics one can split that evolution into steps of size $dt$ and derive the corresponding isometry for that infinitesimal time step, which will approach the identity as $dt \to 0$. Tracing out $\mathcal{H}_t$ in this limit gives a Lindblad master equation. The general form of the Lindblad master equation is $\partial_t \rho = -i [H, \rho] + \sum_m (L_m \rho L_m^\dagger - \{L_m^\dagger L_m, \rho\}/2) \equiv \mathcal{L}(\rho)$, where $H$ is the Hamiltonian and $L_m$ are jump operators, and $\mathcal{L}$ is the Liouvillian generating the dynamics. The radiation state in this limit is a ``continuous MPS''~\cite{PhysRevLett.104.190405}. We will develop the formalism in the discrete case as it is more transparent.

The state $\ket{\Psi; t}$ defines a hierarchy of QFIs [see also Sec.~\ref{sec:hierarcy} of \cite{SM}], in which the lower levels correspond to marginals of $\ket{\Psi; t}$. Recall that the QFI monotonically decreases under partial traces, so each level is upper-bounded by those above it. At the top of the hierarchy, there is the full state $\ket{\Psi; t}$, which contains all possibly attainable information about $\theta$. In general, we do not have access to $\mathcal{S}$; instead, the ``radiation QFI'' is encoded in $\rho_R \equiv \mathrm{Tr}_{\mathcal{S}} (\ket{\Psi;t} \bra{\Psi;t})$. (Also at this level is the usual noisy QFI, which is encoded in the complementary trace over all the $\mathcal{H}_t$.) One of our results is that to leading order in $T$, the radiation QFI coincides with the full QFI, assuming $D$ is finite. The radiation QFI has a simple expression for an MPS, so we will focus on it; however, it is challenging to measure in experiments. In typical experiments, the emitted radiation consists of photons, which are immediately detected or lost; on the other hand, the radiation QFI assumes access to entangled measurements between radiation emitted at different times. To get a more experimentally accessible quantity, one can apply a dephasing channel to $\rho_R$, getting $\rho_D = \sum_{\mathbf{m}} p_{\mathbf{m}} \ket{\mathbf{m}} \bra{\mathbf{m}}$, where $\{ \mathbf{m} \}$ is a density matrix that is diagonal in some product basis on $\prod_t \mathcal{H}_t$. (Note that this product basis need not be time-independent; for example, it includes single-site shadows~\cite{huang2020predicting}.) While $\rho_D$ is the quantity of greatest interest, however, we are only able to estimate it under specific assumptions about the dynamics.

It is immediate to see that the state $\ket{\Psi;t}$ can be expressed as a translation-invariant MPS (TI-MPS, see Sec.~\ref{sec:TIMPS} of \cite{SM}) of bond dimension $D$, as follows:

\begin{align}
   \ket{\Psi} =  \dotso \quad
\begin{tikzpicture}[scale=0.5,baseline={([yshift=-0.65ex] current bounding box.center) }]
        \GTensor{0,0}{1.2}{.6}{\small $V$}{7};
      \GTensor{1*\singledx,0}{1.2}{.6}{\small $V$}{7};
      \GTensor{2*\singledx,0}{1.2}{.6}{\small $V$}{7};
      \GTensor{3*\singledx,0}{1.2}{.6}{\small $V$}{7};
      \GTensor{4*\singledx,0}{1.2}{.6}{\small $V$}{7};
\end{tikzpicture}
    \quad \dotso
\end{align}
Here, each object $V$ is a $D \times D \times d$ tensor corresponding to the isometry introduced above. To each TI-MPS one can associate the ``transfer matrix'' $T_V$, defined as 
\begin{align}
  T_V := \sum_{m=0}^{d-1}K_m\otimes\bar{K}_m \quad \equiv 
\begin{tikzpicture}[scale=0.5,baseline={([yshift=-0.65ex] current bounding box.center) }]
      \GTensor{0,0}{1.2}{.6}{\small $V$}{7};
      \GTensor{0,-2}{1.2}{.6}{\small $\bar{V}$}{0};
    \end{tikzpicture}
    \quad  
\end{align} 
which is simply the superoperator on $\mathcal{S}$ that generates the open-system dynamics with the channel $\mathcal{E}(\rho) = \sum_m K_m\rho K_m^\dagger$. When the dynamics has a unique steady state, 
the transfer matrix admits the following description,
\begin{align}
  T_V = |\rho_{ss}\rrangle\llangle \mathbbm{1}| + \tilde{T}_V  \equiv \quad
    \begin{tikzpicture}[scale=0.5,baseline={([yshift=-0.65ex] current bounding box.center) }]
        \DTensor{0,-1}{1.}{.6}{\small $\rho_{ss}$}{1};
        \draw[very thick,draw=red] (-1,0)--(0,0);
            \draw[very thick,draw=red] (-1,-2)--(0,-2);
            \draw[very thick,draw=red] (1.2,0)--(2,0);
            \draw[very thick,draw=red] (1.2,-2)--(2,-2);
            \draw[very thick,draw=red] (1.2,-2)--(1.2,0);
    \end{tikzpicture}
        \quad + \tilde{T}_V
\end{align} 
where $\tilde{T}_V$ has spectral radius less than unity, $|\Lambda(\tilde{T}_V)| < 1$, and $\mathcal{E}(\rho_{ss}) = \rho_{ss}$ is a fixed point of the channel. The vectorization of an operator $A$ is denoted $|A\rrangle$. The corresponding MPS is called a normal MPS. For the set of normal MPS, local expectation values are well defined and independent of the
boundary matrix \cite{Haegeman_geometry,Haegeman_postMPSmethods}.

\textit{QFI for normal MPS.---}
Our first main result concerns the full QFI for a gapped quantum channel, or equivalently for a normal MPS, acting for $T$ time steps:

\begin{equation} \label{eq:QFI_TIMPS}
  F(t;V,\theta) = 4\left(T\alpha + 2\sum_{\tau =0}^{T-2}(T-\tau-1)\beta_\tau - T^2|\gamma|^2\right).
\end{equation}
where we define
\begin{align}\label{eq:alphanormal}
  \alpha := \quad
\begin{tikzpicture}[scale=0.5,baseline={([yshift=-0.65ex] current bounding box.center) }]
      \GTensor{0,0}{1.2}{.6}{\small $\dot{V}$}{7};
      \GTensor{0,-2}{1.2}{.6}{\small $\dot{\bar{V}}$}{0};
      \draw[very thick,draw=red] (1.2,0)--(2.03,0);
    \DTensor{2,-1}{1.}{.6}{\small $\rho_{ss}$}{1};
          \draw[very thick,draw=red] (1.2,-2)--(2.03,-2);
        \draw[very thick,draw=red] (-2,0)--(-1.2,0);
        \draw[very thick,draw=red] (-2,-2)--(-1.2,-2);
        \draw[very thick,draw=red] (-2,-2)--(-2,0);
\end{tikzpicture}
    \quad 
\gamma :=  \quad
\begin{tikzpicture}[scale=0.5,baseline={([yshift=-0.65ex] current bounding box.center) }]
      \GTensor{0,0}{1.2}{.6}{\small $\dot{V}$}{7};
      \GTensor{0,-2}{1.2}{.6}{\small $\bar{V}$}{0};
      \draw[very thick,draw=red] (1.2,0)--(2.03,0);
    \DTensor{2,-1}{1.}{.6}{\small $\rho_{ss}$}{1};
          \draw[very thick,draw=red] (1.2,-2)--(2.03,-2);
        \draw[very thick,draw=red] (-2,0)--(-1.2,0);
        \draw[very thick,draw=red] (-2,-2)--(-1.2,-2);
        \draw[very thick,draw=red] (-2,-2)--(-2,0);
\end{tikzpicture} 
\quad
\end{align} 

\begin{align}\label{eq:betasnormal}
  \beta_\tau := \quad
\begin{tikzpicture}[scale=0.5,baseline={([yshift=-0.65ex] current bounding box.center) }]
      \GTensor{-5,0}{1.2}{.6}{\small $\dot{V}$}{7};
      \GTensor{-5,-2}{1.2}{.6}{\small $\bar{V}$}{0};
        \draw[very thick,draw=red] (-5-1.15,-2)--(-5-1.15,0);
        \FUnitary{-3.25,-2.25}{1}{.6}{\small $T_V^\tau$}{5};
      \GTensor{-1.45,0}{1.2}{.6}{\small ${V}$}{7};
      \GTensor{-1.45,-2}{1.2}{.6}{\small $\dot{\bar{V}}$}{0};
          \DTensor{.5,-1}{1.}{.6}{\small $\rho_{ss}$}{1};
          \draw[very thick,draw=red] (-0.5,-2)--(0.53,-2);
          \draw[very thick,draw=red] (-0.5,0)--(0.53,0);
      \end{tikzpicture}
    \quad \forall \tau \geq 0
\end{align}

\noindent as a function of the parametrized isometry $V_\theta$, its derivative $\dot{V}_\theta \equiv \partial_\theta V_\theta$, the transfer matrix $T_V$ along with the steady state $T_V|\rho_{ss}\rrangle = |\rho_{ss}\rrangle$. 
Deriving this equation is a standard exercise in TI-MPS methods, see Sec.~\ref{sec:qfinormalproof} of SM \cite{SM}. 
The channel (by assumption) relaxes any state (in the virtual space) to the steady state on some timescale $\tau^* \sim 1/\Delta$. When $T \gg \tau^*$, one can classify the terms appearing in the QFI into two types: (i)~$O(T)$ ``bulk'' terms, in which the derivatives act at times $\gg \tau^*$ from either the beginning or the end of the process; and (ii)~$O(1)$ ``boundary'' terms, where the derivatives act near the ends of the MPS. Terms of type~(ii) contribute at most $O(\tau^* (\log{D})^k)$ to the QFI, and are therefore subleading as $T \to \infty$. The exponent \(k\) depends on the specific form of parameter encoding—typically \(k = 2\) for single-body Hamiltonian encoding.

For the same reason, the leading-order QFI of the full system coincides with that of the radiation subsystem in the large-\(T\) limit: the extensive (i.e., \(O(T)\)) terms in the QFI are insensitive to the boundary tensor. This can be shown using the purification method of evaluating mixed state QFI [see Lemma \ref{lem:mixedqfibound} in \cite{SM}]. Specifically, we have for large $T$,
\begin{equation}
  F_R(T) \gtrsim F_{SR}(T) - c \tau^*
\end{equation}
for some $c = O((\log{D})^k)$. In fact, if the initial state is Haar random, one can show a similar result for the initial state dependence [see Sec.~\ref{sec:jointQFI} of \cite{SM}]: the effect of an initial state decays asymptotically as $O((\log{D})^ke^{-\Delta t})$. These observations ensure that the asymptotic scaling of the QFI with time is governed solely by the channel dynamics.


We now take the late-time limit of Eq.~\eqref{eq:QFI_TIMPS}. It is helpful to define the quantity $\beta_\infty \equiv \lim_{\tau \to \infty} \beta_\tau$. Uniqueness of the steady state implies that
\begin{align} \label{eq:betarelaxation}
  \beta_\infty = \quad
\begin{tikzpicture}[scale=0.5,baseline={([yshift=-0.65ex] current bounding box.center) }]
      \GTensor{-5,0}{1.2}{.6}{\small $\dot{V}$}{7};
      \GTensor{-5,-2}{1.2}{.6}{\small $\bar{V}$}{0};
        \draw[very thick,draw=red] (-5-1.15,-2)--(-5-1.15,0);
          \draw[very thick,draw=red] (-3.8,-2)--(-3,-2);
          \draw[very thick,draw=red] (-3.8,0)--(-3,0);
          \DTensor{-3.03,-1}{1.}{.6}{\small $\rho_{ss}$}{1};
        \GTensor{-0.5,0}{1.2}{.6}{\small ${V}$}{7};
      \GTensor{-0.5,-2}{1.2}{.6}{\small $\dot{\bar{V}}$}{0};
        \draw[very thick,draw=red] (-.5-1.15,-2)--(-.5-1.15,0);
          \DTensor{1.5,-1}{1.}{.6}{\small $\rho_{ss}$}{1};
          \draw[very thick,draw=red] (0.5,-2)--(1.5,-2);
          \draw[very thick,draw=red] (0.5,0)--(1.5,0);
      \end{tikzpicture}
    \quad  = |\gamma|^2.
\end{align} 
By writing $\beta_\tau = (\beta_\tau - \beta_\infty) + \beta_\infty$, we can express Eq.~\eqref{eq:QFI_TIMPS} as $F = 4T (f_0 + f_c)$, where 
\begin{equation}\label{eq:decomp}
f_0 = (\alpha - |\gamma|^2), \quad 
f_c = 2\sum_{\tau = 0}^{T-2} (\beta_\tau - \beta_\infty)/T.
\end{equation}
Here $f_0$ describes the sensitivity to $\theta$ of radiation measurements at each time (which can be thought of as adding up independently across time), while $f_c$ captures the effects of temporal correlations in the radiation. In general, $f_c$ will depend on the entire spectrum of the channel; when the correlation time $\tau^*$ is long, $f_c \sim \tau^*$ [see Sec.~\ref{subsec:asymprate} of \cite{SM} for further details on the asymptotic rate] and dominates the QFI~\cref{eq:QFI_TIMPS}.

It is instructive to apply this formalism to the general problem of estimating a parameter encoded in the Hamiltonian, $H_\theta \equiv \theta H_s + H_c$, with $H_s$ and $H_c$ denoting the sensing and control Hamiltonians respectively. We assume that the dissipative couplings do not depend on $\theta$. Then $\alpha = \langle H_s^2\rangle_{\rho_{ss}}dt^2$, and $\gamma=-\iota\langle H_s\rangle_{\rho_{ss}}dt + O(dt^2)$. Thus  $f_0 = \var_{\rho_{ss}}[H_s]dt$---i.e., the variance of the sensing Hamiltonian in the steady state---reproducing a standard result in metrology. If the sensing Hamiltonian consists of one-body terms, $\var_{\rho_{ss}}[H_s] = O((\log{D})^2)$, as the square of spectral width upper bounds the variance~\footnote{For any $H_s$ and any state $\rho$, $\var_{\rho}[H_s] \leq (\lambda^s_{\max}-\lambda^s_{\min})^2/4$ where $\lambda^s_{\max,\min}$ are the maximum and minimum eigenvalues of $H_s$. For a $N$-particle single-body Hamiltonian, $\lambda^s_{\max}-\lambda^s_{\min} = O(N)$.}. The term $f_c$ is nonuniversal and depends on the details of the system-bath coupling. However, $f_c = O((\log{D})^2)$ as well for the case above.  

We briefly comment on the results derived with the MPS representation in context to the widely used formula previously as derived in Ref.~\cite{molmer_14}. We find that the asymptotic rate, as derived by a perturbation theory in that formula is identical to the rate obtained from the MPS calculations upon taking the continuous time limit. The MPS calculation, through a trotterization of the continuous approach, furnishes a much simpler analytical expression with evident timescales, and easily extended to the long-range case [see Sec.~\ref{sec:molmerformula} of \cite{SM}].

The continuous time behavior can be recovered with the \cref{eq:QFI_TIMPS} by taking the limit \(dt \to 0\). In the case of Hamiltonian sensing, the contribution from \(f_0\) vanishes in this limit: both \(\alpha\) and \(|\gamma|^2\) scale as \(O(dt^2)\), implying that \(T f_0 \sim dt \to 0\). In contrast, the spectral sum appearing in \(f_c\) exhibits a more subtle dependence on \(dt\), and we find that \(T f_c\) remains finite as \(dt \to 0\). For more general parameter encodings that include jump (dissipative) terms, the \(f_0\) contribution also survives in the continuous-time limit [see Sec.~\ref{subsec:contlimit} of \cite{SM} for details]. We expect the trotterized contribution to be especially relevant in practical scenarios involving time-binned photon monitoring.

\begin{figure}[t]   
    \centering
    \includegraphics[width=1.0\linewidth]{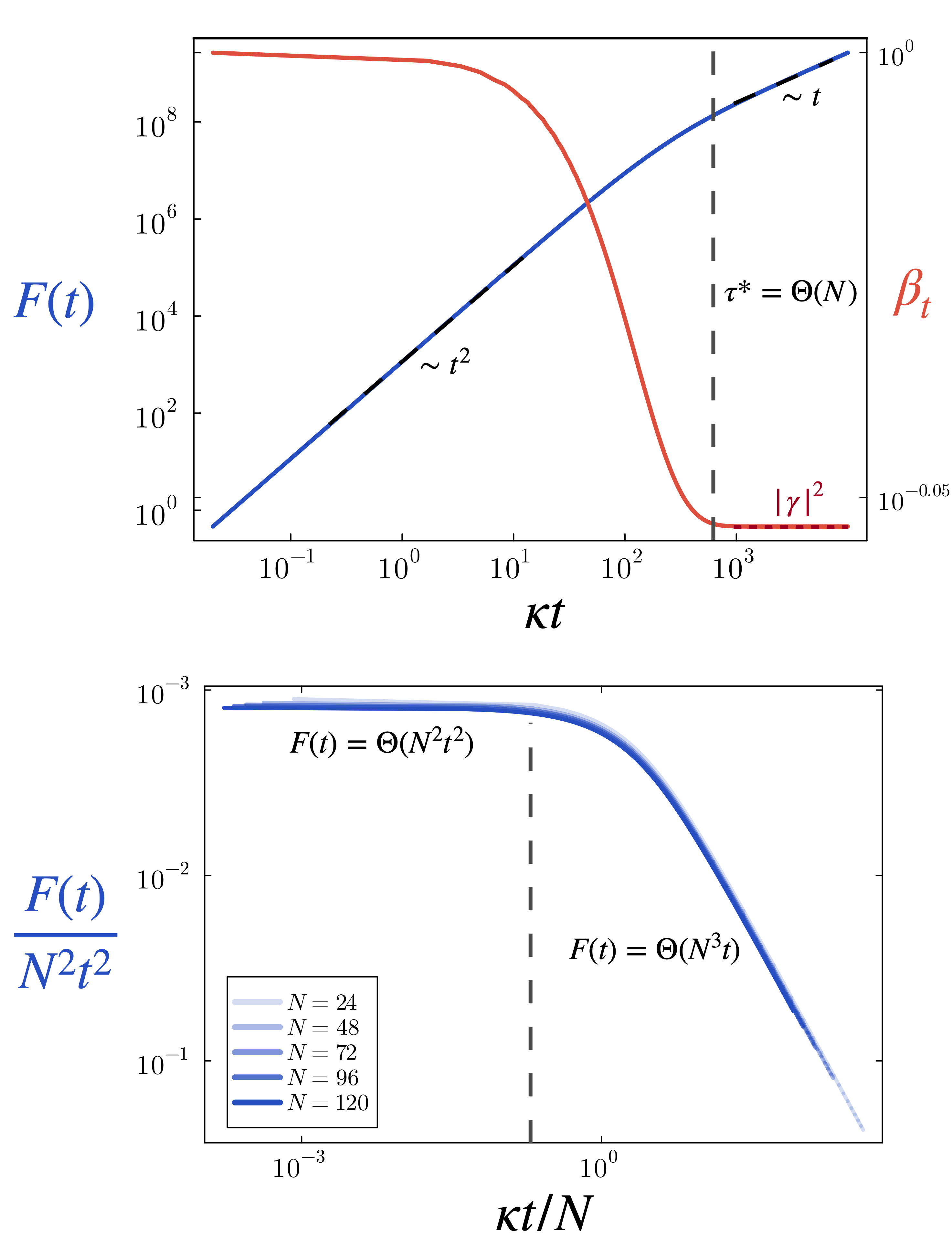}  
    \caption{Continuous sensing with the time crystal. (a) Dynamics of the QFI $F(t)$ computed from \cref{eq:QFI_TIMPS} along with $\beta_t$ for $N=120$ spins at $\omega/\kappa=10$. (b) Finite size scaling of the QFI at $\omega/\kappa = 10$.}
    \label{fig:fig2}  
\end{figure}

\textit{Radiation QFI of the continuous time crystal.---}We apply the formalism discussed to metrology with the continuous time crystal \cite{Iemini_BTC}, an dissipative phase of matter shown to spontaneously break time translation symmetry. This phase has been studied as a metrological resource \cite{Lesanovsky_BTC,Metrology_2_BTC,Metrology_BTC,jumptraj_Lesanovsky}, particularly in the case of continuous sensing \cite{Lesanovsky_BTC}. The model is specified by a Lindblad master equation with collective driving $H=\omega \hat{S}^x$ and  dissipation given by the single jump operator $L=\sqrt{\kappa/S}\hat{S}^-$ where $\hat{S}^{\alpha} = \sum_{i=1}^N \sigma_j^\alpha/2$ and $S=N/2$ is the total spin. This model shows a phase transition at $\omega/\kappa=1$ with the time crystal phase appearing for $\omega/\kappa > 1$. For a finite size system, the gap of the Lindbladian $\Delta$ decreases inversely with the system size $\Delta \sim 1/N$ in the time crystal phase \cite{Iemini_BTC}. Moreover, the steady state is $\rho_{ss} \propto \eta \eta^\dagger$ with $\eta = \sum_{j=0}^N(2\iota\kappa\hat{S}^-/\omega N)^j$ \cite{puri1979exact,GMC_BTC}. The metrological task is to estimate $\omega$.

The characteristic relaxation timescale of the dynamics is set by $\tau^* \sim 1/\Delta \sim N$; accordingly, the QFI has two regimes, $t \ll N$ and $t \gg N$. We begin by considering times $t \gg N$, where we can use the asymptotic expression \cref{eq:decomp}. 
First, we note that the variance of $H$ in the steady state $\rho_{\mathrm{ss}}$ scales as $N^2$, so all individual terms in \cref{eq:decomp} scale as $N^2$. Moreover, the QFI scales linearly with $t$, since the channel has a gap for any finite $N$. 
Thus, the ``static'' part of the QFI, $f_0$, scales as $N^2$. However, since $\beta_\tau - \beta_\infty \sim N^2 \exp(-\tau/N)$, the temporally correlated part of the QFI, $f_c$, in fact scales as $N^3 t$, with an additional factor of $N$ coming from the correlation time. Putting these together, we have $F(t) = (c_1N^2 + c_2 N^3)t$ for some constants with $c_1 = O(dt)$ and $c_2 = O(1)$.  Note that the $N^3$ scaling does not violate a fundamental quadratic bound on QFI, as this is only valid for $T > N$. At times $t \ll N$, instead, the QFI scales as $N^2 t^2$, saturating the Heisenberg limit.

The numerical results for the radiation QFI of $N=120$ spins deep in the time crystal phase at $\omega/\kappa=10$ are shown in Fig.~\ref{fig:fig2}(a). We find two regimes, with quadratic and linear QFI respectively, separated by a transition time $\tau^*$. As expected from \cref{eq:betarelaxation}, the second regimes kicks in for $t\gtrsim \tau^*$ when $\beta_t$ relaxes to its steadystate value, $\beta_t \approx |\gamma|^2$. Furthermore, we have $\tau^* = 1/\Delta = \Theta(N)$. The crossover physics is akin to that outlined in \cite{theodoros_criticality} for the setting of continuous sensing at dissipative criticality.

We show a scaling collapse of the radiation QFI by plotting $F(t)/(N^2t^2)$ as a function of rescaled time $\kappa t/N$ in Fig.~\ref{fig:fig2}(b). The perfect collapse confirms that the time crystal shows optimal scaling in both particle number and time $F(t) \sim N^2t^2$ for $t \lesssim \tau^*$, and crosses over to $F(t) \sim N^3t$ for $t \gtrsim \tau^*$ with $\tau^* \propto N$, as also verified by a scaling collapse of $(\beta_t - |\gamma|^2)/N^2 \sim \exp(-t/N)$ (not shown). We comment on the relation of our result to that of Ref.~ \cite{Lesanovsky_BTC}. The rescaling of the dissipation in Ref.~\cite{Lesanovsky_BTC} effectively renders the gap $O(1)$ and the critical point dependent on the system size, thus the QFI crossover is instead from $N^2t^2$ to $N^2 t$ at a $\tau^* = O(1)$.

\textit{Relating to `usual' metrology---}We previously established that a radiation QFI scaling as $\Theta(N^2 T \tau^*)$ is achievable, illustrated with the time crystal example. However, a crucial practical limitation arises: achieving this scaling requires monitoring over timescales $T > \tau^*$, which may not be feasible if the correlation time diverges with system size as $\tau^* \sim N^z$, as is the case in the BTC example with $z = 1$. Moreover, we established a transient quadratic scaling for $\tau \lesssim \tau^*$. This motivates the need to consider protocols that can extract the transient quadratic-in-time scaling of the QFI from the radiation alone, without requiring long-time observations. Drawing inspiration from the proof techniques developed in the theory of noisy quantum metrology~\cite{normbounds1_fujiwara2008fibre,normbounds2_demkowicz2012elusive,normbounds3_demkowicz2014using,normbounds4_demkowicz2017adaptive,normbounds5_zhou2018achieving,normbounds5_sekatski2017quantum,normbounds6_kurdzialek2024universalcombs,normbounds7_kurdzialek2023using}, we derive a sufficient condition under which such scaling can be observed. This approach also provides a bridge between our results and the conventional framework of quantum metrology.

The key algebraic condition for successful noisy metrology can be stated as follows: if the Lindblad span is defined as \(\mathsf{L} = \text{span}\left\{\mathbbm{1}, L_m, L_m^\dagger, L_m^\dagger L_n \right\}_{m,n},\) then high-precision noisy metrology is \textit{not} possible when $H_s \in \mathsf{L}$—a condition referred to as Hamiltonian in Lindblad Span (HLS). As we now demonstrate, this condition can also be leveraged in the context of continuous sensing through monitoring the emitted radiation. In particular, we show that if the HLS condition holds, the corresponding tensor of the output radiation takes the following form [Lemma~\ref{lem:HKSImplies} in \cite{SM}]:

\begin{align} \label{eq:mainVofHKS}
    \quad
    \begin{tikzpicture}[scale=0.5,baseline={([yshift=-0.65ex] current bounding box.center) }]
        \GTensor{0,0}{1.2}{.6}{\small $V$}{7};
  \end{tikzpicture} \quad \equiv 
 \quad
     \begin{tikzpicture}[scale=0.5,baseline={([yshift=-0.65ex] current bounding box.center) }]
         \GTensor{0,0}{1.2}{.6}{\small $W$}{7};
           \RTensor{0,-2}{1.}{.6}{\small $u_\theta$}{1};
   \end{tikzpicture}
       \quad 
       \text{ with }
       \quad 
       \begin{tikzpicture}[scale=0.5,baseline={([yshift=-0.65ex] current bounding box.center) }]
        \GTensor{0,0}{1.2}{.6}{\small $\dot{W}$}{7};
        \GTensor{0,-2}{1.2}{.6}{\small ${\bar{W}}$}{0};
        \draw[very thick,draw=red] (-1.2,-2)--(-1.2,0);
    \end{tikzpicture} \quad 
    = \quad 0 
   \end{align} 
with $u_\theta = e^{\iota\theta h}$ for some Hermitian $h\in \mathbb{C}^{d\times d}$. That is, $\ket{\psi_V} = U_\theta \ket{\psi_W}$ with $U_\theta = \exp{\iota\theta H}$, with a radiation-only single-body Hamiltonian $H = \sum_\tau h_\tau$. The condition on $W$ ensures an exclusively linear QFI, where $\beta^W_\tau = 0~\forall \tau$ and $\gamma^W = 0$, leading to $F[\psi_W] = 4T\cdot\alpha^W$. Then, we have the following result for the QFI of $V$ [Lemma~\ref{lem:HKSQFI} in \cite{SM}]:
\begin{equation}
    F[\psi_V] = 4\var_W[H] + 4T\alpha^W + 8\text{Im}\expect{\psi_W}{H}{\dot{\psi}_W}
\end{equation}
The key insight is that the quadratic contribution to the QFI, $F[\psi_V]$, arises exclusively from the Hamiltonian variance encoded in the radiation. Specifically, the term $\beta^V_\tau$ consists solely of covariance contributions [see Sec.~\ref{subsec:continuousHKS} of SM \cite{SM} for details]. For gapped dynamics, it can be shown that the mixed state of the radiation exhibits the characteristic quadratic scaling in time even in the transient regime $T < \tau^*$, up to sub-linear corrections [see Lemma~\ref{lem:HKSmixedqfibound} in \cite{SM}]. Notably, the HLS condition is naturally satisfied in a broad class of systems relevant for continuous sensing~\cite{Lesanovsky_BTC,theodoros_criticality}. To see this, observe that a generic linear combination of ladder-type jump operators typically generates the collective $\mathfrak{su}(2)$ spin algebra within the span $\mathsf{L}$.

\textit{QFI of long-range TI-MPS.---}
We now turn to channels with multiple steady states. These are nongeneric in finite-size systems, but might be stable in the presence of symmetries. When there are multiple steady states, 
the spectrum of the quantum channel can be divided into a long-lived part that lies on the unit circle [or equivalently, in the Lindbladian case, consists of eigenvalues that have real part $0$], and a dissipative part. If we assume that each long-lived eigenvalue has multiplicity $\mu$, we can express the transfer matrix as
%
%
\begin{equation}
     T_V = \sum_{\Delta,\mu}e^{\iota \Delta}|\Psi_{\Delta\mu} \rrangle \llangle J^{\Delta\mu}| + \tilde{T}_V
\end{equation}
with $|\Lambda(\tilde{T}_V)| < 1$ and a biorthogonal system formed by the eigenvectors, $\llangle J^{\Delta\mu}|\Psi_{\Delta' \nu}\rrangle = \delta^\Delta_{\Delta'} \delta^\mu_\nu$. Here $\Psi_{\Delta\mu}$ forms a basis for the right eigenmatrices and $J^{\Delta\mu}$ are the set of conserved ($\Delta = 0$) or oscillating ($\Delta \neq 0$) quantities of the Liouvillian, as $\text{Tr}(J^{\Delta\mu\dagger}\rho(t)) =  e^{\iota\Delta t}\text{Tr}(J^{\Delta\mu\dagger}\rho_{in})$. In contrast to the injective case, the memory of the initial state is retained in the dynamics through overlaps between the initial state and non-trivial conserved quantities $c_{\Delta\mu} := \llangle J^{\Delta\mu}|\rho_{in}\rrangle$, with the asymptotic state written as \cite{albert2016geometry}
\begin{eqnarray}
    |\rho_{\infty}(t)\rrangle = \sum_{\Delta,\mu} c_{\Delta,\mu} e^{\iota \Delta t}|\Psi_{\Delta\mu}\rrangle
\end{eqnarray}
Our result~\eqref{eq:QFI_TIMPS} extends to this noninjective case, provided the variables $\alpha, \{\beta_\tau\},\gamma$ are redefined as follows, to account for the initial state-dependence:
\begin{align} \label{eq:longrangealpha}
  \alpha \equiv \sum_{\Delta,\mu} c_{\Delta\mu} \quad
\begin{tikzpicture}[scale=0.5,baseline={([yshift=-0.65ex] current bounding box.center) }]
      \GTensor{0,0}{1.2}{.6}{\small $\dot{V}$}{7};
      \GTensor{0,-2}{1.2}{.6}{\small $\dot{\bar{V}}$}{0};
      \draw[very thick,draw=red] (1.2,0)--(2.03,0);
    \DTensor{2,-1}{1.}{.6}{\small $\Psi_{\Delta\mu}$}{1};
          \draw[very thick,draw=red] (1.2,-2)--(2.03,-2);
        \draw[very thick,draw=red] (-2,0)--(-1.2,0);
        \draw[very thick,draw=red] (-2,-2)--(-1.2,-2);
    \UTensor{-2,-1}{1.}{.6}{\small $J^{\Delta\mu}$}{1};
\end{tikzpicture}
    \quad 
\end{align} 

\begin{align} \label{eq:longrangebeta}
  \beta_\tau \equiv \sum_{\Delta,\mu} c_{\Delta\mu} \quad
\begin{tikzpicture}[scale=0.5,baseline={([yshift=-0.65ex] current bounding box.center) }]
      \GTensor{-5,0}{1.2}{.6}{\small $\dot{V}$}{7};
      \GTensor{-5,-2}{1.2}{.6}{\small $\bar{V}$}{0};
          \UTensor{-7,-1}{1.}{.6}{\small $J^{\Delta\mu}$}{1};
    \FUnitary{-3.25,-2.25}{1}{.6}{\small $T_V^\tau$}{5};
    \draw[very thick,draw=red] (-6,-2)--(-7,-2);
        \draw[very thick,draw=red] (-6,0)--(-7,0);
      \GTensor{-1.45,0}{1.2}{.6}{\small ${V}$}{7};
      \GTensor{-1.45,-2}{1.2}{.6}{\small $\dot{\bar{V}}$}{0};
          \DTensor{.5,-1}{1.}{.6}{\small $\Psi_{\Delta\mu}$}{1};
          \draw[very thick,draw=red] (-0.5,-2)--(0.53,-2);
          \draw[very thick,draw=red] (-0.5,0)--(0.53,0);
      \end{tikzpicture}
    \quad
\end{align} 

\begin{align}\label{eq:longrangegamma}
  \gamma \equiv \sum_{\Delta,\mu} c_{\Delta\mu}\quad
\begin{tikzpicture}[scale=0.5,baseline={([yshift=-0.65ex] current bounding box.center) }]
      \GTensor{0,0}{1.2}{.6}{\small $\dot{V}$}{7};
      \GTensor{0,-2}{1.2}{.6}{\small $\bar{V}$}{0};
      \draw[very thick,draw=red] (1.2,0)--(2.03,0);
    \DTensor{2,-1}{1.}{.6}{\small $\Psi_{\Delta\mu}$}{1};
          \draw[very thick,draw=red] (1.2,-2)--(2.03,-2);
        \draw[very thick,draw=red] (-2,0)--(-1.2,0);
        \draw[very thick,draw=red] (-2,-2)--(-1.2,-2);
    \UTensor{-2,-1}{1.}{.6}{\small $J^{\Delta\mu}$}{1};
\end{tikzpicture}
    \quad 
\end{align}

%
\noindent Crucially, $\beta_\infty$ no longer coincides with $|\gamma|^2$: there are two sums over $c_{\Delta \mu}$ in \cref{eq:longrangebeta} and only one in \cref{eq:longrangegamma} This is as one might expect: correlation length is infinite, so in principle $T^2$ scaling of the full QFI is attainable. By the same token, however, the full QFI cannot be separated into a ``boundary'' part that depends on $\mathcal{S}$ and a ``bulk'' part that depends purely on the radiation. 

We now discuss these subtleties in the relatively simple context of an open-system process that generates a GHZ state on the radiation. 
The procedure for sequentially generating (``radiating'') a GHZ state is well known: one initializes the system $\mathcal{S}$ in the $\ket{+}$ state, and at each time step $t$ one applies a CNOT gate from it to the radiation qubit in $\mathcal{H}_t$. After $T$ time steps one has a GHZ state on $T+1$ qubits (one belonging to the system). To convert this protocol into a sensing scheme, one can alternate the CNOT gates with evolution for a time $\delta$ under the sensing Hamiltonian $H_s = \theta Z$, acting on $\mathcal{S}$. For this simple setup one can explicitly write the state $\ket{\Psi; t} = \frac{1}{\sqrt{2}} \left(\ket{0}^{\otimes T + 1} + e^{i T \delta \theta} \ket{1}^{\otimes T + 1}\right)$. Given access to the full state, one can evidently sense $\theta$ to precision $1/T$, so the full QFI scales as $T^2$. It is equally evident that this sensitivity requires access to \emph{every} qubit: the radiation state, achieved by tracing out $\mathcal{S}$, is insensitive to $\theta$. 
As a simple extension of this idea, we can construct a protocol with QFI $\sim N^2 t^2$, by starting with a GHZ state and applying a channel that dephases every site in the computational basis.

Consider a $N-$qubit system driven by $H_s = \omega \sum_i h_i$ with $h_i$ denoting a bounded single-body generator $h$ at qubit $i$, admitting the spectral decomposition $h\ket{\pm E} = \pm E \ket{\pm E}$. We monitor the $\ket{-E}$ population at each site with, $L_i = \sqrt{\kappa}\ket{-E}\bra{-E}$. If we denote the eigenbasis of $H_s\ket{E_i} = E_i\ket{E_i}$ with $-NE \leq E_i \leq NE$, then we see that $\mathcal{L}(\ket{E_i}\bra{E_i}) = 0 ~~\forall i$. The transfer matrix admits the decomposition $T_V = \sum_i |E_i\rrangle \llangle J^i| + \tilde{T}_V$ where $|E_i\rrangle \equiv \ket{E_i}\otimes\ket{E_i}$. Starting in the GHZ state $\ket{\psi} = (\ket{E_M} + \ket{E_m})/\sqrt{2}$ with $\ket{E_{M/m}} = \ket{\pm E}^{\otimes N}$ we see that $\alpha = (E_M^2 + E_m^2)/2$, $\beta_\tau = (E_M^2 + E_m^2)/2$, and $|\gamma|^2 = [(E_M + E_m)/2]^2$. We conclude that $F(t) = (E_M-E_m)^2t^2$. Since $|E_{M/m}| = N|E|$, one has that $F(t) = \Omega(N^2t^2)$.

\emph{Unentangled measurements}.---The QFI quantities we have discussed so far are properties of the emitted \emph{quantum} state. To measure these, in principle, one needs to perform entangling gates between qubits emitted at different times. In practice, the radiation is typically in the form of photons, which must be measured immediately. Therefore, one has access not to the radiation state but to the dephased state $\rho_D$. These instantaneous measurements allow one to reconstruct arbitrary product observables (e.g., Pauli strings), but not entangled observables. It is natural to ask under what conditions an enhanced QFI can be observed through classical temporal correlations in the outcomes of product measurements.  

For the GHZ state described above, this is possible. The optimal observable to detect the phase of $\ket{\Psi,t}$ is $X_s \prod_{t=1}^T X_t$. To reconstruct this observable it suffices to measure each radiated qubit in the $X$ basis as it is emitted. We now observe that this is a consequence of the symmetry structure of the GHZ state, rather than a fine-tuned feature. The transfer matrix for the GHZ state has a ``strong'' conservation law: each Kraus operator commutes with the operator $Z$ on the bond space. Moreover, the radiation from states with distinct eigenvalues of $Z$ records the breaking of the associated $\mathbb{Z}_2$ symmetry: e.g., the $\ket{0}$ state radiates $\ket{0}$'s and vice versa. Together, these properties can be combined into the tensor relations:
\begin{align}\label{eq:Xsymmetry}
  \begin{tikzpicture}[scale=0.5,baseline={([yshift=-0.65ex] current bounding box.center)}]
  \RTensor{0,0}{1.}{.6}{\small $X$}{1};
  \GTensor{0,2}{1.2}{.6}{\small $G$}{7};
  \RTensor{-2,2}{1.}{.6}{\small $X$}{3};
  \RTensor{2,2}{1.}{.6}{\small $X$}{3};
\end{tikzpicture} \quad = \quad 
\begin{tikzpicture}[scale=0.5,baseline={([yshift=-0.65ex] current bounding box.center)}]
  \GTensor{0,2}{1.2}{.6}{\small $G$}{7};
\end{tikzpicture}
\end{align}

\begin{align}\label{eq:phasesymmetry}
  \begin{tikzpicture}[scale=0.5,baseline={([yshift=-0.65ex] current bounding box.center)}]
  \GTensor{0,2}{1.2}{.6}{\small $G$}{7};
  \RTensor{2,2}{1.}{.6}{\small $e^{\iota \theta Z}$}{3};
\end{tikzpicture} \quad = \quad 
\begin{tikzpicture}[scale=0.5,baseline={([yshift=-0.65ex] current bounding box.center)}]
  \GTensor{0,2}{1.2}{.6}{\small $G$}{7};
  \RTensor{-2,2}{1.}{.6}{\small $e^{\iota \theta Z}$}{3};
\end{tikzpicture}
\end{align}
\cref{eq:phasesymmetry} allows one to commute the perturbations $e^{i \theta Z}$ to the initial time, and \cref{eq:Xsymmetry} ``pushes'' a product measurement of $X$ into the virtual space. These relations can be combined to establish the $O(T^2)$ scaling of the full QFI. Importantly, these relations are not fine-tuned properties of the GHZ state; instead, they are generic consequences of the symmetry that gives rise to multiple steady states. In general, the quantity one is sensing might not be the conserved operator; however, provided it has nonzero projection onto that operator, $O(T^2)$ scaling of the full QFI is attainable with product measurements.

\textit{Discussion.---}
In this work we derived compact expressions for the QFI of an open quantum system together with the radiation it emits. We demonstrated that this QFI is enhanced by temporal correlations in the emitted radiation. At least in some specific instances, like symmetry-breaking, these temporal correlations are captured by product operators, and can therefore be reconstructed even if one is constrained to measure each qudit of radiation as it is emitted. Ideally, to guide near-term experiments, one would want an expression for the classical Fisher information of the measurement record from measuring each radiated qudit in an arbitrary basis. Our results upper-bound this quantity, and show that for some states the upper bound is saturated. However, extending the MPS formalism to compute the mixed-state QFI of the dephased state $\rho_D$ (which captures the classical measurement record) requires a further technical advance beyond our results. Whether adaptive measurements, based on classical feedforward schemes~\cite{LOCC_zhou,plenio_decoder,Lesanovsky_BTC}, can always achieve the optimal scaling of QFI is an interesting question for future work. Another interesting question is the robustness of radiation sensing against noise or loss of some fraction of the radiated qubits: naively, any gain from correlations should survive until the loss rate is comparable to the inverse correlation time $1/\tau^*$, but this remains to be established. Finally, metrology might provide a useful framework for treating the stability of pure- and mixed-state phases in a unified way. Our results relate the sensitivity of a steady state on $\mathcal{S}$ in $d$ dimensions, against perturbations of its Lindbladian, to that of a ground state on $\mathcal{S} \otimes \prod_t \mathcal{H}_t$ in $d+1$ dimensions, against perturbations of its parent Hamiltonian---thus potentially allowing one to relate stability results in these two distinct contexts.

\textit{Acknowledgements.---}The authors are grateful to Theodoros Ilias, Benjamin Lev, Hideo Mabuchi, Daniel Malz, and Sai Vinjanampathy for helpful discussions and collaborations on related topics.

\textit{Note added}.---While this work was nearing completion, related papers on tensor-network methods for continuous sensing were posted~\cite{yang2025quantumcramerraoprecisionlimit,khan2025tensornetworkapproachsensing,abbasgholinejad2025theory}. These works use the tensor-network technology to evaluate the dephased QFI numerically.

\bibliographystyle{unsrt}
\bibliography{apssamp}

\onecolumngrid
\renewcommand{\figurename}{Supplementary Figure}
\setcounter{figure}{0}    

\newpage
\section*{Supplementary Material}
\begin{enumerate}[label=\textcolor{red}{\Roman*.}]
    \item \hyperref[sec:TIMPS]{Translationally invariant matrix product states}  
    \item \hyperref[sec:QFI]{Quantum Fisher Information} 
    \begin{enumerate}
      \item[\textcolor{red}{(a)}] \hyperref[subsec:qfidef]{Definitions}
      \item[\textcolor{red}{(b)}] \hyperref[subsec:qfiprop]{Properties}
    \end{enumerate}
    \item \hyperref[sec:hierarcy]{Hierarchy of QFIs in continuous sensing}
    \item \hyperref[sec:qfinormalproof]{QFI of gapped dynamics}
    \begin{enumerate}
      \item[\textcolor{red}{(a)}] \hyperref[subsec:proofmain]{Proof of main result}
      \item[\textcolor{red}{(b)}] \hyperref[subsec:asymprate]{Asymptotic Rate}
      \item[\textcolor{red}{(c)}] \hyperref[subsec:contlimit]{Taking the continuous time limit}
      \item[\textcolor{red}{(d)}] \hyperref[subsec:lowerboundonmixed]{Lower bound on mixed radiation QFI} 
    \end{enumerate}
    \item \hyperref[sec:jointQFI]{Haar averaged QFI}
    \begin{enumerate}
      \item[\textcolor{red}{(a)}] \hyperref[subsec:haarunique]{Unique fixed point}
      \item[\textcolor{red}{(b)}] \hyperref[subsec:haargeneric]{Non-normal MPS}
    \end{enumerate}
    \item \hyperref[sec:statbasisprops]{Properties of the steadystate spectrum}
    \item \hyperref[sec:molmerformula]{Relation to the formula in previous works}
    \item \hyperref[sec:normboundssec]{Noisy Metrology}
    \begin{enumerate}
      \item[\textcolor{red}{(a)}] \hyperref[subsec:continuousHKS]{Continuous sensing under `Hamiltonian in Kraus Span (HKS)'}
    \end{enumerate}
\end{enumerate} 
\section{Translationally-invariant matrix product states}\label{sec:TIMPS}
We have, a translationally-invariant matrix product state (TI-MPS) $\ket{V}$ specified by the tensor $V\in \C^{D\times D\times d}$ depicted as in the main text,
\begin{align}
    \ket{V} =  \dotso \quad
 \begin{tikzpicture}[scale=0.5,baseline={([yshift=-0.65ex] current bounding box.center) }]
         \GTensor{0,0}{1.2}{.6}{\small $V$}{7};
       \GTensor{1*\singledx,0}{1.2}{.6}{\small $V$}{7};
       \GTensor{2*\singledx,0}{1.2}{.6}{\small $V$}{7};
       \GTensor{3*\singledx,0}{1.2}{.6}{\small $V$}{7};
       \GTensor{4*\singledx,0}{1.2}{.6}{\small $V$}{7};
 \end{tikzpicture}
     \quad \dotso
 \end{align}
As in the main text, the isometry is parametrized, $V \equiv V_\theta$, and the parameter dependence is omitted for notational convenience. For an normal TI-MPS, the associated transfer matrix takes the following form
\begin{align}
  T_V := \quad
\begin{tikzpicture}[scale=0.5,baseline={([yshift=-0.65ex] current bounding box.center) }]
      \GTensor{0,0}{1.2}{.6}{\small $V$}{7};
      \GTensor{0,-2}{1.2}{.6}{\small $\bar{V}$}{0};
\end{tikzpicture}
\quad = \quad 
\begin{tikzpicture}[scale=0.5,baseline={([yshift=-0.65ex] current bounding box.center) }]
  \DTensor{2,-1}{1.}{.6}{\small $r$}{1};
\draw[very thick,draw=red] (2,0)--(.5,0);
\draw[very thick,draw=red] (2,-2)--(.5,-2);
\DTensor{4,-1}{1.}{.6}{\small $l$}{1};
\draw[very thick,draw=red] (4,0)--(5.5,0);
\draw[very thick,draw=red] (4,-2)--(5.5,-2);
\end{tikzpicture}  \quad +  \tilde{T}_V
\end{align} 
with $|\Lambda(\tilde{T}_V)| < 1$, and the left and right eigenvectors satisfy the following,
\begin{align}
  \begin{tikzpicture}[scale=0.5,baseline={([yshift=-0.65ex] current bounding box.center) }]
    \DTensor{-2,-1}{1.}{.6}{\small $l$}{1};
        \GTensor{0,0}{1.2}{.6}{\small $V$}{7};
        \GTensor{0,-2}{1.2}{.6}{\small $\bar{V}$}{0};
        \draw[very thick,draw=red] (-2,0)--(-1,0);
        \draw[very thick,draw=red] (-2,-2)--(-1,-2);
  \end{tikzpicture}
      \quad = \quad 
      \begin{tikzpicture}[scale=0.5,baseline={([yshift=-0.65ex] current bounding box.center) }]
        \DTensor{-2,-1}{1.}{.6}{\small $l$}{1};
            \draw[very thick,draw=red] (-2,0)--(-.5,0);
            \draw[very thick,draw=red] (-2,-2)--(-.5,-2);
      \end{tikzpicture}    
      \quad \text{ and } \quad 
      \begin{tikzpicture}[scale=0.5,baseline={([yshift=-0.65ex] current bounding box.center) }]
        \DTensor{2,-1}{1.}{.6}{\small $r$}{1};
            \GTensor{0,0}{1.2}{.6}{\small $V$}{7};
            \GTensor{0,-2}{1.2}{.6}{\small $\bar{V}$}{0};
            \draw[very thick,draw=red] (2,0)--(1,0);
            \draw[very thick,draw=red] (2,-2)--(1,-2);
      \end{tikzpicture}    
      \quad = \quad 
      \begin{tikzpicture}[scale=0.5,baseline={([yshift=-0.65ex] current bounding box.center) }]
        \DTensor{2,-1}{1.}{.6}{\small $r$}{1};
            \draw[very thick,draw=red] (2,0)--(.5,0);
            \draw[very thick,draw=red] (2,-2)--(.5,-2);
      \end{tikzpicture}    
  \end{align} 
and are normalized as follows,
\begin{align}
    \begin{tikzpicture}[scale=0.5,baseline={([yshift=-0.65ex] current bounding box.center) }]
        \DTensor{-2,-1}{1.}{.6}{\small $l$}{1};
        \draw[very thick,draw=red] (-2,0)--(.5,0);
        \draw[very thick,draw=red] (-2,-2)--(.5,-2);
        \DTensor{0.5,-1}{1.}{.6}{\small $r$}{1};
      \end{tikzpicture}    \quad = \quad 1.
\end{align}

For the case of $V = \sum_m K_m \otimes \ket{m}$ (with $K_m \equiv K_m^\theta$ encoding the parameter) the corresponding channel is $\mathcal{E}(\cdot) := \sum_m K_m(\cdot) K_m^\dagger$. If the channel admits a unique fixed point $\rho_{ss}$, one has that $|r\rrangle = |\rho_{ss}\rrangle$ and $\llangle l| = \llangle \mathbbm{1}|$. The linear maps for performing left and right gauge transformations, denoted $L$ and $R$ respectively, can be obtained from the left and right eigenvectors of the transfer matrix as $l = L^\dagger L$ and $r= R^\dagger R$. Specifically, one defines the left- and right-canonical tensors for $V^m \equiv \bra{m}V$ as \cite{Vanderstraeten_2019_tangentspace},
\begin{eqnarray}
    V^m_L := LV^mL^{-1},~V_R^m:= R^{-1}V^mR ~~\forall m\in[d].
\end{eqnarray}
Furthermore, the tensor at the center of orthogonality for the mixed gauge is given as $V^m_C = LV^mR$.

\section{Quantum Fisher Information} \label{sec:QFI}
\subsection{Definitions}\label{subsec:qfidef}
All Fisher information quantities can be described quite generally in the following setting. Consider a smooth encoding of a $k-$dimensional parameter $\theta \in \Theta \subseteq \mathbb{R}^k$ in a Hilbert space $\mathcal{H}$ with a smooth function $\rho_\theta:\Theta \to D(\mathcal{H})$. The task of quantum sensing requires a sharp curvature of the \textit{pullback} distance \cite{Fisher_NISQ_Meyer} $d(\theta,\theta') := d_f(\rho_\theta,\rho_{\theta'})$. The central idea is that one only has access to objects in the state space $D(\mathcal{H})$ encoding the parameter $\theta$, a sharp curvature in this space upon a perturbation $\theta \to \theta + \delta$ implies the possibility of high-precision sensing. We now formalize this intuition. Consider the Taylor expansion of the induced distance $d(\theta,\theta+\delta)$ for $||\delta||_2 \ll 1$
\begin{equation}
    d(\theta,\theta+\delta) = \frac{1}{2}\delta^\top M_\theta \delta + O(|\delta|^3).
\end{equation}
Terms up to the first order vanish because $d(\cdot,\cdot) \geq 0$ and $d(\rho,\rho) = 0$. The curvature is thus encoded in the real and symmetric \textit{information matrix} $M_\theta \in \mathbb{R}^{k \times k}$ defined as 
\begin{equation} \label{eq:infomatrix}
    M_\theta ^{ij}  \equiv \frac{\partial}{\partial\delta_i}\frac{\partial}{\partial\delta_j} d(\theta,\theta+\delta)~\mid _{\delta = 0}
\end{equation}
For instance, if the encoding is classical $\rho_\theta \equiv \sum_m p_\theta(m)\ket{m}\bra{m}$, and one employs the Kullback-Liebler divergence $d_{KL}(p||q) = \sum_m p_m \log{p_m/q_m}$ as the distance function, one obtains the classical Fisher information matrix 
\begin{equation} \label{eq:cfimatrix}
    I_\theta ^{ij} = \sum_m \frac{1}{p_\theta(m)} \frac{\partial p_\theta(m)}{\partial \theta_i}\frac{\partial p_\theta(m)}{\partial \theta_j} = \mathbb{E}_m\left[\partial_i \log{p_\theta(m)}\partial_j \log{p_\theta(m)}\right]
\end{equation}
The classical Fisher information is unique up to constant factors as long as the distance function chosen is monotonic, i.e., it satisfies $d(T[p]||T[q]) \leq d(p||q)$ under a stochastic matrix $T$. \cite{morozova1991markov}

The uniqueness is no longer true for the general quantum case, and we stick to a particular definition which agrees with standard distance notions in two cases. 

\begin{defn}[QFI]
    For a smooth encoding $\rho_\theta: \Theta \subseteq \mathbb{R}^k \to D(\mathcal{H})$, define the symmetric logarithmic derivative (SLD) operators for each $i \in [k]$ implicitly as follows, 
    \begin{equation}
        \frac{\partial}{\partial\theta_i}\rho_\theta =: \frac{1}{2}\{\rho_\theta, L_i\}
    \end{equation}
    Then, the QFI matrix is defined as, 
    \begin{equation}
        F_{\theta}^{ij}=\tr(\rho_\theta L_i L_j)  ~~ \forall (i,j) \in [k]^2
    \end{equation}
\end{defn}
For an arbitrary parametrized mixed state $\rho_\theta$, the quantum Fisher information matrix with respect to the parameter $\theta$ is defined as 

where $L_i$ are the set of symmetric logarithmic derivative (SLD) operator, defined implicitly as 

This definition coincides with the pullback metric induced by the Bures distance $d(\rho,\sigma) = 2(1-\tr[\sqrt{\sqrt{\rho}\sigma\sqrt{\rho}}])$ when the perturbation remains full rank, as well as under the  distance $d(\psi,\phi) = 2(1-|\inner{\psi}{\phi}|^2)$ for pure states. The SLD operators can be thought of as non-commutative generalizations of the derivatives in \cref{eq:cfimatrix}.

The reason for why the curvature \cref{eq:infomatrix} is an information-theoretic quantity is well encapsulated by the Cram\'{e}r-Rao theorem, establishing the clear operational relevance.

\begin{theorem}[\cite{Caves_QFI}]
    Consider the task of estimating $\theta\in\Theta\subseteq \mathbb{R}^k$ from the measurement channel $\Pi = \{\Pi_m\}_{m=1}^{p}$ employed on the smoothly parametrized state $\rho_\theta$. Denote $p_\theta(m) := \tr[\Pi_m \rho_\theta]$. Then, for any locally unbiased estimator operating on $s$ measurement shots $\hat{\theta}:[p]^s \mapsto \Theta$ with $\mathbb{E}[\hat{\theta}] = \theta$ one has that 
    \begin{eqnarray}
        (\Delta \hat{\theta})^2 \geq \frac{1}{s} I_\theta^{-1} \geq \frac{1}{s} F_\theta^{-1}
    \end{eqnarray}
    where $I_\theta$ and $F_\theta$ are the classical and quantum Fisher information matrices, $(\Delta\hat{\theta})^2$ is the covariance matrix with $[(\Delta \hat{\theta})^2]_{ij} := \mathbb{E}[\Delta\hat{\theta}_i\Delta\hat{\theta}_j]$ and $A \geq B$ iff $A-B$ is positive semidefinite.
\end{theorem}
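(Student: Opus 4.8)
The plan is to prove the stated chain $(\Delta\hat\theta)^2 \succeq \tfrac1s I_\theta^{-1} \succeq \tfrac1s F_\theta^{-1}$ by establishing its two links separately: (i) the purely classical multiparameter Cram\'er--Rao inequality for the outcome distribution $p_\theta(m)$, and (ii) the operator inequality $I_\theta \preceq F_\theta$, valid for every POVM $\{\Pi_m\}$, which upon inversion on the common support gives $F_\theta^{-1}\preceq I_\theta^{-1}$. Concatenating (i) and (ii) yields the claim. I will work with the symmetrized (real, symmetric) versions of both matrices, noting that $F_\theta^{ij}=\operatorname{Re}\tr(\rho_\theta L_iL_j)=\tfrac12\tr(\rho_\theta\{L_i,L_j\})$ is what enters the bound.

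\emph{Classical step.} For $s$ i.i.d.\ shots the outcome distribution is $p_\theta(\mathbf m)=\prod_{a=1}^s p_\theta(m_a)$ with score vector $\ell_i(\mathbf m)=\partial_i\log p_\theta(\mathbf m)$. From $\sum_{\mathbf m}p_\theta(\mathbf m)=1$ one gets $\mathbb E[\ell_i]=0$, and additivity of Fisher information over i.i.d.\ samples gives $\mathbb E[\ell\ell^\top]=s\,I_\theta$. Differentiating the local-unbiasedness constraint $\sum_{\mathbf m}p_\theta(\mathbf m)\big(\hat\theta_i(\mathbf m)-\theta_i\big)=0$ with respect to $\theta_j$ produces $\mathbb E[(\hat\theta_i-\theta_i)\ell_j]=\delta_{ij}$. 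Hence the covariance of the stacked vector $(\hat\theta-\theta,\ell)$ is positive semidefinite:
\begin{equation}
\begin{pmatrix} (\Delta\hat\theta)^2 & \mathbbm 1 \\ \mathbbm 1 & s\,I_\theta \end{pmatrix}\;\succeq\;0 ,
\end{equation}
and taking the Schur complement of the lower-right block gives $(\Delta\hat\theta)^2\succeq \tfrac1s I_\theta^{-1}$.

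\emph{Quantum step.} Fix a direction $u\in\mathbb R^k$ and set $L_u=\sum_i u_iL_i$, a Hermitian operator, so that $u^\top F_\theta u=\tr(\rho_\theta L_u^2)$. Using $\partial_i\rho_\theta=\tfrac12\{\rho_\theta,L_i\}$ and cyclicity, $\partial_u p_\theta(m)=\operatorname{Re}\tr(\Pi_m\rho_\theta L_u)$, hence $u^\top I_\theta u=\sum_m p_\theta(m)^{-1}\big(\operatorname{Re}\tr(\Pi_m\rho_\theta L_u)\big)^2$. For each outcome write $\tr(\Pi_m\rho_\theta L_u)=\tr(AB)$ with $A=\sqrt{\Pi_m}\sqrt{\rho_\theta}$ and $B=\sqrt{\rho_\theta}\,L_u\sqrt{\Pi_m}$; then $(\operatorname{Re} z)^2\le|z|^2$ together with Cauchy--Schwarz for the Hilbert--Schmidt inner product gives
\begin{equation}
\frac{\big(\operatorname{Re}\tr(\Pi_m\rho_\theta L_u)\big)^2}{p_\theta(m)}\;\le\;\frac{|\tr(AB)|^2}{p_\theta(m)}\;\le\;\frac{\tr(A^\dagger A)\,\tr(B^\dagger B)}{p_\theta(m)}\;=\;\tr(\Pi_m L_u\rho_\theta L_u),
\end{equation}
since $\tr(A^\dagger A)=\tr(\rho_\theta\Pi_m)=p_\theta(m)$ and $\tr(B^\dagger B)=\tr(\Pi_m L_u\rho_\theta L_u)$. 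Summing over $m$ with $\sum_m\Pi_m=\mathbbm 1$ collapses the right-hand side to $\tr(L_u\rho_\theta L_u)=u^\top F_\theta u$, so $u^\top I_\theta u\le u^\top F_\theta u$ for all $u$, i.e.\ $I_\theta\preceq F_\theta$, whence $F_\theta^{-1}\preceq I_\theta^{-1}$.

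I expect the quantum step to be the main obstacle: one must pick the operator factorization of $\tr(\Pi_m\rho_\theta L_u)$ so that Cauchy--Schwarz reproduces \emph{exactly} the factors $p_\theta(m)$ and $\tr(\Pi_m L_u\rho_\theta L_u)$, and one must keep track of the fact that it is the symmetrized part of $F_\theta^{ij}=\tr(\rho_\theta L_iL_j)$ that is being compared to the (automatically real, symmetric) classical matrix $I_\theta$. A secondary technical point, to be handled rather than assumed away, is rank-deficiency: if $\rho_\theta$ is not full rank the SLD $L_i$ is only defined on its support, and if $I_\theta$ (or $F_\theta$) is singular the inverses should be read as Moore--Penrose pseudo-inverses restricted to the appropriate subspace; both the block-matrix and Cauchy--Schwarz arguments go through verbatim on those supports.
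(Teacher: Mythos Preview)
Your proof is correct and follows the standard Braunstein--Caves argument: classical Cram\'er--Rao via the Schur complement of the block covariance matrix, followed by the operator Cauchy--Schwarz bound $I_\theta\preceq F_\theta$ with the factorization $A=\sqrt{\Pi_m}\sqrt{\rho_\theta}$, $B=\sqrt{\rho_\theta}L_u\sqrt{\Pi_m}$. The paper itself does not supply a proof of this theorem at all; it is quoted as a known result with a citation to the literature, so there is nothing to compare against beyond noting that your argument is the textbook one.
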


\noindent 
In the single parameter case, there always exists a measurement such that the CFI of the measurement outcomes $I_\theta$ is equal to the QFI $F_\theta$. The optimal measurement can be carried out by measuring in the eigenbasis of $L_\theta$. This is no longer true for the multiparameter case, as the different SLD operators need not commute. From here onwards, we stick to the problem of a single real-valued encoded parameter $\theta \in \mathbb{R}$.

\noindent 
Moreover, for a pure parametrized state encoding a single parameter $\theta \in \mathbb{R}$, $\rho_\theta \equiv \ket{\psi_\theta}\bra{\psi_\theta}$, we have the following form of the QFI \cite{QFI_Multiparameter_Review,metrology_QIS_},
\begin{equation}
    F[\ket{\psi_\theta}] = 4\left(\inner{\partial_\theta\psi_\theta}{\partial_\theta\psi_\theta} - |\inner{\psi_\theta}{\partial_\theta\psi_\theta}|^2\right)
\end{equation}
which is the central formula for the MPS calculations in Sec.~\ref{sec:qfinormalproof}. Upon considering the state $\ket{\psi_\theta(t)} = e^{-\iota H \theta t}\ket{\psi}$, one has that $F(t) = 4 t^2 \var_{\ket{\psi}(t)}[H] \leq t^2 (E_M - E_m)^2$, where $E_{M(m)}$ are the maximum(minimum) eigenvalues of $H$. For the more general case of a sensing and control Hamiltonian, one has $H_\theta = \theta H_s + H_c$. Here, $H_s$ represents the sensing Hamiltonian, and $H_c$ accounts for any additional controlled or unwanted interactions. We define a effective Hamiltonian as follows \cite{dynamical_puig2024dynamicalsteadystatemanybodymetrology},
\begin{equation}
    H_{\text{eff}} = \int_0^1 e^{-\iota ts H_\theta}H_s e^{\iota ts H_\theta}ds
\end{equation}
Then, a similar result holds and we have 
\begin{equation}
    F = 4t^2\var_{\ket{\psi(t)}}[H_{\text{eff}}]
\end{equation}
It again follows that $F \leq t^2 (E_M - E_m)^2$, with the bound depending on the spectrum of $H_s$ as before. 

\subsection{Properties} \label{subsec:qfiprop}
We state some known properties of the QFI \cite{QFI_Multiparameter_Review,Fisher_NISQ_Meyer,metrology_QIS_}
\begin{enumerate}
    \item Positivity. $F[\rho_\theta] \geq 0$, with equality iff $\partial_\theta\rho_\theta = 0$.
    \item The QFI is convex.
    \begin{equation}
        F[p\rho_1 + (1-p)\rho_2] \leq pF[\rho_1]+ (1-p)F[\rho_2]
    \end{equation}
    Furthermore, if the distribution $\{p_x\}$ also depends on $\theta$, then an extended version of this inequality is \cite{extendedconvexity}
    \begin{equation}
        F[\sum_x p_x\rho_x] \leq I[p_x] + \sum_x p_x F[\rho_x]
    \end{equation}
    \item The QFI is additive on product states, 
    \begin{equation}
        F[\otimes_{i}\rho_i] = \sum_i F[\rho_i]
    \end{equation}
    \item The QFI is monotonically non-increasing under (parameter-independent) CPTP maps. 
    \begin{equation}
        F[\mathcal{N}(\rho)] \leq F[\rho]
    \end{equation}
\end{enumerate}

\noindent 
We note that the convexity and product additivity of the QFI implies that any separable state on $N$ particles has QFI $O(N)$. To see this, let $\rho = \sum_i p_i \rho_i^1 \otimes \dots \otimes \rho_i^N$ be the state, and note that $F(\rho) \leq \sum_i p_i F_i$ where $F_i := F(\rho_i^1 \otimes \dots \otimes \rho_i^N)$. Now, note that $F_i = O(N)$ by additivity, and let $f := \max_i F_i/N = O(1)$. Then, $F(\rho) \leq f N = O(N)$. Thus, a separable state cannot show quadratic QFI. 
\section{Hierarchy of QFIs in continuous sensing} \label{sec:hierarcy}
In the most general sequential generation problem, we define isometries $V_{[i]}: \mathcal{S} \to \mathcal{S} \otimes \mathcal{H}_d$ for time-bin $t_i$ with $\mathcal{H}_D \cong \mathbb{C}^D, \mathcal{H}_d \cong \mathbb{C}^d$ to formally describe the emission of a time-binned photon in $t\in [t_i,t_{i+1}]$ with $t_{i+1} - t_i =: dt$. We also denote $T := t/dt$. Each isometry is parametrized, $V_{[i]} \equiv V^{\theta}_{[i]}$.     This leads to the joint system-environment state,

\begin{equation}
    \ket{\Psi;T} := V_{[T]}\dots V_{[2]}V_{[1]} \ket{\psi_0}
\end{equation}
with $\ket{\Psi;T} \in \mathcal{S} \otimes \mathcal{H}_d^{\otimes T}$ describes the joint state of the atom-cavity-photons until time $T$. Given the Kraus operators $\{K_m\}_{0 \leq m\leq d-1}$, we have
\begin{equation}
    \ket{\Psi;T} = \mathlarger{\sum}\limits_{m_i \in [d],}K_{m_T}\dots K_{m_1}\ket{\psi_0} \otimes \ket{m_1\dots m_T}
\end{equation}
Rewriting the previous expression compactly,
\begin{equation}
    \ket{\Psi;T} = \sum\left(K_{\mathbf{m}}\ket{\psi_0}\ket{\mathbf{m}}\right)
\end{equation}
with $K_{\mathbf{m}} := K_{m_T}K_{m_{T-1}}\dots K_{m_1}$.  We now denote $\ket{\psi_\mathbf{m}}:=K_{\mathbf{m}}\ket{\psi_{ac}(0)}/\sqrt{p_\mathbf{m}}$ with $p_\mathbf{m} := \expect{\psi_0}{K_{\mathbf{m}}^\dagger K_{\mathbf{m}}}{\psi_0}$ and write the state as
\begin{eqnarray}
    \ket{\Psi;T} = \sum_{\mathbf{m}}\sqrt{p_\mathbf{m}}\ket{\psi_\mathbf{m}} \otimes \ket{\mathbf{m}}
\end{eqnarray}
This is the `parent' state that we work with, and now outline a hierarchy of Fisher informations arising from this.
\begin{itemize}
    \item QFI of the entire (system and radiation) state $F_{SR} \equiv F(\ket{\psi;T};\theta)$. This is also referred to as the `joint QFI.'
    \item QFI of the system
    \begin{eqnarray}
        \rho_S := \text{Tr}_R(\ket{\psi;T}\bra{\psi;T}) = \sum_\mathbf{m}p_\mathbf{m} \ket{\psi_\mathbf{m}}\bra{\psi_\mathbf{m}}
    \end{eqnarray}
    This is pertinent to the study of noisy metrology $F_S \equiv F(\rho_S;\theta)$.
    \item The QFI of the pure radiation state,
    \begin{eqnarray}
        \ket{R} = \sum_{\mathbf{m}} \text{Tr}[K_\mathbf{m} B]  \ket{\mathbf{m}} 
    \end{eqnarray}
    where $B$ is a rank-one boundary operator. This is the TI-MPS state considered in the main text, denoted $F_R$.
    \item QFI of a post-selected system state, $F_\mathbf{m} = F(\ket{\psi_\mathbf{m}})$
    \item The classical Fisher information upon fixing a basis, e.g. dephasing, $\rho_D =  \sum_\mathbf{m}p_\mathbf{m}\ket{\mathbf{m}}\bra{\mathbf{m}}$, with the diagonal `classical' distribution, resulting in $F(\rho_D) \equiv I(p_\mathbf{m})$.
    \item Dephased QFI, where we dephase the photon legs in some basis, e.g., by applying a measurement channel $\rho \mapsto \sum_x \tr(\Pi_x\rho) \ket{x}\bra{x}$. For photocounting, we have $\Pi_\mathbf{m} = \ket{\mathbf{m}}\bra{\mathbf{m}}$, we thus have 
    \begin{equation}
       \varrho_D =  \mathcal{D}(\ket{\Psi}\bra{\Psi}) = \sum_\mathbf{m} p_\mathbf{m} \ket{\psi_\mathbf{m}}\bra{\psi_\mathbf{m}} \otimes \ket{\mathbf{m}}\bra{\mathbf{m}}
    \end{equation}
    Note that the $\psi_\mathbf{m}$ encodes $\theta$ still, but $\mathbf{m}$ does not except for through the probabilities $p_\mathbf{m}$. It is easy to check that the QFI of this object is, 
    \begin{equation}
       F(\varrho_D) = F(\rho_D) + \sum_\mathbf{m} p_\mathbf{m}F_\mathbf{m} =  I(p_\mathbf{m}) + \left(\sum_\mathbf{m} p_\mathbf{m} F(\ket{\psi_\mathbf{m}})\right)
    \end{equation}
    This can be achieved by photocounting, and then performing measurement conditioned on the photon data on the system.
\end{itemize}
Now, we note the following consequences 
\begin{enumerate}
    \item Owing to convexity $F(\rho_S) \leq \sum_\mathbf{m} p_\mathbf{m}F(\ket{\psi_\mathbf{m}})$. This bound is generically loose, as the RHS can be $\Omega(t^2)$, but the LHS is generically $O(t)$. 
    \item Owing to monotonicity, $F_S \leq F_{SR}$ and $F_R \leq F_{SR}$. This implies that if $F_{SR} = O(t)$, then $F_S = O(t)$.
    \item Both of the points above lead to $F(\rho_S) \leq F(\varrho_D)$ and $F(\rho_D) \leq F(\varrho_D)$
    \item By monotonicity again $F(\varrho_D) \leq F_{SR}$, and the cases where this inequality is (parametrically) saturated is of great interest. For instance, that is the case for the GHZ emitter upon dephasing the photons in the $X-$basis.
\end{enumerate}
The results of Sec.~\ref{sec:qfinormalproof} and Sec.~\ref{sec:jointQFI} show that if the underlying quantum dynamics is mixing with spectral gap $\Delta$, then $F_{SR}/t = F_R/t + O(e^{-\Delta t})$. This, along with monotonicity of the QFI implies that the QFI rate of noisy metrology is upper bounded by $f_0 + f_c$ as well, with $f_0,f_c$ as defined in the main text. This fact may be complementary to the bounds on noisy metrology, without computing operator norms \cite{normbounds1_fujiwara2008fibre,normbounds3_demkowicz2014using,normbounds5_zhou2018achieving}.
\section{QFI of gapped dynamics}
\label{sec:qfinormalproof}
\subsection{Proof of main result} \label{subsec:proofmain}
We state the result of the emission QFI generated by dynamics with a unique fixed point. One iteration of the sequential generation step is described by,
\begin{equation}
    V\ket{\psi}_D := \sum_{m=0}^{d-1} {K_m\ket{\psi}_D \otimes \ket{m}_d}
\end{equation}
where $V^{\dagger}V = 1$ is an isometry given as,
\begin{eqnarray}
    V := \sum_{m=0}^{d-1} K_m \otimes \ket{m} \in \C^{D\times D\times d}
\end{eqnarray}

Now, we define the relevant variables in terms of the isometry $V$, the derivative $\partial_\theta V$, and the channel fixed point $T_V|\rho_{ss}\rrangle = |\rho_{ss}\rrangle$ as,

\begin{align}\label{eq:alphanormalappdx}
    \alpha := \quad
  \begin{tikzpicture}[scale=0.5,baseline={([yshift=-0.65ex] current bounding box.center) }]
        \GTensor{0,0}{1.2}{.6}{\small $\dot{V}$}{7};
        \GTensor{0,-2}{1.2}{.6}{\small $\dot{\bar{V}}$}{0};
        \draw[very thick,draw=red] (1.2,0)--(2.03,0);
      \DTensor{2,-1}{1.}{.6}{\small $\rho_{ss}$}{1};
        \draw[very thick,draw=red] (1.2,-2)--(2.03,-2);
        \draw[very thick,draw=red] (-2,0)--(-1.2,0);
        \draw[very thick,draw=red] (-2,-2)--(-1.2,-2);
        \draw[very thick,draw=red] (-2,-2)--(-2,0);
  \end{tikzpicture}
      \quad 
  \end{align} 
  
  \begin{align}\label{eq:betasnormalappdx}
    \beta_\tau := \quad
  \begin{tikzpicture}[scale=0.5,baseline={([yshift=-0.65ex] current bounding box.center) }]
        \GTensor{-5,0}{1.2}{.6}{\small $\dot{V}$}{7};
        \GTensor{-5,-2}{1.2}{.6}{\small $\bar{V}$}{0};
          \draw[very thick,draw=red] (-5-1.15,-2)--(-5-1.15,0);
          \FUnitary{-3.25,-2.25}{1}{.6}{\small $T_V^\tau$}{5};
        \GTensor{-1.45,0}{1.2}{.6}{\small ${V}$}{7};
        \GTensor{-1.45,-2}{1.2}{.6}{\small $\dot{\bar{V}}$}{0};
            \DTensor{.5,-1}{1.}{.6}{\small $\rho_{ss}$}{1};
            \draw[very thick,draw=red] (-0.5,-2)--(0.53,-2);
            \draw[very thick,draw=red] (-0.5,0)--(0.53,0);
        \end{tikzpicture}
      \quad \forall \tau \geq 0
  \end{align} 
  
  \begin{align}\label{eq:gammanormalappdx}
    \gamma := \quad
  \begin{tikzpicture}[scale=0.5,baseline={([yshift=-0.65ex] current bounding box.center) }]
        \GTensor{0,0}{1.2}{.6}{\small $\dot{V}$}{7};
        \GTensor{0,-2}{1.2}{.6}{\small $\bar{V}$}{0};
        \draw[very thick,draw=red] (1.2,0)--(2.03,0);
      \DTensor{2,-1}{1.}{.6}{\small $\rho_{ss}$}{1};
            \draw[very thick,draw=red] (1.2,-2)--(2.03,-2);
          \draw[very thick,draw=red] (-2,0)--(-1.2,0);
          \draw[very thick,draw=red] (-2,-2)--(-1.2,-2);
          \draw[very thick,draw=red] (-2,-2)--(-2,0);
  \end{tikzpicture}
      \quad .
  \end{align} 

Then, we have the following result.
\begin{result}[QFI of emitted multi-photon state] \label{re:qfinormalmps}
    For any CPTP $\mathcal{E}$ map on $\mathcal{H}_D \simeq \C^D$ with the Kraus operators $\{K_m\}_{m=0}^{d-1}$ satisfying $\sum_{m=0}^{d-1}K_m^\dagger K_m = \mathbbm{1}_D$, we define the corresponding isometry, 
    \begin{equation}
        V\ket{\psi}_D := \sum_{m=0}^{d-1} {K_m\ket{\psi}_D \otimes \ket{m}_d} ~.
    \end{equation}
    If $\mathcal{E}$ has a fixed point, viz. $\mathcal{E}(\rho_{ss}) = \rho_{ss}$, then the quantum Fisher information of the corresponding TI-MPS is given as
    \begin{equation}
        F(t;V,\theta) = 4\left(T\alpha + 2\sum_{\tau =0}^{T-2}(T-\tau-1)\beta_\tau - T^2|\gamma|^2\right), 
    \end{equation}
    where $\alpha,\{\beta_\tau\}_\tau,\gamma$ are defined as in \cref{eq:alphanormalappdx,eq:gammanormalappdx,eq:betasnormalappdx} and $T = t/dt$ is an integer.
\end{result}
    
Hereon in this proof, we assume a unit of time $dt = 1$ for notational convenience. The number of discrete time-steps is $t/dt$, hence upon substituting $t\mapsto t/dt$ in the RHS of \eqref{eq:QFI_TIMPS} and noting that $\alpha,\beta,|\gamma|^2$ are $O(dt^2)$, one gets $F(t) = \sum dt(\dots) \to \int_{0}^t(\cdots)$.

The central idea is to exploit the degree of freedom in moving the orthogonality center of the TI-MPS in the mixed gauge, and use translation invariance.

\begin{proof}
    The TI-MPS state is given as 
    \begin{align} \label{eq:stateVappdx}
        \ket{V} =  \dotso \quad
     \begin{tikzpicture}[scale=0.5,baseline={([yshift=-0.65ex] current bounding box.center) }]
             \GTensor{0,0}{1.2}{.6}{\small $V$}{7};
           \GTensor{1*\singledx,0}{1.2}{.6}{\small $V$}{7};
           \GTensor{2*\singledx,0}{1.2}{.6}{\small $V$}{7};
           \GTensor{3*\singledx,0}{1.2}{.6}{\small $V$}{7};
           \GTensor{4*\singledx,0}{1.2}{.6}{\small $V$}{7};
     \end{tikzpicture}
         \quad \dotso
     \end{align}
     Thus, the derivative $\ket{\dot{V}} \equiv \partial\ket{V}/\partial\theta$ is,
\begin{align}\label{eq:delVappdx}
\ket{\partial_\theta V} = \mathlarger{\sum}_t    \quad\dots
\begin{tikzpicture}[scale=0.5,baseline={([yshift=-0.65ex] current bounding box.center) }]
        \GTensor{0,0}{1.2}{.6}{\small $V_L$}{7};
      \GTensor{1*\singledx,0}{1.2}{.6}{\small $V_L$}{7};
      \draw (1*\singledx,-1.8) node {\small $m_{t-1}$};
      \GTensor{2*\singledx,0}{1.2}{.6}{\small $\dot{V}_{C}$}{7};
      \draw (2*\singledx,-1.8) node {\small $m_{t}$};
      \GTensor{3*\singledx,0}{1.2}{.6}{\small $V_R$}{7};
      \draw (3*\singledx,-1.8) node {\small $m_{t+1}$};
      \GTensor{4*\singledx,0}{1.2}{.6}{\small $V_R$}{7};
\end{tikzpicture}
    \quad \dotso
\end{align} 
Now, the expression for the QFI of a pure state $\ket{V}$ is (see Sec.~ for details)
\begin{eqnarray}
    F = 4\left(\underbrace{\langle\partial_\theta V|\partial_\theta V\rangle}_{a~=~d + n.d.} - |\underbrace{\langle\partial_\theta V| V\rangle}_{s}|^2\right)
\end{eqnarray}
We divide the entire expression into two parts, the additive $a$ and the subtracted $|s|^2$ term, with $a = \langle\partial_\theta V|\partial_\theta V\rangle$ and $s = \langle\partial_\theta V| V\rangle$. We further divide first part into diagonal $d$ and non-diagonal $n.d.$ terms, viz., $a = d ~+~ n.d.$, which we now clarify. The expression for $\ket{\partial_\theta V} = \sum_{\tau}\ket{\partial_\theta^\tau V}$ consists of $t$ terms, where we denote the derivative at site $\tau$ by $\partial_\theta^\tau$. Thus, $a = \langle\partial_\theta V|\partial_\theta V\rangle$ consists of $t^2$ terms. We decompose this into $t$ terms corresponding to diagonal elements, where the derivative is at the same $\tau_1 = \tau_2$, and $t^2-t$ non-diagonal elements, corresponding to $\tau_1 \neq \tau_2$. Owing to the same orthogonality center of the $\tau_1=\tau_2$ terms \eqref{eq:delVappdx}, all the diagonal elements are equal $\inner{\partial_\theta^{\tau_1} V}{\partial_\theta^{\tau_1} V} = \alpha$ and contribute $t \times \alpha$. Now, consider the case of $\tau_1\neq \tau_2$. We first note that the sum is reflection-symmetric, that is, $\inner{\partial_\theta^{\tau_1} V}{\partial_\theta^{\tau_2} V} = \inner{\partial_\theta^{\tau_2} V}{\partial_\theta^{\tau_1} V}$. Further, all the non-diagonal terms with the same value of $\tau \equiv |\tau_1 - \tau_2|$ are equal, as the boundary tensors contract to the identity owing to the choice of gauge. Thus, we denote $\beta_\tau := \inner{\partial_\theta^{\tau_1} V}{\partial_\theta^{\tau_1+\tau} V}$. Now, for each $0 \leq \tau \leq t-2$, the degeneracy factor is $(t-1-\tau)$. Thus, the non-diagonal terms contribute a net total of $2\sum_{\tau =0}^{t-2}(t-\tau-1)\beta_\tau$ where the factor of $2$ comes from the reflection symmetry of the sum. Finally, we evaluate the subtracted term $|s|^2$. Note that $s$ consists of $t-$terms, one for each in \eqref{eq:delVappdx}, contracted with the MPS $\ket{V}$, that is, $s = \sum_\tau \inner{\partial_\theta^{\tau} V}{V}$. In the $\tau$-term of this sum, we orthogonalize the MPS $\ket{V}$ at site $\tau$, resulting in the contribution $\gamma$. Owing to translational invariance, we thus have $|s| = |t\gamma|$. Combining all these contributions, we reach the result stated
\begin{equation}
    F(t;V,\theta) = 4\left(t\alpha + 2\sum_{\tau =0}^{t-2}(t-\tau-1)\beta_\tau - t^2|\gamma|^2\right)
\end{equation}

\end{proof}

\subsection{Asymptotic rate}\label{subsec:asymprate}
We have the QFI, 
\begin{equation}
        F(t;V,\theta) = 4\left(T\alpha + 2\sum_{\tau =0}^{T-2}(T-\tau-1)\beta_\tau - T^2|\gamma|^2\right), 
\end{equation}
Now, we have that $\beta_\tau$ relaxes to $|\gamma|^2$ in the following fashion, $\beta_\tau - |\gamma|^2 = \sum_\mu K_\mu e^{-\tau/\tau_\mu}$, where $K_\mu$ are time-independent constants, potentially depending on the bond-space particle number. Specifically, if we define the following two vectors in the doubled space, 
\begin{align}\label{eq:ABmatrix}
    \llangle A|\equiv \quad
   \begin{tikzpicture}[scale=0.5,baseline={([yshift=-0.65ex] current bounding box.center) }]
         \GTensor{0,0}{1.2}{.6}{\scriptsize $\dot{V}$}{7};
         \GTensor{0,-2}{1.2}{.6}{\scriptsize $\dot{\bar{V}}$}{0};
         \draw[very thick,draw=red] (1.2,0)--(2.03,0);
             \draw[very thick,draw=red] (1.2,-2)--(2.03,-2);
           \draw[very thick,draw=red] (-2,0)--(-1.2,0);
           \draw[very thick,draw=red] (-2,-2)--(-1.2,-2);
           \draw[very thick,draw=red] (-2,-2)--(-2,0);
   \end{tikzpicture}
       \quad ~ |B\rrangle \equiv \quad
       \begin{tikzpicture}[scale=0.5,baseline={([yshift=-0.65ex] current bounding box.center) }]
             \GTensor{0,0}{1.2}{.6}{\small $V$}{7};
             \GTensor{0,-2}{1.2}{.6}{\scriptsize $\dot{\bar{V}}$}{0};
             \draw[very thick,draw=red] (1.2,0)--(2.03,0);
                 \draw[very thick,draw=red] (1.2,-2)--(2.03,-2);
               \draw[very thick,draw=red] (-2,0)--(-1.2,0);
               \draw[very thick,draw=red] (-2,-2)--(-1.2,-2);
             \DTensor{2,-1}{1.}{.6}{\small $\rho_{ss}$}{1};
       \end{tikzpicture}
   \end{align} 

then we have $K_\mu = \llangle A|\Psi_\mu\rrangle \llangle J^\mu | B\rrangle$, where $\Psi_\mu$ and $J^\mu$ are the right and left eigenvectors of the Liouvillian respectively. The correlation length is then $\tau^* \equiv \max_\mu \tau_\mu$. Then, we have the following calculation accounting for the $K_\mu$ terms,

\begin{align}
    \left(\alpha T + 2\sum_{\tau=0}^{T-2}(T-1-\tau)(|\gamma|^2 + \sum_\mu K_\mu e^{-\tau/\tau_\mu}) - T^2|\gamma|^2\right)  \\ 
   = \left(\alpha T +  (T^2 - T)|\gamma|^2 + 2\sum_\mu K_\mu\sum_{\tau=0}^{T-2}(T-1-\tau)e^{-\tau/\tau_\mu} - T^2|\gamma|^2\right)  \\ 
    = (\alpha - |\gamma|^2)T + 2\sum_\mu K_\mu\sum_{\tau=0}^{T-2}(T-1-\tau)e^{-\tau/\tau_\mu} \\ 
   = (\alpha - |\gamma|^2)T + 2\sum_\mu K_\mu \sum_{\tau=1}^{T-1}\tau e^{-(T-1-\tau)/\tau_\mu} \\ 
    = (\alpha - |\gamma|^2)T + 2\sum_\mu K_\mu e^{-(T-1)/\tau_\mu}\sum_{\tau=1}^{T-1}\tau e^{\tau/\tau_\mu} \\ 
    =  (\alpha - |\gamma|^2)T + 2\sum_\mu K_\mu \frac{e^{1/\tau_\mu}}{e^{1/\tau_\mu}-1}\left((T-1) + \left[\frac{e^{(1-T)/\tau_\mu} - 1}{e^{1/\tau_\mu} - 1}\right]\right)  
\end{align}
Thus we have upon ignoring constant terms and approximating $1/(1-e^{-1/x})\approx x$ for large $x$ (with $\text{Re}(x) > 0$), 
\begin{equation} 
    F(t;V,\theta) \simeq 4(\alpha-|\gamma|^2) T  + 4\left(2\sum_\mu K_\mu \tau_\mu\right)  T \label{eq:asymptoticrate1}
\end{equation}
Let $K^* = \max_\mu K_\mu$. In general, the $K_\mu \tau_\mu $ term can be complex, but the conjugate pairs also appear in the sum, so one can treat the real part of the Liouvillian eigenvalues as the decay rates $1/\tau_\mu$. Now, if the bond-space encoding Hamiltonian were consisting of one-body terms, then we have $K^* = \Theta(N^2)$. Moreover, if $\tau^*$ dominates the other timescales (and is the dominating correlation length), then we have 
\begin{equation}
    F(t;V,\theta) \simeq \underbrace{4(\alpha-|\gamma|^2)}_{f_0} T  + \underbrace{8K^*\tau^*}_{f_c}  T
\end{equation}

\subsection{Taking the continuous time limit}\label{subsec:contlimit}

The QFI of Result~\ref{re:qfinormalmps} is generally applied to any setup of continuous sensing with a  time-translational invariant dynamics, with appropriate modifications for degeneracy as in the main text. Now, we specialize to the case where the underlying dynamics in a single time step is generated from a Markovian evolution described by 
\begin{eqnarray}
    \dot{\rho} = -\iota [H,\rho] + \sum_{n\geq 1}\left(L_n\rho L_n^{\dagger} - \frac{1}{2}\{L_n^\dagger L_n,\rho\}\right) \equiv \mathcal{L}\rho, \label{eq:LMEappdx}
\end{eqnarray}
where $\mathcal{L}$ is the Liouvillian superoperator. With discretization $dt$, the Kraus operators for the channel $\mathcal{E} \equiv \exp{\mathcal{L}dt}$ are, 
\begin{align} \label{eq:krausopappd1}
    K_0 &= I - \iota Hdt - \frac{1}{2}\sum_n L_n^\dagger L_n dt\\ \label{eq:krausopappd2}
    K_m &= \sqrt{dt}L_m~ \forall m \ge 1  
\end{align}
which ensures that $\sum_{m \geq 0} K_m\rho K_m^{\dagger} = \exp{\mathcal{L}dt}\rho + O(dt^2)$. The continuous time limit is achieved by taking $dt \to 0$.

\textbf{Hamiltonian Sensing.} If the parameter $\theta$ is encoded solely in the Hamiltonian then we note that $\alpha$ and $|\gamma|^2$ are $O(dt^2)$. The $dt$ dependence of $\beta_\tau$ is as follows: both the derivative terms contribute a $dt$ factor, leading to a $dt^2$. However, the transfer matrix $\tilde{T}_V$ has eigenvalues $-dt \lambda_\mu$, where $-\lambda_\mu$ are the eigenvalues of the Liouvillian $\mathcal{L}$. To clarify this $dt$ dependence, we can consider the asymptotic QFI expression \cref{eq:asymptoticrate1}. We see that $\alpha, |\gamma|^2, K_\mu$ are all $O(dt^2)$. Further, $-dt \lambda_\mu \equiv 1/\tau_\mu \implies \tau_\mu \sim 1/dt$. Finally, recall that $T = t/dt$. Hence, the $f_0$ term is finally $\sim dt$, and the $f_c$ term is independent of $dt$. Taking the continuous time limit $dt \to 0$, only the $f_c$ term survives. This is consistent with the perturbation expansion of Ref.~\cite{molmer_14}, as we discuss in detail in Sec.~\ref{sec:molmerformula}.

\textbf{General Case.} Now consider a more general parameter encoding. We see that $\alpha = \alpha_0 dt^2 + \alpha_1 dt$, $\gamma = \gamma_0 dt + \gamma_1 dt$. Here, the $\{\alpha_0,\gamma_0\}$ variables now encode the variance of the effective non-Hermitian Hamiltonian $H-\iota/2\sum_n L_n^\dagger L_n$. Whereas the $\{\alpha_1,\gamma_1\}$ parts contain the jump-only terms. Through a similar analysis as above, we see that in the continuous time limit only the effect of $\alpha_1$ survives, along with the $\beta_\tau$ terms.

\subsection{Mixed radiation state}\label{subsec:lowerboundonmixed}
In this section we formalize the notion that for a gapped dynamics, the majority of the QFI is contained exclusively in the radiated photons. That is, given the joint state $\ket{\psi}_{SR}$, if we define the mixed radiation state as $\rho_R := \tr_S[\ketbra{\psi}_{SR}]$, then $F[\rho_R] \approx F[\ket{\psi}_{SR}]$. We use the following result about the QFI of a mixed state to show this. 

\begin{theorem}[Ref.~\cite{kolodynski2013efficient}] \label{thm:dilatedqfi}
    The QFI of a mixed state $\rho$ on $R$ is equal to 
    \begin{equation}
        F[\rho_R] = \min_{\Psi} F[\ket{\Psi}_{SR}]
    \end{equation}
    where the minimization is over all purifications such that $\tr_S[\ketbra{\Psi}_{SR}] =  \rho_R$.
\end{theorem}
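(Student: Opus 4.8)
The plan is to prove the two inequalities separately. The bound $F[\rho_R]\le\min_\Psi F[\ket{\Psi}_{SR}]$ is immediate: for any purification, $\rho_R=\tr_S[\ketbra{\Psi}_{SR}]$ is obtained from the pure state by the parameter-independent CPTP map $\tr_S$, so monotonicity of the QFI (property~4 of Sec.~\ref{subsec:qfiprop}) gives $F[\rho_R]\le F[\ket{\Psi}_{SR}]$, and one takes the infimum over $\Psi$. The substance is the matching bound, which I would obtain by constructing an explicit optimal purification.

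First I would fix $\theta$ and (assuming $\rho_\theta$ full rank; the rank-deficient case reduces to this on its support, provided $\dim S\ge\operatorname{rank}\rho_\theta$) work with the canonical purification $\ket{\Psi_\theta}=(\sqrt{\rho_\theta}\otimes\mathbbm{1}_S)\ket{\omega}$, where $\ket{\omega}=\sum_k\ket{k}_R\ket{k}_S$ is a fixed (unnormalised) maximally entangled vector, which satisfies $\tr_S[\ketbra{\Psi_\theta}]=\rho_\theta$. The next step is to parametrise the derivatives of purification families through this point: every such derivative has the form
\begin{equation}
  \ket{\dot\Psi_\theta}=\tfrac12\,(L_\theta\otimes\mathbbm{1}_S)\ket{\Psi_\theta}+i\,(\mathbbm{1}_R\otimes G)\ket{\Psi_\theta},\qquad G=G^\dagger ,
\end{equation}
up to components orthogonal to the Schmidt support on $S$ (which only add positive norm). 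Here $L_\theta$ is the SLD: the first term solves the constraint $\tr_S[\ket{\dot\Psi}\bra{\Psi}+\ket{\Psi}\bra{\dot\Psi}]=\dot\rho_\theta$ by virtue of $\dot\rho_\theta=\tfrac12\{\rho_\theta,L_\theta\}$, and the term $i(\mathbbm{1}_R\otimes G)\ket{\Psi_\theta}$ runs over the kernel of that constraint (the $S$-local anti-Hermitian deformations).

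Then I would substitute into $F[\ket{\Psi_\theta}]=4\big(\inner{\dot\Psi_\theta}{\dot\Psi_\theta}-|\inner{\Psi_\theta}{\dot\Psi_\theta}|^2\big)$ and simplify using $\expect{\omega}{A\otimes B}{\omega}=\tr[AB^{\mathsf{T}}]$. The $L$-only piece of $\inner{\dot\Psi_\theta}{\dot\Psi_\theta}$ equals $\tfrac14\tr[\rho_\theta L_\theta^2]=\tfrac14 F[\rho_\theta]$; the $G$-only piece equals $\tr[\rho_\theta\tilde G^2]$ with $\tilde G:=G^{\mathsf{T}}$ (again Hermitian); and the cross term equals $-\operatorname{Im}\tr[\sqrt{\rho_\theta}\,L_\theta\sqrt{\rho_\theta}\,\tilde G]$, which vanishes because it is the imaginary part of the trace of a product of two Hermitian operators. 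Likewise $\inner{\Psi_\theta}{\dot\Psi_\theta}=\tfrac12\tr[\rho_\theta L_\theta]+i\,\tr[\rho_\theta\tilde G]=i\,\tr[\rho_\theta\tilde G]$ since $\tr[\rho_\theta L_\theta]=\tr[\dot\rho_\theta]=0$. Collecting terms,
\begin{equation}
  F[\ket{\Psi_\theta}]=F[\rho_\theta]+4\big(\tr[\rho_\theta\tilde G^2]-(\tr[\rho_\theta\tilde G])^2\big)=F[\rho_\theta]+4\,\var_{\rho_\theta}[\tilde G]\ \ge\ F[\rho_\theta],
\end{equation}
with equality at $G=0$. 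Combined with the easy direction, $\min_\Psi F[\ket{\Psi}_{SR}]=F[\rho_\theta]=F[\rho_R]$, the optimiser being the purification whose derivative is $\tfrac12(L_\theta\otimes\mathbbm{1})\ket{\Psi_\theta}$.

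The hard part will be the parametrisation of admissible purification derivatives used above: showing that every derivative of a smooth normalised family of purifications through $\ket{\Psi_\theta}$ is, modulo norm-increasing additions, of the stated form, together with the bookkeeping for rank-deficient $\rho_\theta$, spectral degeneracies of $\rho_\theta$ where $\sqrt{\rho_\theta}$ is not smooth, and the case $\dim S\ne\dim R$. Everything else is the short computation above, and the value of the minimum is in any event pinned between this construction and the monotonicity bound, so the remaining technicalities affect only the explicit identification of the optimiser, not the stated identity.
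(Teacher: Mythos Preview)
The paper does not prove this theorem; it is quoted as an established result from Ref.~\cite{kolodynski2013efficient} and then used as a tool in the proof of Lemma~\ref{lem:mixedqfibound}. So there is no paper proof to compare against.

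That said, your argument is sound and is essentially the standard one. Two remarks. First, the parametrisation over $G$ is more than you need: once you have the easy direction $F[\rho_R]\le F[\ket{\Psi}_{SR}]$ from monotonicity, it suffices to exhibit \emph{one} purification family attaining equality. Your $G=0$ choice $\ket{\dot\Psi}=\tfrac12(L_\theta\otimes\mathbbm{1})\ket{\Psi_\theta}$ does exactly this: it satisfies the purification constraint $\tr_S[\ket{\dot\Psi}\bra{\Psi}+\text{h.c.}]=\tfrac12\{L_\theta,\rho_\theta\}=\dot\rho_\theta$, preserves normalisation since $\text{Re}\,\inner{\Psi}{\dot\Psi}=\tfrac12\tr[\rho_\theta L_\theta]=\tfrac12\tr[\dot\rho_\theta]=0$, and gives $F[\ket{\Psi}]=\tr[\rho_\theta L_\theta^2]=F[\rho_\theta]$. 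The ``hard part'' you flag (classifying all admissible derivatives) is therefore not needed for the identity itself, only for the auxiliary statement that the excess is precisely $4\var_{\rho_\theta}[\tilde G]$. Second, your cross-term vanishing can be seen more directly: since $L_\theta\otimes\mathbbm{1}$ and $\mathbbm{1}\otimes G$ commute, the mixed contribution to $\inner{\dot\Psi}{\dot\Psi}$ is $2\,\text{Re}\big[\tfrac{i}{2}\langle\Psi|(L_\theta\otimes G)|\Psi\rangle\big]$, and $\langle\Psi|(L_\theta\otimes G)|\Psi\rangle$ is real because it equals $\tr[\sqrt{\rho_\theta}L_\theta\sqrt{\rho_\theta}\,\tilde G]$, a trace of a product of two Hermitian operators.
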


Using this theorem, we prove that the QFI of the mixed radiation state is asymptotically identical to \cref{eq:QFI_TIMPS}.
\begin{lem} \label{lem:mixedqfibound}
    The QFI of the mixed radiation state, for a normal MPS, is asymptotically identical to \cref{eq:QFI_TIMPS} up to a $O(\tau^* (\log{D})^k)$ correction for some $k$ depending on the Hamiltonian encoding. 
\end{lem}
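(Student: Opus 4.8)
The goal is to show $F[\rho_R] = F[\ket{\Psi}_{SR}] - O(\tau^*(\log D)^k)$. Since $\rho_R$ is a partial trace of $\ket{\Psi}_{SR}$, monotonicity of the QFI already gives $F[\rho_R] \le F[\ket{\Psi}_{SR}]$, so only the lower bound on $F[\rho_R]$ is at stake. The natural tool is Theorem \ref{thm:dilatedqfi}: $F[\rho_R] = \min_\Psi F[\ket{\Psi}_{SR}]$ over all purifications. Thus it suffices to produce \emph{one} purification $\ket{\Psi'}_{SR'}$ (with $R' = R \otimes (\text{ancilla})$, or equivalently a $\theta$-independent unitary applied on $S$ of the given purification) whose QFI is at most $O(\tau^*(\log D)^k)$ below the bulk value $4T(f_0+f_c)$ of \cref{eq:QFI_TIMPS}.

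The concrete route: any two purifications of $\rho_R$ differ by a $\theta$-independent isometry on the purifying system $S$; equivalently, we may write a general purification as $(\mathbbm{1}_R \otimes W_\theta)\ket{\Psi}_{SR}$ where $W_\theta$ acts on $S$ — but to use Theorem \ref{thm:dilatedqfi} in the form that \emph{lowers} the QFI we allow the purifying unitary to carry a $\theta$-dependence that is fixed by hand. The physically transparent choice is to ``undo'' the boundary sensitivity: recall from the main-text discussion that the full QFI \cref{eq:QFI_TIMPS} splits into $O(T)$ bulk terms (insensitive to the boundary tensor $B$) plus $O(1)$ boundary terms, each boundary term bounded by $O(\tau^*(\log D)^k)$ because the derivative $\dot V$ acts within $\tau^* \sim 1/\Delta$ sites of an end and the per-site contribution is $O((\log D)^k)$ (the $(\log D)^2$ estimate for one-body Hamiltonian encoding in the footnote of the main text). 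Concretely, I would: (i) write $F[\ket\Psi_{SR}] = 4(\langle\dot\Psi|\dot\Psi\rangle - |\langle\Psi|\dot\Psi\rangle|^2)$ and expand $\ket{\dot\Psi} = \sum_\tau \ket{\dot\Psi^\tau}$ with the derivative at time-bin $\tau$; (ii) isolate the contributions where $\tau$ or $\tau'$ is within $\tau^*$ of the boundaries, showing these sum to $O(\tau^*(\log D)^k)$ using $|\Lambda(\tilde T_V)|<1$ so that $\|\tilde T_V^n\| \lesssim e^{-\Delta n}$; (iii) observe that the remaining ``deep bulk'' terms are exactly the translation-invariant sums producing $4T(f_0+f_c)$, independent of $B$; (iv) invoke Theorem \ref{thm:dilatedqfi} by choosing the purification whose reduced state on $S$ is the steady state $\rho_{ss}$ (the ``canonical'' boundary), for which the boundary corrections are $\le 0$ relative to the bulk value, or alternatively a purification in which a $\theta$-dependent correction unitary on $S$ cancels the leading boundary phase, thereby removing the positive $O(1)$ boundary excess and establishing $F[\rho_R] \ge 4T(f_0+f_c) - O(\tau^*(\log D)^k)$.

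The main obstacle is step (iv): Theorem \ref{thm:dilatedqfi} gives $F[\rho_R]$ as a \emph{minimum} over purifications, so I need the inequality in the direction that is \emph{hard} — a lower bound on this minimum, not an upper bound. A clean way around this is to lower-bound $F[\rho_R]$ directly without the dilation theorem: since $\rho_R$ is obtained from $\ket{\Psi}_{SR}$ by tracing out the $D$-dimensional system $S$, and QFI can decrease by at most an amount controlled by how much information about $\theta$ is ``carried away'' in $S$, one can use a continuity/robustness bound for the QFI under partial trace of a small subsystem — e.g. bounding $F[\ket\Psi_{SR}] - F[\rho_R]$ by (a constant times) the QFI $F[\rho_S]$ of the reduced system state, which is itself $O(\tau^*(\log D)^k)$ because $\rho_S$ relaxes to the $\theta$-independent steady state $\rho_{ss}$ up to $e^{-\Delta T}$ corrections and its sensitivity to $\theta$ accumulates only over the last $\sim\tau^*$ time-bins. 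Making this ``QFI lost to a relaxing ancilla is small'' statement quantitative — ideally via the purification characterization combined with an explicit near-optimal purification that keeps $S$ close to $\rho_{ss}$ — is the crux; the rest is the bookkeeping of geometric series in $\tilde T_V$ already carried out in Sec.~\ref{subsec:asymprate}.
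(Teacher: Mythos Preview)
Your proposal correctly identifies Theorem~\ref{thm:dilatedqfi} as the relevant tool and correctly diagnoses the direction problem in step~(iv): exhibiting a single purification only \emph{upper}-bounds $F[\rho_R]$, so steps (i)--(iv) as written cannot yield the lower bound you need. The workaround you then propose---bounding $F[\ket{\Psi}_{SR}]-F[\rho_R]$ by a constant times $F[\rho_S]$---is not a standard inequality, and you give no mechanism for proving it; the sentence ``QFI lost to a relaxing ancilla is small'' is the statement of the lemma, not an argument for it. As it stands, the proposal does not contain a working route to the lower bound.

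The paper's approach is more direct and sidesteps the difficulty entirely. It parametrizes \emph{all} purifications by a Hermitian $G$ acting on $S$, writing $\ket{\phi}=e^{-i\theta G}\ket{\psi_{SR}}$, and computes the purified QFI explicitly:
\[
F[\phi] = F[\psi_{SR}] + \var_\psi[G] + 2\,\mathrm{Im}\bigl(\langle\dot\psi|G|\psi\rangle - \langle\psi|G|\psi\rangle\langle\dot\psi|\psi\rangle\bigr).
\]
The variance term is nonnegative, so the minimum over $G$ can dip below $F[\psi_{SR}]$ only through the cross term. That cross term is a sum over $\tau$ of contractions of the form $\llangle G|\tilde T_V^\tau|B\rrangle$; the spectral gap turns this into a geometric series bounded by $c\,\|G\|\,\tau^*$. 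A short scaling argument (if $\|G\|$ were to grow with $T$, $\var_\psi[G]$ would dominate the linear cross term and the purification would not be minimizing) pins the optimal $\|G\|$ to $O(1)$, yielding $F[\rho_R]\gtrsim F[\psi_{SR}]-c\,\tau^*$ with $c=\poly(\log D)$. The idea you are missing is precisely this: write the QFI of the \emph{generic} purification, observe its only negative piece is a decaying boundary sum, and bound that uniformly in $G$.
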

\begin{proof}
    Let $\psi_{SR}$ denote the full state. Now let $G$ be any Hermitian matrix on $S$. By Theorem \ref{thm:dilatedqfi} we have that 
    \begin{equation}
        F[\rho_R] = \min_{G} F[U_\theta \ketbra{\psi_{SR}}U_\theta^\dagger]
    \end{equation}
    with $U_\theta := e^{-\iota G\theta}$. This is the most general form of rotations on the purifying register, encompassing all possible purifications. Now, we denote $\ket{\phi_{SR}} = U_\theta \ket{\psi_{SR}}$ and we have  
    \begin{align}
        \ket{\dot{\phi}}_{SR} = U_\theta \ket{\dot{\psi}}_{SR} - \iota G_S U_\theta \ket{\psi}_{SR} \\ 
        \bra{\dot{\phi}}_{SR} =  \bra{\dot{\psi}}_{SR}U_\theta^\dagger  + \iota   \bra{\psi}_{SR}U_\theta^\dagger G_S
    \end{align}
    Then, (emphasizing that $G$ is supported only on $S$)
    \begin{equation}
        F[\phi] = F[\psi] + \var_{\psi}[G_S] + 2\text{Im}\left(\expect{\dot{\psi}}{G_S}{\psi} -  \expect{\psi}{G_S}{\psi} \inner{\dot{\psi}}{\psi}\right)
    \end{equation}
    The only negative term in this expression is the cross-term, which we can show cannot be too large. 
    For large $T$, there exists a $|B\rrangle \approx (\sum_m K_m \otimes \dot{\bar{K}}_m)|\rho_{ss}\rrangle$ such that the cross-term satisfies, 
    \begin{equation}
        |\expect{\dot{\psi}}{G_S}{\psi} -  \expect{\psi}{G_S}{\psi} \inner{\dot{\psi}}{\psi}|= | \sum_\tau \llangle G |\tilde{T}_V^\tau| B \rrangle | \leq  \sum_\tau |  \llangle G |\tilde{T}_V^\tau| B \rrangle | \leq c \sum_{\tau}e^{-\tau/\tau^*} \leq  c \|G\| \frac{e^{-1/\tau^*}}{1-e^{-1/\tau^*}}(1-e^{-T/\tau^*})
    \end{equation} 
    where $c$ is a constant in time, scaling polynomially with particle number and $\tau^*$ is the dominant correlation length. Thus, we have 
    \begin{equation}
        F[\phi] \gtrsim F[\psi] + \var_{\psi}[G] - \|G\| c \tau^* 
    \end{equation}
    Note that $\|G\|$ has to be $O(1)$ in time in the minimization. To see this, first note that any $G$ which makes $\var_\psi[G]$ vanish must be uniform on the support of $\rho_S$. Hence, the covariance term vanishes identically, and the bound is trivial. Thus, without loss of generality we assume $\var[G] > 0$.  If $\|G\|$ scales with $T$, then $\var_{\psi}[G] - \|G\| c \tau^*  = \|G\|(c_1 \|G\| - c\tau^*)$ for some $c_1 > 0$ will become unbounded for large enough $T$. Hence we can take $\|G\| = O(1)$ without loss of generality for the minimization. Now applying Theorem~\ref{thm:dilatedqfi}
    \begin{equation}
        F[\rho_R] \gtrsim F[\psi_{SR}] - c\tau^*
    \end{equation}
    for some $c = \poly(N)$. The exact scaling of $c$ with $N$ can depend on the specific encoding.
\end{proof}

As a corollary, we have that for a gapped dynamics, 

\begin{equation}
    \lim_{T\to \infty} \frac{F_{SR}}{T} = \lim_{T\to \infty} \frac{F_{R}}{T}
\end{equation}

\section{Haar averaged QFI} \label{sec:jointQFI}
Now, we pay attention to the full output state upon $t$ iterations of the isometry $V$. Starting from the system state $\ket{\psi} \in \mathcal{H}_D$, we have the state at time $t$ denoted $\ket{\Psi;t} \in \C^{D\times d^t}$ as,
\begin{align}
    \ket{\Psi;t} = \quad
\begin{tikzpicture}[scale=0.5,baseline={([yshift=-0.65ex] current bounding box.center) }]
        \GTensor{0,0}{1.2}{.6}{\small $V$}{7};
      \GTensor{1*\singledx,0}{1.2}{.6}{\small $V$}{7};
      \GTensor{2*\singledx,0}{1.2}{.6}{\small $V$}{7};
      \GTensor{3*\singledx,0}{1.2}{.6}{\small $V$}{7};
      \GTensor{4*\singledx,0}{1.2}{.6}{\small $V$}{7};
    \DTensor{5*\singledx,0}{1.2}{.6}{\small ${\psi}$}{7};
\end{tikzpicture}
    \quad 
\end{align}
We recall that the initial state is not parameter-dependent. Thus, we have,
\begin{align}
\partial_\theta\ket{\Psi;t} = \mathlarger{\sum}_{\tau=1}^t    \quad\dots
\begin{tikzpicture}[scale=0.5,baseline={([yshift=-0.65ex] current bounding box.center) }]
        \GTensor{0,0}{1.2}{.6}{\small $V$}{7};
      \GTensor{1*\singledx,0}{1.2}{.6}{\small $V$}{7};
      \draw (1*\singledx,-1.8) node {\small $m_{\tau-1}$};
      \GTensor{2*\singledx,0}{1.2}{.6}{\small $\dot{V}$}{7};
      \draw (2*\singledx,-1.8) node {\small $m_{\tau}$};
      \GTensor{3*\singledx,0}{1.2}{.6}{\small $V$}{7};
      \draw (3*\singledx,-1.8) node {\small $m_{\tau+1}$};
      \GTensor{4*\singledx,0}{1.2}{.6}{\small $V$}{7}
      \DTensor{5*\singledx,0}{1.2}{.6}{\small ${\psi}$}{7};
\end{tikzpicture}
    \quad 
\end{align} 
The QFI is given as,
\begin{equation}
    F_{SR}(t;\psi,\theta,V) = 4\left(\inner{\partial_\theta\Psi;t}{\partial_\theta\Psi;t} - |\inner{\partial_\theta\Psi;t}{\Psi;t}|^2\right)
\end{equation}
We formally root out the initial state dependence by performing a Haar average over all possible initial states. 
\begin{equation}
  \langle F_{SR}\rangle (t) = \mathbb{E}_{\psi\sim\mu_N}[F_{SR}(t;\psi,\theta,V)] = 4\left( \mathbb{E}_{\psi\sim\mu_N}\inner{\partial_\theta\Psi;t}{\partial_\theta\Psi;t} -  \mathbb{E}_{\psi\sim\mu_N} |\inner{\partial_\theta\Psi;t}{\Psi;t}|^2\right)
\end{equation}
where $\mu_H$ denotes the Haar measure. 

\subsection{Unique fixed point}\label{subsec:haarunique}

We start again with the assumption that the dynamics has a unique steady state, and the gapped transfer matrix has a unique fixed point. Now, the first term is, 
\begin{align}
\underset{\psi \sim \mu_H}{\mathbb{E}}\quad
\begin{tikzpicture}[scale=0.5,baseline={([yshift=-0.65ex] current bounding box.center) }]
\GFTensor{0,2}{1.2}{.6}{\small $\partial_\theta V$}{1};
\GFTensor{0,0}{1.2}{.6}{\small $\partial_\theta \bar{V}$}{0};
\draw (1,1) node {\small \dots};
\draw[very thick,draw=red] (-1.2,0)--(-1.2,2);
\DTensor{3.6,0}{1.2}{.6}{\small $\bar{\psi}$}{7};
\DTensor{3.6,2}{1.2}{.6}{\small ${\psi}$}{7};
\end{tikzpicture}
  =   \frac{1}{D}\quad
\begin{tikzpicture}[scale=0.5,baseline={([yshift=-0.65ex] current bounding box.center) }]
\GFTensor{0,2}{1.2}{.6}{\small $\partial_\theta V$}{1};
\GFTensor{0,0}{1.2}{.6}{\small $\partial_\theta \bar{V}$}{0};
\draw (1,1) node {\small \dots};
\draw[very thick,draw=red] (-1.2,0)--(-1.2,2);
\draw[very thick,draw=red] (3,0)--(3,2);
\end{tikzpicture} \quad = \frac{1}{D} \left[ \tilde{\alpha}(t) + \sum_{\tau=0}^{t-2} \tilde{\beta}_\tau(t) \right]
\end{align} 

where

\begin{align}\label{eq:alphatwiddle}
  \tilde{\alpha}(t) := \quad
\begin{tikzpicture}[scale=0.5,baseline={([yshift=-0.65ex] current bounding box.center) }]
      \GTensor{0,0}{1.2}{.6}{\small $\dot{V}$}{7};
      \GTensor{0,-2}{1.2}{.6}{\small $\dot{\bar{V}}$}{0};
    \FUnitary{1.75,-2.25}{1}{.6}{\small $\sum T$}{6};
        \draw[very thick,draw=red] (2.355,0)--(3.5,0);
        \draw[very thick,draw=red] (2.355,-2)--(3.5,-2);
        \draw[very thick,draw=red] (3.5,-2)--(3.5,0);
        \draw[very thick,draw=red] (-2,0)--(-1.2,0);
        \draw[very thick,draw=red] (-2,-2)--(-1.2,-2);
        \draw[very thick,draw=red] (-2,-2)--(-2,0);
\end{tikzpicture}
    \quad 
\end{align} 
with $\sum T:= \sum_{\tau = 0}^{t-1} T_V^\tau$, and,

\begin{align}\label{eq:betastwiddle}
  \tilde{\beta}_\tau(t) := \quad
 \begin{tikzpicture}[scale=0.5,baseline={([yshift=-0.65ex] current bounding box.center) }]
       \GTensor{-5,0}{1.2}{.6}{\small $\dot{V}$}{7};
       \GTensor{-5,-2}{1.2}{.6}{\small $\bar{V}$}{0};
         \draw[very thick,draw=red] (-5-1.15,-2)--(-5-1.15,0);
         \FUnitary{-3.25,-2.25}{1}{.6}{\small $T_V^\tau$}{5};
       \GTensor{-1.45,0}{1.2}{.6}{\small ${V}$}{7};
       \GTensor{-1.45,-2}{1.2}{.6}{\small $\dot{\bar{V}}$}{0};
           \draw[very thick,draw=red] (-0.5,-2)--(0.15,-2);
           \draw[very thick,draw=red] (-0.5,0)--(0.15,0);
     \FUnitary{0.75,-2.25}{1}{.6}{\small $M_{\tau}$}{6};
         \draw[very thick,draw=red] (1.35,0)--(2,0);
         \draw[very thick,draw=red] (1.35,-2)--(2,-2);
         \draw[very thick,draw=red] (2,-2)--(2,0);
       \end{tikzpicture}
     \quad
 \end{align} 
 where $M_{\tau} := 2\sum_{\tau'=0}^{t-2-\tau}T_V^{\tau'}$. But, note that $\tilde{\alpha}(t)$ is \textit{exactly} equal $D\alpha t$, as all eigenmatrices of the Liouvillian with non-zero eigenvalue have trace zero (see Sec.~\ref{sec:statbasisprops} for details). Similarly, we have that $\tilde{\beta}_\tau(t) = 2D(t-\tau-1)\beta_\tau$. As a result, the first term is exactly $\alpha t + 2\sum_{\tau=0}^{t-2}(t-\tau-1)\beta_\tau$.

 Now, the second term is, 

\begin{align}
\underset{\psi \sim \mu_H}{\mathbb{E}}\quad
\begin{tikzpicture}[scale=0.5,baseline={([yshift=-0.65ex] current bounding box.center) }]
\GFTensor{0,2}{1.2}{.6}{\small $\partial_\theta V$}{1};
\GFTensor{0,0}{1.2}{.6}{\small $\bar{V}$}{0};
\draw (1,1) node {\small \dots};
\draw[very thick,draw=red] (-1.2,0)--(-1.2,2);
\DTensor{3.6,0}{1.2}{.6}{\small $\bar{\psi}$}{7};
\DTensor{3.6,2}{1.2}{.6}{\small ${\psi}$}{7};
\GFTensor{0,-2}{1.2}{.6}{\small $V$}{1};
\GFTensor{0,-4}{1.2}{.6}{\small $\partial_\theta\bar{V}$}{0};
\draw (1,-3) node {\small \dots};
\draw[very thick,draw=red] (-1.2,-4)--(-1.2,-2);
\DTensor{3.6,-4}{1.2}{.6}{\small $\bar{\psi}$}{7};
\DTensor{3.6,-2}{1.2}{.6}{\small ${\psi}$}{7};
\end{tikzpicture} \quad =   
&\frac{1}{D(D+1)}\left(\quad
\begin{tikzpicture}[scale=0.5,baseline={([yshift=-0.65ex] current bounding box.center) }]
\GFTensor{0,2}{1.2}{.6}{\small $\partial_\theta V$}{1};
\GFTensor{0,0}{1.2}{.6}{\small $\bar{V}$}{0};
\draw (1,1) node {\small \dots};
\draw[very thick,draw=red] (-1.2,0)--(-1.2,2);
\draw[very thick,draw=red] (3,0)--(3,2);
\GFTensor{0,-2}{1.2}{.6}{\small $V$}{1};
\GFTensor{0,-4}{1.2}{.6}{\small $\partial_\theta\bar{V}$}{0};
\draw (1,-3) node {\small \dots};
\draw[very thick,draw=red] (-1.2,-4)--(-1.2,-2);
\draw[very thick,draw=red] (3,-4)--(3,-2);
\end{tikzpicture}
  \quad  +  \quad 
\begin{tikzpicture}[scale=0.5,baseline={([yshift=-0.65ex] current bounding box.center) }]
\GFTensor{0,2}{1.2}{.6}{\small $\partial_\theta V$}{1};
\GFTensor{0,0}{1.2}{.6}{\small $\bar{V}$}{0};
\draw (1,1) node {\small \dots};
\draw[very thick,draw=red] (-1.2,0)--(-1.2,2);
\GFTensor{0,-2}{1.2}{.6}{\small $V$}{1};
\GFTensor{0,-4}{1.2}{.6}{\small $\partial_\theta\bar{V}$}{0};
\draw (1,-3) node {\small \dots};
\draw[very thick,draw=red] (-1.2,-4)--(-1.2,-2);
\draw[very thick,draw=red] (3,0)--(3,-2);
\draw[very thick,draw=red] (4,2)--(4,-4);
\draw[very thick,draw=red] (3,2)--(4,2);
\draw[very thick,draw=red] (3,-4)--(4,-4);
\end{tikzpicture} \quad\right) \quad = \frac{1}{D(D+1)}(|\tilde{\gamma}(t)|^2 + \tilde{\delta}(t))
\end{align} 

where we define 

\begin{align}\label{eq:gammatwiddle}
  \tilde{\gamma}(t) := \quad
\begin{tikzpicture}[scale=0.5,baseline={([yshift=-0.65ex] current bounding box.center) }]
      \GTensor{0,0}{1.2}{.6}{\small $\dot{V}$}{7};
      \GTensor{0,-2}{1.2}{.6}{\small $\bar{V}$}{0};
    \FUnitary{1.75,-2.25}{1}{.6}{\small $\sum T$}{6};
        \draw[very thick,draw=red] (2.355,0)--(3.5,0);
        \draw[very thick,draw=red] (2.355,-2)--(3.5,-2);
        \draw[very thick,draw=red] (3.5,-2)--(3.5,0);
        \draw[very thick,draw=red] (-2,0)--(-1.2,0);
        \draw[very thick,draw=red] (-2,-2)--(-1.2,-2);
        \draw[very thick,draw=red] (-2,-2)--(-2,0);
\end{tikzpicture}
    \quad 
\end{align}

\begin{align}
  \tilde{\delta}(t) := \quad
\begin{tikzpicture}[scale=0.5,baseline={([yshift=-0.65ex] current bounding box.center) }]
      \GTensor{0,0}{1.2}{.6}{\small $\dot{V}$}{7};
      \GTensor{0,-2}{1.2}{.6}{\small $\bar{V}$}{0};
    \FUnitary{1.75,-2.25}{1}{.6}{\small $\sum T$}{6};
    \draw[very thick,draw=red] (2.355,0)--(5,0);
    \draw[very thick,draw=red] (2.355,-2)--(3.5,-2);
    \draw[very thick,draw=red] (-2,0)--(-1.2,0);
    \draw[very thick,draw=red] (-2,-2)--(-1.2,-2);
    \draw[very thick,draw=red] (-2,-2)--(-2,0);
    \GTensor{0,-4}{1.2}{.6}{\small $V$}{7};
        \GTensor{0,-6}{1.2}{.6}{\small $\dot{\bar{V}}$}{0};
        \FUnitary{1.75,-6.25}{1}{.6}{\small $\sum T$}{6};
        \draw[very thick,draw=red] (2.355,-4)--(3.5,-4);
        \draw[very thick,draw=red] (2.355,-6)--(5,-6);
        \draw[very thick,draw=red] (-2,-4)--(-1.2,-4);
        \draw[very thick,draw=red] (-2,-6)--(-1.2,-6);
        \draw[very thick,draw=red] (-2,-6)--(-2,-4);
        \draw[very thick,draw=red] (3.5,-2)--(3.5,-4);
        \draw[very thick,draw=red] (5,-6)--(5,0);
          \end{tikzpicture}
    \quad 
\end{align} 

With the same reasoning as before, one has that $\tilde{\gamma}(t) = D\gamma t$ and $\tilde{\delta}(t) = D |\gamma|^2 t^2 + c_1$ with $c_1$ decaying exponentially in time. To see this, note that $\Sigma T = t|\rho_{ss}\rrangle \llangle\mathbbm{1}| + \Sigma \tilde{T}$, and the cross-terms between the two $\Sigma T$ cancel out owing to the fact that non-zero eigenmatrices are traceless. The $\Sigma \tilde{T} \times \Sigma \tilde{T}$ terms decays in time as $|\Lambda(\tilde{T}_V)| < 1$. Thus, the second term is $|\gamma|^2t^2 + c_1/D$ with $c_1 = O(||\mathbf{K}^\dagger \dot{\mathbf{K}}||_\infty^2 e^{-2\Delta t})$. Physical Hamiltonian encodings have norm $\poly(N) = \poly(\log{D})$. Thus, we have, 

\begin{result}[Haar averaged QFI]
  For a dynamics with a unique steady state,
  \begin{align}
    \underset{\psi \sim \mu_H}{\mathbb{E}} F_{SR}(t;\psi,\theta,V) &= 
      4 \left( \alpha t + 2\sum_{\tau=0}^{t-2}(t-\tau-1)\beta_\tau -  |\gamma|^2t^2 \right)  + c(t)
  \end{align}
  for $c(t) = O(e^{-2\Delta t}\poly(\log D))$.
\end{result}


An implication of this bound is that for an mixing system, entanglement in the initial state cannot act as a resource for Heisenberg limited sensing, either of the full state or in the case of noisy metrology.

\subsection{Generic case \label{subsec:haargeneric}}

Now, we turn to the non-normal case and provide the average value of the joint QFI. For purely imaginary eigenmatrices of the Liouvillian, note that the trace condition projects the variables onto the strictly-zero eigenspace of the Liouvillian (see Sec.~\ref{sec:statbasisprops} for details). We thus have,

\begin{align}
  \frac{\tilde{\alpha}(t)}{Dt}:= \sum_{\mu} \tr\left(J^\mu \frac{\mathbbm{1}}{D}\right) \quad
\begin{tikzpicture}[scale=0.5,baseline={([yshift=-0.65ex] current bounding box.center) }]
      \GTensor{0,0}{1.2}{.6}{\small $\dot{V}$}{7};
      \GTensor{0,-2}{1.2}{.6}{\small $\dot{\bar{V}}$}{0};
      \draw[very thick,draw=red] (1.2,0)--(2.03,0);
    \DTensor{2,-1}{1.}{.6}{\small $\Psi_{\mu}$}{1};
          \draw[very thick,draw=red] (1.2,-2)--(2.03,-2);
        \draw[very thick,draw=red] (-2,0)--(-1.2,0);
        \draw[very thick,draw=red] (-2,-2)--(-1.2,-2);
    \UTensor{-2,-1}{1.}{.6}{\small $J^{\mu}$}{1};
\end{tikzpicture}
    \quad \equiv \quad  \sum_{\mu} \tr\left(J^\mu \frac{\mathbbm{1}}{D}\right) \alpha^\mu_\mu 
\end{align} 

\begin{align}
  \frac{\tilde{\beta}_\tau(t)}{D}:= 2(t-\tau-1)\sum_{\mu} \tr\left(J^\mu \frac{\mathbbm{1}}{D}\right) \quad
\begin{tikzpicture}[scale=0.5,baseline={([yshift=-0.65ex] current bounding box.center) }]
      \GTensor{-5,0}{1.2}{.6}{\small $\dot{V}$}{7};
      \GTensor{-5,-2}{1.2}{.6}{\small $\bar{V}$}{0};
          \UTensor{-7,-1}{1.}{.6}{\small $J^{\mu}$}{1};
    \FUnitary{-3.25,-2.25}{1}{.6}{\small $T_V^\tau$}{5};
    \draw[very thick,draw=red] (-6,-2)--(-7,-2);
        \draw[very thick,draw=red] (-6,0)--(-7,0);
      \GTensor{-1.45,0}{1.2}{.6}{\small ${V}$}{7};
      \GTensor{-1.45,-2}{1.2}{.6}{\small $\dot{\bar{V}}$}{0};
          \DTensor{.5,-1}{1.}{.6}{\small $\Psi_{\mu}$}{1};
          \draw[very thick,draw=red] (-0.5,-2)--(0.53,-2);
          \draw[very thick,draw=red] (-0.5,0)--(0.53,0);
      \end{tikzpicture}
    \quad  \equiv \quad  2(t-\tau-1)\sum_{\mu} \tr\left(J^\mu \frac{\mathbbm{1}}{D}\right)\beta_\mu^\mu 
\end{align} 

\begin{align}
  \frac{\tilde{\gamma}(t)}{Dt} := \sum_{\mu} \tr\left(J^\mu \frac{\mathbbm{1}}{D}\right) \quad
\begin{tikzpicture}[scale=0.5,baseline={([yshift=-0.65ex] current bounding box.center) }]
      \GTensor{0,0}{1.2}{.6}{\small $\dot{V}$}{7};
      \GTensor{0,-2}{1.2}{.6}{\small $\bar{V}$}{0};
      \draw[very thick,draw=red] (1.2,0)--(2.03,0);
    \DTensor{2,-1}{1.}{.6}{\small $\Psi_{\mu}$}{1};
          \draw[very thick,draw=red] (1.2,-2)--(2.03,-2);
        \draw[very thick,draw=red] (-2,0)--(-1.2,0);
        \draw[very thick,draw=red] (-2,-2)--(-1.2,-2);
    \UTensor{-2,-1}{1.}{.6}{\small $J^{\mu}$}{1};
\end{tikzpicture}
    \quad  \equiv \quad  \sum_{\mu} \tr\left(J^\mu \frac{\mathbbm{1}}{D}\right) \gamma^\mu_\mu 
\end{align}

\begin{align}
  \frac{\tilde{\delta}(t)}{D^2 t^2}:= \sum_{\mu\nu} \tr\left(J^\mu J^\nu \right) \quad
\begin{tikzpicture}[scale=0.5,baseline={([yshift=-0.65ex] current bounding box.center) }]
      \GTensor{0,0}{1.2}{.6}{\small $\dot{V}$}{7};
      \GTensor{0,-2}{1.2}{.6}{\small $\bar{V}$}{0};
      \draw[very thick,draw=red] (1.2,0)--(2.03,0);
    \DTensor{2,-1}{1.}{.6}{\small $\Psi_{\mu}$}{1};
          \draw[very thick,draw=red] (1.2,-2)--(2.03,-2);
        \draw[very thick,draw=red] (-2,0)--(-1.2,0);
        \draw[very thick,draw=red] (-2,-2)--(-1.2,-2);
    \UTensor{-2,-1}{1.}{.6}{\small $\mathbbm{1}/D$}{1};
    \begin{scope}[shift={(6cm,0cm)}, scale=1.]
      \GTensor{0,0}{1.2}{.6}{\small $\dot{\bar{V}}$}{7};
      \GTensor{0,-2}{1.2}{.6}{\small $\bar{V}$}{0};
      \draw[very thick,draw=red] (1.2,0)--(2.03,0);
    \DTensor{2,-1}{1.}{.6}{\small $\Psi_{\nu}$}{1};
          \draw[very thick,draw=red] (1.2,-2)--(2.03,-2);
        \draw[very thick,draw=red] (-2,0)--(-1.2,0);
        \draw[very thick,draw=red] (-2,-2)--(-1.2,-2);
    \UTensor{-2,-1}{1.}{.6}{\small $\mathbbm{1}/D$}{1};
  \end{scope}    
\end{tikzpicture}
    \quad  + O(e^{-\Delta t}) \equiv  \quad \sum_{\mu\nu} \tr\left(J^\mu J^\nu \right) \gamma_\mu \bar{\gamma}_\nu + O(e^{-\Delta t})
\end{align}

Then, we have that, 

\begin{result}[Haar average for degenerate MPS]
  \begin{align}
    \underset{\psi \sim \mu_H}{\mathbb{E}} F_{SR}(t;\psi,\theta,V) = &\sum_\mu \tr\left(J^\mu \frac{\mathbbm{1}}{D}\right) \left( t\alpha^\mu_\mu + 2\sum_\tau (t-\tau-1) \beta_\mu^\mu \right) \\ & -  t^2 \frac{D}{D+1} \left[ \sum_{\mu\nu} \tr\left(J^\mu \frac{\mathbbm{1}}{D}\right) \tr\left(J^\nu \frac{\mathbbm{1}}{D}\right) \gamma_{\mu}^\mu \bar{\gamma}_\nu^\nu + \tr(J^\mu J^\nu) \gamma_\mu \bar{\gamma}_\nu \right] + c'(t)
  \end{align}     
  for $c'(t) = O(e^{-2\Delta t}\poly(\log D))$.
\end{result}



\section{Properties of the steadystate spectrum} \label{sec:statbasisprops}
We outline some useful properties of the (left/right) eigenspectrum of a Liouvillian superoperator. We denote the vectorization of $\mathcal{L}(O)$as $\mathcal{L}|O\rrangle$, and other similar notations. In general, we have the following form of the Liouvillian,
\begin{equation}
    \mathcal{L} = \sum_{\lambda,\mu}\lambda |\Psi_{\lambda\mu}\rrangle \llangle J^{\lambda\mu}|
\end{equation}
where $\lambda$ denotes the eigenvalue with $\text{Re}(\lambda) \leq 0$ and $\mu$ (possibly depending on $\lambda$) denotes a degeneracy factor. In the main text we are only concerned with the stationary basis with $\lambda = \iota\Delta$. Moreover, we have biorthogonality of this linear system, viz., $\llangle J^{\lambda\mu}|\Psi_{\lambda'\nu}\rrangle = \delta^{\lambda}_{\lambda'}\delta^\mu_{\nu}$. For any operator $O \in L(\mathcal{H})$, we have that
\begin{align}\label{eq:daggerclosure}
    \mathcal{L}(O) = 0 &\Leftrightarrow \mathcal{L}(O^\dagger) = 0 \\ 
    \mathcal{L}^\dagger(O) = 0 &\Leftrightarrow \mathcal{L}^\dagger(O^\dagger) = 0
\end{align}
Now, let $\{J^\mu\}$ be a basis of the left eigenmatrices with eigenvalue zero of $\mathcal{L}$. We can convert this to a Hermitian basis, with $J^\mu = J^{\mu\dagger}$ using \eqref{eq:daggerclosure}. Furthermore, any Hermitian operator has a decomposition $H = A - B$ with $A, B \geq 0$ being positive-semidefinite. With this, we now settle on a basis $\{J^\mu\}$ which is PSD, and further normalize $\text{Tr}(J^\mu) = 1 ~\forall \mu$. Now, since the Jordan block is diagonal, there exist $\Psi_\nu$ the corresponding right eigenmatrices which are biorthogonal, viz., 
\begin{eqnarray}
    \llangle J^\mu|\Psi_\nu\rrangle = \delta^\mu_\nu
\end{eqnarray}
Moreover, the right eigenmatrices can be chosen to be Hermitian. This in-turn ensures that $c_\mu := \llangle J^\mu|\rho_{in}\rrangle \geq 0$ as both $J^\mu$ and $\rho_{in}$ are PSD. Furthermore, the PSD property implies that 
\begin{eqnarray}
    c_\mu = \llangle J^\mu|\rho_{in}\rrangle = \text{Tr}(J^\mu\rho_{in}) \leq \text{Tr}(J^\mu)\text{Tr}(\rho_{in}) = 1
\end{eqnarray}
Summarising, we have $c_\mu \in [0,1]$. Note that while this degree of freedom and choice of the left eigenmatrices is well defined, this in-turn fixes the trace of the corresponding right-eigenvectors via the biorthogonality condition, and they are not trace one in general. Nonetheless, the right traces do not enter the analysis explicity.

Now, we note some properties for the $\Delta \neq 0$ purely imaginary eigenvalues of the Liouvillian. Note that, for any $O \in L(\mathcal{H})$, 
\begin{align}
    \mathcal{L}(O) = \iota\Delta O &\Leftrightarrow \mathcal{L}(O^\dagger) = -\iota\Delta O^\dagger \\ 
    \mathcal{L}^\dagger(O) = \iota\Delta O &\Leftrightarrow \mathcal{L}^\dagger(O^\dagger) = -\iota\Delta O^\dagger 
\end{align}
Thus, the eigenmatrices in the $\Delta \neq 0$ subspace cannot be made Hermitian. The physical density matrices consisting of such operators are indeed $\Psi_{\Delta\mu} + \Psi_{\Delta\mu}^\dagger$ and $\iota(\Psi_{\Delta\mu} - \Psi_{\Delta\mu}^\dagger)$. Additionally, we have that
\begin{align}
    &c_{\Delta\mu} := \llangle J^{\Delta\mu}|\rho_{in}\rrangle = \text{Tr}(J^{\Delta\mu\dagger}\rho_{in}) \\ 
    &\implies  c_{\Delta\mu}^* = \text{Tr}(\rho_{in}J^{\Delta\mu}) = c_{-\Delta\mu}
\end{align}
Also, any non-zero (left or right) eigenoperators are traceless, viz.,
\begin{eqnarray}
    \mathcal{L}^{(\dagger)}(O) = \lambda O,~\lambda \neq 0 \implies \text{Tr}(O) = 0
\end{eqnarray}
This fact is used for analyzing cross-terms in the average joint QFI of Sec.~\ref{sec:jointQFI}.
\section{Relation to the formula in previous works}\label{sec:molmerformula}
A numerical method for computing the total QFI of the joint state of the system and the radiation was given in \cite{molmer_14}. We briefly recap it here, and compare the asymptotic rate to that derived in Sec.~\ref{sec:qfinormalproof} from the MPS calculations. For two values of the parameter $\theta_1,\theta_2\in\Theta$, we define a `tilted' Liouivillian, given in the vectorized form as,
\begin{align}
    &\overset{\leftrightarrow}{\mathcal{L}}_{\theta_1 \theta_2} = -\iota H(\theta_1) \otimes \mathbbm{1} + \iota \mathbbm{1} \otimes H(\theta_2)^\top \\
    &\quad + \sum_{n \geq 1} \left[ L_n(\theta_1) \otimes \bar{L}_n(\theta_2)  - \frac{1}{2} \left( L_n^{\dagger}(\theta_1) L_n(\theta_1) \otimes \mathbbm{1} + \mathbbm{1} \otimes L_n^\top(\theta_2) \bar{L}_n(\theta_2) \right) \right]
\end{align}
where we assume that the parameter is encoded in the Hamiltonian as well as the jump operator. Then, one evolves the initial state $|\rho_0\rrangle$ under this tilted dynamics to get $|\rho^{\theta_1\theta_2}(t)\rrangle = \exp{\overset{\leftrightarrow}{\mathcal{L}}_{\theta_1 \theta_2} t}|\rho_0\rrangle$, and computes the total QFI as \cite{molmer_14},
\begin{equation} \label{eq:MolmerFormula}
    F(t;\theta) = 4\partial_{\theta_1}\partial_{\theta_2}\log{|\llangle\mathbbm{1}|\rho^{\theta_1\theta_2}(t)\rrangle|}~\mid_{\theta_1=\theta_2=\theta}
\end{equation}
The key idea behind this is that the QFI of the total state $\Psi$ can be written as
\begin{align}
4\partial_{\theta_1}\partial_{\theta_2}\log|\inner{\Psi(\theta_1)}{\Psi(\theta_2)}|\mid_{\theta_1=\theta_2=\theta} =  4\partial_{\theta_1}\partial_{\theta_2}\log|\tr\ket{\Psi(\theta_2)}\bra{\Psi(\theta_1)}|\mid_{\theta_1=\theta_2=\theta}
\end{align}
and the trace over the emission field (environment) can be taken inside, leading to the $\theta_1$ evolution on the left-multiplied operators, and $\theta_2$ evolution on the right-multiplied operators. Finally, the system trace, represented in \cref{eq:MolmerFormula} as the inner product with $|\mathbbm{1}\rrangle$, leads to the desired result.

Now, the following ansatz is made for long times $t$ \cite{molmer_14},
\begin{equation}
    |\llangle\mathbbm{1}|\rho^{\theta_1\theta_2}(t)\rrangle| \sim e^{t \lambda(\theta_1,\theta_2)} c(\theta_1,\theta_2)
\end{equation}
where $\lambda$ is the leading eigenvalue of the tilted generator, which obeys 
\begin{equation}
    \lim_{\theta_1,\theta_2 \to \theta} \lambda(\theta_1,\theta_2) = 0
\end{equation}
and $c(\theta_1,\theta_2)$ is an expansion coefficient of the initial state onto the leading eigenmatrix. Assuming this ansatz is true, one gets that the QFI is $4t \partial_{\theta_1}\partial_{\theta_2} \lambda(\theta_1,\theta_2) +  O(1)$. . Now, a perturbation theory expansion of this derivative was derived in Supplemental Sec.~I~(A-B) of Ref.~\cite{molmer_14}. In terms of the variables defined in the main text, this expansion for $\partial_{\theta_1}\partial_{\theta_2} \lambda(\theta_1,\theta_2)$ reads, 
\begin{equation}
    \partial_{\theta_1}\partial_{\theta_2} \lambda(\theta_1,\theta_2) = \tr\left[\sum_{n > 0} \dot{L}_n \rho_{ss}\dot{L}_n^\dagger \right] - \tr\left[\mathcal{L}_L \circ \mathcal{L}^{-1}_{n >0} \circ \mathcal{L}_R \right]  
\end{equation}
with 
\begin{align}
    \mathcal{L}_L &:= \left[-i\dot{H} - \frac{1}{2} \sum_{n>0} \partial_\theta\left(L_n^{\dagger} L_n\right)\right] \rho_{ss} + \sum_{n>0} \dot{L}_n \rho_{ss} L_n^{\dagger} \\
    \mathcal{L}_R &:= \rho_{ss} \left[\iota \dot{H} - \frac{1}{2} \sum_{n>0} \partial_\theta\left(L_n^{\dagger} L_n\right)\right] + \sum_{n>0} L_n \rho_{ss} \dot{L}_n^{\dagger}
\end{align}
and $\mathcal{L}^{-1}_{n >0}$ is the Liouvillian pseudoinverse in the decaying subspace.
This is the same as the approximation made in Sec.~\ref{sec:qfinormalproof}
\begin{equation}
    (\alpha-|\gamma|^2) + 2\left(\sum_{\mu > 0} K_\mu \tau_\mu\right)
\end{equation}
in \cref{eq:asymptoticrate1}, upon taking the continuous time limit $dt \to 0$  where we get the coefficients $K_\mu$ as defined in Sec.~\ref{sec:qfinormalproof}. This verifies that the two formulations are asymptotic consistent. Moreover, the MPS calculations can be seen as a trotterized version of \cref{eq:MolmerFormula}, leading to a much more transparent result for the QFI as a function of time.
\section{Relation to noisy metrology} \label{sec:normboundssec}
Given a parametrized channel $\mathcal{E}$ described by the Kraus operators $\mathbf{K} = [K_0,K_1,\dots,K_{d-1}]$, we define the effective Hamiltonian as $H_s = \iota \mathbf{K}^\dagger \dot{\mathbf{K}}$. Further, we define the \textit{Kraus span} as $\mathsf{K}:= \text{span}(\{K_i^\dagger K_j \mid 0 \leq i,j < d\})$. This is a generalization of the Hamiltonian-in-Lindblad span condition outlined in the main text, and subsumes the \(H_s \in \mathsf{L} = \text{span}\left\{\mathbbm{1}, L_m, L_m^\dagger, L_m^\dagger L_n \right\}_{m,n}\). We say that the Hamiltonian is in the Kraus span (HKS) iff $H_s \in \mathsf{K}$. If not, we abbreviate it as HKNS, i.e., $H_s = H_\perp + H_{||}$ with $H_{\perp} \neq 0$ being orthogonal to $\mathsf{K}$.

\begin{theorem}[Theorem 1. of Ref.~\cite{zhou2021asymptotic}]
    Noisy QFI $F_S(t) = \Theta(t^2)$ iff the HKNS condition holds. Otherwise, $F_S(t) = \Theta(t)$.
\end{theorem}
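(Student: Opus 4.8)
The plan is to prove the two complementary implications by sandwiching the attainable noisy QFI between a linear and a quadratic bound, with the algebraic HKS/HKNS alternative deciding which one is tight. Two ingredients require no hypothesis. First, an unconditional quadratic upper bound: writing the joint state as $\ket{\Psi;t}=V_\theta^{\otimes t}\ket{\psi_0}$, its $\theta$-derivative is a sum of $t$ vectors obtained by replacing one isometry $V_\theta$ with $\dot V_\theta$, each of norm at most $\|\dot V_\theta\|$, so $F_{SR}(t)\le 4t^2\|\dot V_\theta\|^2$ and hence $F_S(t)\le F_{SR}(t)=O(t^2)$ by monotonicity of the QFI under the partial trace over $R$. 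Second, an unconditional linear lower bound whenever the channel genuinely depends on $\theta$: spending the $t$ channel uses in parallel on independent single-shot probes gives $F_S(t)\ge t\,F_S(1)=\Omega(t)$. So the entire content of the theorem is whether the optimal growth rate of $F_S(t)$ is $t$ or $t^2$.

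For the HKS case I would use the gauge freedom of Kraus representations. Any $\tilde K_i=\sum_j u_{ij}(\theta)K_j$ with $u(\theta)$ unitary represents the same channel $\mathcal{E}_\theta$, and a one-line computation gives $\tilde H_s:=\iota\sum_i\tilde K_i^\dagger\dot{\tilde K}_i=H_s+\sum_{kj}h_{kj}K_k^\dagger K_j$ with $h:=\iota u^\dagger\dot u$ Hermitian; i.e.\ the effective sensing Hamiltonian is well defined only modulo the (self-adjoint) Kraus span $\mathsf{K}$. The hypothesis $H_s\in\mathsf{K}$ therefore lets one choose, at the working value of $\theta$, a Kraus gauge in which $\tilde H_s=0$; in that gauge the channel-extension / purification bound for sequential, ancilla-assisted, adaptive strategies (\cite{normbounds1_fujiwara2008fibre,normbounds2_demkowicz2012elusive}) collapses to $F_S(t)\le 4t\,\|\sum_i\dot{\tilde K}_i^\dagger\dot{\tilde K}_i\|=O(t)$. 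Combined with the $\Omega(t)$ lower bound this gives $F_S(t)=\Theta(t)$.

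The substantive half is HKNS $\Rightarrow F_S(t)=\Theta(t^2)$, whose heart is a quantum error-correction construction. I would exhibit a code on $\mathcal{S}$ (possibly enlarged by a noiseless ancilla) with projector $P$ such that (i) the noise is exactly correctable on the code, i.e.\ the Knill--Laflamme-type conditions $PK_i^\dagger K_jP\propto P$ hold for all $i,j$, and (ii) the logical sensing Hamiltonian $PH_sP$ is \emph{not} proportional to $P$. The existence of such a code is precisely where HKNS enters: because $H_s\notin\mathsf{K}$ and $\mathsf{K}$ is a self-adjoint operator subspace, Hahn--Banach separation in the Hilbert--Schmidt inner product yields a Hermitian $S$ with $\langle S,K_i^\dagger K_j\rangle=0$ for all $i,j$ but $\langle S,H_s\rangle\neq 0$, and promoting $S$ to an honest, physically realizable code is exactly the step handled by the semidefinite-programming analysis of \cite{zhou2021asymptotic}. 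Given the code, a repeat-and-correct protocol — encode in it a state maximizing $\var[PH_sP]$, and apply the recovery channel after each channel use — effectively implements the logical unitary $e^{\iota\theta t\,\overline H_s}$ with $\overline H_s=PH_sP$ non-constant, so $F_S(t)\ge 4t^2\var[\overline H_s]=\Omega(t^2)$; with the unconditional $O(t^2)$ upper bound this is $\Theta(t^2)$. The main obstacle is exactly conditions (i)--(ii): upgrading the separating functional to a genuine code whose residual, uncorrectable error vanishes so that no $t^2\!\to\! t$ degradation accumulates over the $t$ uses — this is the technical core of \cite{zhou2021asymptotic} rather than a one-line Hahn--Banach argument.
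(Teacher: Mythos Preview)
The paper does not prove this theorem. It is stated as Theorem~1 of Ref.~\cite{zhou2021asymptotic} and simply quoted as a known result; the only commentary the paper offers is a one-sentence intuition (``one can project onto the $H_\perp$ part by decoupling the radiation quanta from the system at each time step'') before moving on to the HKS case it actually cares about. There is therefore no ``paper's own proof'' to compare against.

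That said, your sketch is a faithful outline of the argument in the cited reference and its predecessors. The HKS direction --- gauge away $H_s$ using $h=\iota u^\dagger\dot u$ so that $B=\dot{\mathbf K}^\dagger\mathbf K=0$, then invoke the Fujiwara--Imai/Demkowicz purification bound with only the $\|A\|$ term surviving --- is exactly the mechanism the paper itself exploits in Lemma~\ref{lem:HKSImplies} (where the same $h$ is pushed onto the radiation leg). Your HKNS direction correctly identifies the Knill--Laflamme conditions $PK_i^\dagger K_jP\propto P$ together with $PH_sP\not\propto P$ as the target, and you are right that the nontrivial step is \emph{constructing} such a code from the separating hyperplane; this is precisely the SDP/complementary-slackness argument of \cite{zhou2021asymptotic}, not a soft Hahn--Banach consequence. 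One small sharpening: the linear lower bound $F_S(t)\ge tF_S(1)$ needs $F_S(1)>0$, which is not automatic from $\partial_\theta\mathcal E_\theta\neq 0$ alone --- you need the channel QFI (optimized over inputs, possibly with ancilla) to be strictly positive, which holds whenever the family $\{\mathcal E_\theta\}$ is locally distinguishable.
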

Typically, one needs to actively perform error correction to achieve the quadratic scaling when HKNS is true. The intuition is that one can project onto the $H_\perp$ part by decoupling the radiation quanta from the system at each time step. Nonetheless, we will be more interested in the case when HKS is true. The key idea in proving this is to use monotonicity. We have the following,
\begin{align}
   F_{S} \leq  F_{SR}(\psi) \leq \quad
     \begin{tikzpicture}[scale=0.5,baseline={([yshift=-0.65ex] current bounding box.center) }]
        \GFTensor{0,2}{1.2}{.6}{\small $\dot{V}$}{1};
        \GFTensor{0,0}{1.2}{.6}{\small $\dot{\bar{V}}$}{0};
        \draw (1,1) node {\small \dots};
        \draw[very thick,draw=red] (-1.2,0)--(-1.2,2);
        \DTensor{3.6,0}{1.2}{.6}{\small $\bar{\psi}$}{7};
        \DTensor{3.6,2}{1.2}{.6}{\small ${\psi}$}{7};
    \end{tikzpicture}
    \quad \leq
    \quad
   \left\| 
    \begin{tikzpicture}[scale=0.5,baseline={([yshift=-0.65ex] current bounding box.center) }]
        \GFTensor{0,2}{1.2}{.6}{\small $\dot{V}$}{1};
        \GFTensor{0,0}{1.2}{.6}{\small $\dot{\bar{V}}$}{0};
        \draw (1,1) node {\small \dots};
        \draw[very thick,draw=red] (-1.2,0)--(-1.2,2);
    \end{tikzpicture}
   \right\|
   \quad 
\end{align}

An upper bound on the QFI of this state can be expressed in terms of the single-channel Kraus operators if one defines $A := \sum_{m}\dot{K}^\dagger_m\dot{K}_m = \dot{\mathbf{K}}^\dagger\dot{\mathbf{K}}$ and $B := \sum_m\dot{K}^\dagger_m K_m = \dot{\mathbf{K}}^\dagger{\mathbf{K}}$, then one has \cite{zhou2021asymptotic}
\begin{eqnarray}
    F_S \leq 4 \min_{\{h\}}\left(T\|A\| + (T^2-T)\|B\|(\|B\| + 2\sqrt{\|A\|})\right)
\end{eqnarray}
The quadratic part of the upper bound depends on $\|B\|$, which is easily seen by noting the following tensor representations of $A$ and $B$, 
\begin{align}\label{eq:Amatrix}
    |A\rrangle \equiv \quad
  \begin{tikzpicture}[scale=0.5,baseline={([yshift=-0.65ex] current bounding box.center) }]
        \GTensor{0,0}{1.2}{.6}{\small $\dot{V}$}{7};
        \GTensor{0,-2}{1.2}{.6}{\small $\dot{\bar{V}}$}{0};
        \draw[very thick,draw=red] (1.2,0)--(2.03,0);
            \draw[very thick,draw=red] (1.2,-2)--(2.03,-2);
          \draw[very thick,draw=red] (-2,0)--(-1.2,0);
          \draw[very thick,draw=red] (-2,-2)--(-1.2,-2);
          \draw[very thick,draw=red] (-2,-2)--(-2,0);
  \end{tikzpicture}
      \quad 
  \end{align} 

  \begin{align}\label{eq:Bmatrix}
    |B\rrangle \equiv \quad
  \begin{tikzpicture}[scale=0.5,baseline={([yshift=-0.65ex] current bounding box.center) }]
        \GTensor{0,0}{1.2}{.6}{\small $V$}{7};
        \GTensor{0,-2}{1.2}{.6}{\small $\dot{\bar{V}}$}{0};
        \draw[very thick,draw=red] (1.2,0)--(2.03,0);
            \draw[very thick,draw=red] (1.2,-2)--(2.03,-2);
          \draw[very thick,draw=red] (-2,0)--(-1.2,0);
          \draw[very thick,draw=red] (-2,-2)--(-1.2,-2);
          \draw[very thick,draw=red] (-2,-2)--(-2,0);
  \end{tikzpicture}
      \quad 
  \end{align} 
Specifically, it has been shown that HKS implies the existence of a purification $SR$ such that $B = 0$, in turn implying a linear bound on the noisy QFI. Physically, the information encoded about the parameter $\theta$ in $V_\theta$ `leaks out' into the environment, then it cannot be recovered optimally via any measurement on the system. We now formalize this assuming HKS is true. 

\subsection{Continuous sensing under HKS condition}\label{subsec:continuousHKS}

Note that HKS can be restated as follows, 
\begin{align} \label{eq:HKS}
    \quad 
  \begin{tikzpicture}[scale=0.5,baseline={([yshift=-0.65ex] current bounding box.center) }]
    \GTensor{0,0}{1.2}{.6}{\small $V$}{7};
    \GTensor{0,-2}{1.2}{.6}{\small $\dot{\bar{V}}$}{0};
    \draw[very thick,draw=red] (-1.2,-2)--(-1.2,0);
\end{tikzpicture} \quad \in \text{span} \{|K_{m}^{\dagger}K_{m'}\rrangle \}_{(m,m')\in[d]^2}
\end{align}

 \begin{lem} \label{lem:HKSImplies}
    If the isometry $V$ satisfies the HKS condition of \cref{eq:HKS} then we have that 
    \begin{align} \label{eq:VofHKS}
        \quad
        \begin{tikzpicture}[scale=0.5,baseline={([yshift=-0.65ex] current bounding box.center) }]
            \GTensor{0,0}{1.2}{.6}{\small $V_\theta$}{7};
      \end{tikzpicture} \quad \equiv 
     \quad
         \begin{tikzpicture}[scale=0.5,baseline={([yshift=-0.65ex] current bounding box.center) }]
             \GTensor{0,0}{1.2}{.6}{\small $W_\theta$}{7};
               \RTensor{0,-2}{1.}{.6}{\small $u_\theta$}{1};
       \end{tikzpicture}
           \quad 
       \end{align} 
       with $u = e^{\iota\theta h}$  and some Hermitian $h\in \mathbb{C}^{d\times d}$. That is, $\ket{\psi_V} = U_\theta \ket{\psi_W}$ with $U_\theta = \exp{\iota\theta \left(\sum_\tau h_\tau\right)}$. Furthermore, $W$ satisfies, 
       \begin{align} \label{eq:Wcondition}
        \begin{tikzpicture}[scale=0.5,baseline={([yshift=-0.65ex] current bounding box.center) }]
            \GTensor{0,0}{1.2}{.6}{\small $\dot{W}$}{7};
            \GTensor{0,-2}{1.2}{.6}{\small ${\bar{W}}$}{0};
            \draw[very thick,draw=red] (-1.2,-2)--(-1.2,0);
        \end{tikzpicture} \quad 
        = \quad 0 
    \end{align}
    As a consequence, $\beta^W_\tau = 0~\forall \tau$ and $\gamma^W = 0$, leading to $F[\psi_W] = 4T\cdot\alpha^W$.
 \end{lem}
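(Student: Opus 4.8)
The plan is to derive all three assertions from an explicit change of Kraus frame on the emitted qudit, followed by a short diagram-chasing argument. First I would unpack the HKS hypothesis. Writing $\mathbf{K}=(K_0,\dots,K_{d-1})$, the contraction on the left of \eqref{eq:HKS} is the operator $B:=\sum_m\dot K_m^\dagger K_m$ of \eqref{eq:Bmatrix}; differentiating the isometry constraint $\sum_m K_m^\dagger K_m=\mathbbm{1}_D$ gives $B+B^\dagger=0$, so $B$ is anti-Hermitian and $\iota B$ is Hermitian. The HKS condition reads $B\in\mathsf{K}=\text{span}\{K_i^\dagger K_j\}$, which—since $\mathsf{K}$ is a complex subspace—is equivalent to $\iota B\in\mathsf{K}$; hence $\iota B=\sum_{ij}c_{ij}K_i^\dagger K_j$, and averaging the coefficients $c_{ij}$ with $\bar c_{ji}$ (legitimate because $\iota B$ is Hermitian) lets me choose a Hermitian matrix $h=(h_{ij})$ on $\mathbb{C}^d$ with $\iota B=\sum_{ij}h_{ij}K_i^\dagger K_j$. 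This $h$ is the generator in $u_\theta=e^{\iota\theta h}$.

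Next I would construct $W_\theta$ and verify its two defining properties. Set $u_\theta=e^{\iota\theta h}$ and define $\tilde K_m:=\sum_{m'}(e^{-\iota\theta h})_{mm'}K_{m'}$, with $W_\theta=\sum_m\tilde K_m\otimes\ket{m}$. Unitarity of $e^{-\iota\theta h}$ gives $\sum_m\tilde K_m^\dagger\tilde K_m=\mathbf{K}^\dagger\mathbf{K}=\mathbbm{1}_D$, so $W_\theta$ is a genuine isometry, and inverting the frame change yields $K_m=\sum_{m'}(e^{\iota\theta h})_{mm'}\tilde K_{m'}$—exactly the tensor identity \eqref{eq:VofHKS}, with $u_\theta$ acting on the emitted leg. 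For \eqref{eq:Wcondition} I would compute the $B$-matrix of $W$ directly; using $\partial_\theta e^{-\iota\theta h}=-\iota h\,e^{-\iota\theta h}$ and $[h,e^{-\iota\theta h}]=0$ (so $(e^{-\iota\theta h})^\dagger h\,e^{-\iota\theta h}=h$ and $(e^{-\iota\theta h})^\dagger e^{-\iota\theta h}=\mathbbm{1}$) one finds
\[
\sum_m\dot{\tilde K}_m^\dagger\tilde K_m=\iota\sum_{ij}h_{ij}K_i^\dagger K_j+\sum_i\dot K_i^\dagger K_i=\iota(\iota B)+B=0 ,
\]
which is precisely the contraction drawn in \eqref{eq:Wcondition} (equivalently its adjoint $\sum_m\tilde K_m^\dagger\dot{\tilde K}_m$, depending on the leg convention; both vanish since one is the conjugate of the other).

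The two corollaries then follow with essentially no extra work. Applying the local identity $V_\theta=(\mathbbm{1}_{\mathcal{S}}\otimes u_\theta)W_\theta$ at every time step and using that each $u_\theta$ acts only on the qudit $\mathcal{H}_\tau$ emitted at that step—untouched by all later steps, and commuting with the rotations at other steps—one pulls every environment rotation to the front of the sequentially generated state, giving $\ket{\psi_V}=U_\theta\ket{\psi_W}$ with $U_\theta=\bigotimes_\tau u_{\theta,\tau}=\exp(\iota\theta\sum_\tau h_\tau)$. Moreover the diagrams defining $\gamma$ and every $\beta_\tau$ in Result~\ref{re:qfinormalmps} each contain the left-capped $\dot W$--$\bar W$ block of \eqref{eq:Wcondition} as a subfactor, so that equation forces $\gamma^W=0$ and $\beta^W_\tau=0$ for all $\tau$; substituting into the formula of Result~\ref{re:qfinormalmps} collapses it to $F[\psi_W]=4T\alpha^W$.

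I expect the only delicate point—and the step I would go slowly on—to be the bookkeeping that identifies the abstract Kraus expression $\sum_m\dot{\tilde K}_m^\dagger\tilde K_m$ (and its adjoint) with the precise leg contraction literally drawn in \eqref{eq:Wcondition} and inside the definitions of $\gamma$ and $\beta_\tau$: one has to be careful about which virtual index each leg carries and where the conjugation and the $\theta$-derivative sit, so that the sign and Hermiticity of $h$ come out consistently and the cancellation $\iota(\iota B)+B=0$ is the right one. Beyond this there is no analytic obstacle—the lemma is a structural reorganization of the sequential-generation circuit, and everything reduces to the one-line identity above together with the fact that the environment rotations commute through the emission.
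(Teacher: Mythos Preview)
Your proof is correct and follows essentially the same route as the paper: both define $W_\theta$ by applying $u_\theta^\dagger=e^{-\iota\theta h}$ to the emitted leg of $V_\theta$ and then verify that the resulting $B$-matrix $\sum_m\dot{\tilde K}_m^\dagger\tilde K_m$ vanishes via the cancellation $\iota(\iota B)+B=0$. Your version spells out the Kraus algebra and the Hermitization of $h$ more explicitly than the paper's purely diagrammatic treatment, but the content is identical.
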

This formalizes the intuition of information `leaking out' into the environment, through the unitary $u_\theta$. An extreme example of this would be the case where \textit{all} the information about the parameter flows into the photon space, and we have $\dot{W} \equiv 0$. In general, the Lemma tells us that $\ket{\psi_V} = e^{\iota \theta H}\ket{\psi_W}$ with $H = \sum_{\tau}h^{(\tau)}$. For noisy metrology, both $V$ and $W$ are valid purifications, and thus \cref{eq:Wcondition} implies a linear bound on the noisy QFI $F_S$ by monotonicity.

\begin{proof}
    Assume the Hamiltonian is in the Kraus span. Thus, there exists a Hermitian operator $h$ such that 
    \begin{align} 
        \begin{tikzpicture}[scale=0.5,baseline={([yshift=-0.65ex] current bounding box.center) }]
            \GTensor{0,0}{1.2}{.6}{\small $V$}{7};
            \GTensor{0,-2}{1.2}{.6}{\small $\dot{\bar{V}}$}{0};
            \draw[very thick,draw=red] (-1.2,-2)--(-1.2,0);
        \end{tikzpicture} \quad 
        = -\iota \quad 
        \begin{tikzpicture}[scale=0.5,baseline={([yshift=-0.65ex] current bounding box.center) }]
            \GTensor{0,0}{1.2}{.6}{\small $V$}{7};
            \GTensor{0,-4}{1.2}{.6}{\small $\bar{V}$}{0};
            \RTensor{0,-2}{1.}{.6}{\small $h$}{1};
            \draw[very thick,draw=red] (-1.2,-4)--(-1.2,0);
        \end{tikzpicture}
    \end{align}
Given $h$, define a new tensor $W_\theta$ as follows,
  \begin{align} \label{eq:defineW}
    \quad
    \begin{tikzpicture}[scale=0.5,baseline={([yshift=-0.65ex] current bounding box.center) }]
        \GTensor{0,0}{1.2}{.6}{\small $W_\theta$}{7};
  \end{tikzpicture} \quad := 
 \quad
     \begin{tikzpicture}[scale=0.5,baseline={([yshift=-0.65ex] current bounding box.center) }]
         \GTensor{0,0}{1.2}{.6}{\small $V_\theta$}{7};
           \RTensor{0,-2}{1.}{.6}{\small $u_\theta^\dagger$}{1};
   \end{tikzpicture}
       \quad 
   \end{align} 
Then note that 

\begin{align} 
    \quad
    \begin{tikzpicture}[scale=0.5,baseline={([yshift=-0.65ex] current bounding box.center) }]
        \GTensor{0,0}{1.2}{.6}{ $\dot{W}$}{7};
  \end{tikzpicture} \quad = 
 \quad
     \begin{tikzpicture}[scale=0.5,baseline={([yshift=-0.65ex] current bounding box.center) }]
         \GTensor{0,0}{1.2}{.6}{$\dot{V}$}{7};
           \RTensor{0,-2}{1.}{.6}{\small $u_\theta^\dagger$}{1};
   \end{tikzpicture}
       \quad - \iota \quad
\begin{tikzpicture}[scale=0.5,baseline={([yshift=-0.65ex] current bounding box.center) }]
    \GTensor{0,0}{1.2}{.6}{\small $V$}{7};
      \RTensor{0,-2}{1.}{.6}{\small $u_\theta^\dagger$}{1};
      \RTensor{0,-4}{1.}{.6}{\small $h$}{1};
\end{tikzpicture}
  \quad 
\end{align} 
And thus, 

\begin{align}
    \begin{tikzpicture}[scale=0.5,baseline={([yshift=-0.65ex] current bounding box.center) }]
        \GTensor{0,0}{1.2}{.6}{\small $W$}{7};
        \GTensor{0,-2}{1.2}{.6}{$\dot{\bar{W}}$}{0};
  \end{tikzpicture} \quad 
  = \quad 
  \begin{tikzpicture}[scale=0.5,baseline={([yshift=-0.65ex] current bounding box.center) }]
    \GTensor{0,0}{1.2}{.6}{\small $V$}{7};
    \GTensor{0,-2}{1.2}{.6}{ $\dot{\bar{V}}$}{0};
\end{tikzpicture} \quad 
+ \iota \quad 
\begin{tikzpicture}[scale=0.5,baseline={([yshift=-0.65ex] current bounding box.center) }]
    \GTensor{0,0}{1.2}{.6}{\small $V$}{7};
    \GTensor{0,-4}{1.2}{.6}{\small $\bar{V}$}{0};
    \RTensor{0,-2}{1.}{.6}{\small $h$}{1};
\end{tikzpicture}
\end{align}
leading to \cref{eq:Wcondition}. Inverting the photon-space unitary in \cref{eq:defineW}, we arrive at \cref{eq:VofHKS}. 
\end{proof}

Note that if one Kraus representation obeys HKS, then so do all others. However, the explicit matrix $h$ that one gets in the Lemma above may be different for the different representations. Nonetheless, we obtain a major simplification of the radiation QFI when HKS is true.

\begin{lem}\label{lem:HKSQFI}
    If the isometry $V$ obeys HKS, then the QFI $F[\psi_V]$ takes the following form,
    \begin{align}
        F[\psi_V] = 4\var_W[H] + 4T\alpha^W + 8\text{Im}\expect{\psi_W}{H}{\dot{\psi}_W}
    \end{align}
    with the effective generator $H = \sum_\tau h_\tau$.
\end{lem}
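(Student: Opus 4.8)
The plan is to evaluate the pure-state formula $F[\psi]=4\big(\inner{\dot\psi}{\dot\psi}-|\inner{\dot\psi}{\psi}|^2\big)$ directly on the factorization $\ket{\psi_V}=U_\theta\ket{\psi_W}$ supplied by Lemma~\ref{lem:HKSImplies}, in which $U_\theta=\exp(\iota\theta H)$ with $H=\sum_\tau h_\tau$ a single-body operator supported entirely on the radiation, and the tensor $W$ (namely $V$ dressed by the photon-space unitary $u_\theta^\dagger$) obeys \cref{eq:Wcondition}. I will use two consequences of that lemma, both obtained from the bookkeeping in the proof of Result~\ref{re:qfinormalmps}: since $\beta^W_\tau=0$ for all $\tau$, the decomposition of $\inner{\dot\psi_W}{\dot\psi_W}$ into a $\tau$-diagonal piece $T\alpha^W$ plus off-diagonal pieces $\beta^W_\tau$ collapses to $\inner{\dot\psi_W}{\dot\psi_W}=T\alpha^W$; and since $\gamma^W=0$, the translation-invariance identity $\inner{\dot\psi_W}{\psi_W}=T\gamma^W$ (established in that same proof) gives $\inner{\dot\psi_W}{\psi_W}=0$. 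In particular this reproduces $F[\psi_W]=4T\alpha^W$. Both $\ket{\psi_V}$ and $\ket{\psi_W}$ are normalized because $V$, and hence $W$, is an isometry.

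First I would differentiate: since $H$ and $U_\theta$ share the same generator they commute, so $\ket{\dot\psi_V}=\iota H U_\theta\ket{\psi_W}+U_\theta\ket{\dot\psi_W}=U_\theta\big(\iota H\ket{\psi_W}+\ket{\dot\psi_W}\big)$. Inserting this into the two ingredients of the QFI and using $U_\theta^\dagger U_\theta=\mathbbm 1$ and $H=H^\dagger$, the overlap becomes $\inner{\dot\psi_V}{\psi_V}=-\iota\expect{\psi_W}{H}{\psi_W}+\inner{\dot\psi_W}{\psi_W}=-\iota\expect{\psi_W}{H}{\psi_W}$ (the second term killed by $\gamma^W=0$), so $|\inner{\dot\psi_V}{\psi_V}|^2=\expect{\psi_W}{H}{\psi_W}^2$; and the norm expands to $\inner{\dot\psi_V}{\dot\psi_V}=\expect{\psi_W}{H^2}{\psi_W}+\inner{\dot\psi_W}{\dot\psi_W}+2\,\text{Im}\expect{\psi_W}{H}{\dot\psi_W}$, the last term being the collapse of the cross term between $\iota H\ket{\psi_W}$ and $\ket{\dot\psi_W}$ to an imaginary part (using $\expect{\dot\psi_W}{H}{\psi_W}=\overline{\expect{\psi_W}{H}{\dot\psi_W}}$). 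Subtracting, $F[\psi_V]/4=\big(\expect{\psi_W}{H^2}{\psi_W}-\expect{\psi_W}{H}{\psi_W}^2\big)+\inner{\dot\psi_W}{\dot\psi_W}+2\,\text{Im}\expect{\psi_W}{H}{\dot\psi_W}$.

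To finish I would identify the three pieces: the first bracket is $\var_W[H]$ by definition; the second equals $T\alpha^W$ by the identity noted above; and the third already has the target form, so multiplying by four yields $F[\psi_V]=4\var_W[H]+4T\alpha^W+8\,\text{Im}\expect{\psi_W}{H}{\dot\psi_W}$. I do not expect a genuine obstacle—the computation is a few lines—but the point that needs care is the bookkeeping of the cross terms generated by $\iota H\ket{\psi_W}+\ket{\dot\psi_W}$: one must notice that the would-be extra contribution $\propto\expect{\psi_W}{H}{\psi_W}\,\text{Im}\inner{\psi_W}{\dot\psi_W}$, which would break the clean form, is annihilated precisely because $\gamma^W=0$; mere normalization of $\ket{\psi_W}$ forces $\inner{\psi_W}{\dot\psi_W}$ to be purely imaginary but not to vanish, so this is exactly where the content of \cref{eq:Wcondition} is used. (The surviving term $8\,\text{Im}\expect{\psi_W}{H}{\dot\psi_W}$ is the covariance-type contribution referred to in the main text.)
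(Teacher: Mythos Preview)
Your proposal is correct and follows essentially the same route as the paper: differentiate the factorization $\ket{\psi_V}=U_\theta\ket{\psi_W}$, expand the pure-state QFI, and use the consequences $\inner{\dot\psi_W}{\psi_W}=0$ and $\inner{\dot\psi_W}{\dot\psi_W}=T\alpha^W$ of \cref{eq:Wcondition} to obtain the three terms. Your discussion of where $\gamma^W=0$ is genuinely needed (as opposed to mere normalization) is a nice clarification that the paper leaves implicit.
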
  
\begin{proof}
    Now, note that (denoting $U_\theta := e^{\iota \theta H}$)
\begin{align}
    \ket{\dot{\psi}_V} &= \iota HU_\theta \ket{\psi_W} + U_\theta\ket{\dot{\psi}_W} \\ 
    \bra{\dot{\psi}_V} &= -\iota  \bra{\psi_W} U_\theta^\dagger H+ \bra{\dot{\psi}_W}U_\theta^\dagger
\end{align}
Thus, 
\begin{align}
    \inner{\dot{\psi}_V}{\dot{\psi}_V} &= \expect{\psi_W}{H^2}{\psi_W} + \inner{\dot{\psi}_W}{\dot{\psi}_W} + 2\text{Im}\expect{\psi_W}{H}{\dot{\psi}_W} \\ 
    \inner{\dot{\psi}_V}{{\psi}_V}  &= -\iota \expect{\psi_W}{H}{\psi_W} + \inner{\dot{\psi}_W}{\psi_W}
\end{align}
Now, since the $W$ tensor satisfies \cref{eq:Wcondition}, we have have that $\inner{\dot{\psi}_W}{\psi_W} = 0$. Furthermore, 
\begin{align} 
    \inner{\dot{\psi}_W}{\dot{\psi}_W}  \equiv T\cdot \quad
  \begin{tikzpicture}[scale=0.5,baseline={([yshift=-0.65ex] current bounding box.center) }]
        \GTensor{0,0}{1.2}{.6}{\small $\dot{W}$}{7};
        \GTensor{0,-2}{1.2}{.6}{\small $\dot{\bar{W}}$}{0};
        \draw[very thick,draw=red] (1.2,0)--(2.03,0);
      \DTensor{2,-1}{1.}{.6}{\small $\rho_{ss}$}{1};
        \draw[very thick,draw=red] (1.2,-2)--(2.03,-2);
        \draw[very thick,draw=red] (-2,0)--(-1.2,0);
        \draw[very thick,draw=red] (-2,-2)--(-1.2,-2);
        \draw[very thick,draw=red] (-2,-2)--(-2,0);
  \end{tikzpicture}
      \quad  \equiv T \cdot \alpha^W 
  \end{align} 
  as all the `non-diagonal' derivatives equate to zero as a consequence of \cref{eq:Wcondition}.  Adding up the terms, we get the result.
\end{proof}

Notice that for large $T$, 
\begin{align}
    \expect{\psi_W}{H}{\dot{\psi}_W} \simeq T \cdot \quad 
    \begin{tikzpicture}[scale=0.5,baseline={([yshift=-0.65ex] current bounding box.center) }]
        \GTensor{0,0}{1.2}{.6}{\small $\dot{W}$}{7};
        \GTensor{0,-4}{1.2}{.6}{\small $\bar{W}$}{0};
        \RTensor{0,-2}{1.}{.6}{\small $h$}{1};
        \draw[very thick,draw=red] (-1.2,-4)--(-1.2,0);
        \DTensor{2,-2}{1.}{.6}{\small $\rho_{ss}$}{1};
        \draw[very thick,draw=red] (2,-3)--(2,-4);
        \draw[very thick,draw=red] (2,-1)--(2,0);
        \draw[very thick,draw=red] (1,0)--(2,0);
        \draw[very thick,draw=red] (1,-4)--(2,-4);
    \end{tikzpicture}
    \quad 
\end{align}

\begin{result}
    If the dynamics generated by $V$ obeys HKS, the variables $\alpha,\gamma,\beta$ are noted to be 
    \begin{align}
        \alpha^V= \alpha^W + \quad 
        2\text{Im}\left(\begin{tikzpicture}[scale=0.5,baseline={([yshift=-0.65ex] current bounding box.center) }]
            \GTensor{0,0}{1.2}{.6}{\small $\dot{W}$}{7};
            \GTensor{0,-4}{1.2}{.6}{\small $\bar{W}$}{0};
            \RTensor{0,-2}{1.}{.6}{\small $h$}{1};
            \draw[very thick,draw=red] (-1.2,-4)--(-1.2,0);
            \DTensor{2,-2}{1.}{.6}{\small $\rho_{ss}$}{1};
            \draw[very thick,draw=red] (2,-3)--(2,-4);
            \draw[very thick,draw=red] (2,-1)--(2,0);
            \draw[very thick,draw=red] (1,0)--(2,0);
            \draw[very thick,draw=red] (1,-4)--(2,-4);
        \end{tikzpicture}\right) \quad  + \quad 
        \begin{tikzpicture}[scale=0.5,baseline={([yshift=-0.65ex] current bounding box.center) }]
            \GTensor{0,0}{1.2}{.6}{\small ${W}$}{7};
            \GTensor{0,-4}{1.2}{.6}{\small $\bar{W}$}{0};
            \RTensor{0,-2}{1.}{.6}{\small $h^2$}{1};
            \draw[very thick,draw=red] (-1.2,-4)--(-1.2,0);
            \DTensor{2,-2}{1.}{.6}{\small $\rho_{ss}$}{1};
            \draw[very thick,draw=red] (2,-3)--(2,-4);
            \draw[very thick,draw=red] (2,-1)--(2,0);
            \draw[very thick,draw=red] (1,0)--(2,0);
            \draw[very thick,draw=red] (1,-4)--(2,-4);
        \end{tikzpicture} \quad 
    \end{align}
    
    \begin{align}
        \gamma^V = \quad 
        \iota \cdot \quad \begin{tikzpicture}[scale=0.5,baseline={([yshift=-0.65ex] current bounding box.center) }]
            \GTensor{0,0}{1.2}{.6}{\small ${W}$}{7};
            \GTensor{0,-4}{1.2}{.6}{\small $\bar{W}$}{0};
            \RTensor{0,-2}{1.}{.6}{\small $h$}{1};
            \draw[very thick,draw=red] (-1.2,-4)--(-1.2,0);
            \DTensor{2,-2}{1.}{.6}{\small $\rho_{ss}$}{1};
            \draw[very thick,draw=red] (2,-3)--(2,-4);
            \draw[very thick,draw=red] (2,-1)--(2,0);
            \draw[very thick,draw=red] (1,0)--(2,0);
            \draw[very thick,draw=red] (1,-4)--(2,-4);
        \end{tikzpicture}\quad 
    \end{align}

    \begin{align}
        \beta^V_\tau  =  \quad \begin{tikzpicture}[scale=0.5,baseline={([yshift=-0.65ex] current bounding box.center) }]
            \GTensor{0,0}{1.2}{.6}{\small ${W}$}{7};
            \GTensor{0,-4}{1.2}{.6}{\small $\bar{W}$}{0};
            \RTensor{0,-2}{1.}{.6}{\small $h$}{1};
            \draw[very thick,draw=red] (-1.2,-4)--(-1.2,0);
            \FUnitary{2.,-4}{1}{1.}{\small $T_V^\tau$}{5};
            \GTensor{4.25,0}{1.2}{.6}{\small ${W}$}{7};
            \GTensor{4.25,-4}{1.2}{.6}{\small $\bar{W}$}{0};
            \RTensor{4.25,-2}{1.}{.6}{\small $h$}{1};
            \DTensor{6,-2}{1.}{.6}{\small $\rho_{ss}$}{1};
            \draw[very thick,draw=red] (6,-3)--(6,-4);
            \draw[very thick,draw=red] (6,-1)--(6,0);
            \draw[very thick,draw=red] (5,0)--(6,0);
            \draw[very thick,draw=red] (5,-4)--(6,-4);
        \end{tikzpicture}\quad 
    \end{align}    
\end{result}
Hence, any (transient) quadratic contribution in $F[\psi_V]$ can only occur as a result of the effective photon generator $H$. Conversely, one finds that if that is the case, then the noisy QFI will be asymptotically linear. 

Let $\rho_V = \sum_i \lambda_i \ketbra{\lambda_i}$, and thus $\rho_W = \sum_i \lambda_i U_\theta^\dagger \ketbra{\lambda_i} U_\theta \equiv \sum_i \lambda_i \ketbra{\tilde{\lambda}_i}$. We have, 
\begin{equation}
    \dot{\rho}_V = \iota[H,\rho_V] + U_\theta \dot{\rho}_W U_\theta^\dagger
\end{equation}
Now, the mixed state QFI expression reads, 
\begin{equation}
    F[\rho] = \sum_{ij} \frac{2}{\lambda_i + \lambda_j} |\expect{\lambda_i}{\dot{\rho}}{\lambda_j}|^2 
\end{equation}
\begin{equation}
    F[\rho_V] = \underbrace{\sum_{ij}\frac{2(\lambda_i - \lambda_j)^2}{\lambda_i + \lambda_j} |\expect{\lambda_i}{H}{\lambda_j}|^2}_{F[\rho_W,H]}  + F[\rho_W] + 2\sum_{ij}\frac{\lambda_i-\lambda_j}{\lambda_i + \lambda_j}\text{Im}[\expect{\lambda_i}{H}{\lambda_j}\expect{\tilde{\lambda}_j}{\dot{\rho}_W}{\tilde{\lambda}_i}]
\end{equation}
Now, we have that $F[\rho_W,H] \leq 4\var[H]$ and $F[\rho_W] \leq 4T \alpha_W$. Notably, all terms other than $F[\rho_W,H]$ are strictly linear in $T$, without any quadratic transients. Now, we use the following result to lower bound $F[\rho_W,H]$.

\begin{theorem}[Theorem 3 of \cite{MarvianVariational}] \label{thm:PurifiedQFI}
    The QFI of the mixed state $\rho_\theta = U_\theta \rho U_\theta^\dagger$ with $U_\theta = e^{\iota\theta H_R}$ is equal to 
    \begin{equation}
        F[\rho_\theta] = \min_{H_{\text{tot}}} \var_{\Psi}[H_{\text{tot}}]
    \end{equation}
    where $\Psi \equiv \Psi_{SR}$ is any purification of $\rho$ with $\rho_R = \tr_R[\ket{\Psi}\bra{\Psi}_{SR}]$ and the minimization is over all $H_S$ such that $H_{\text{tot}} = H_S \otimes \mathbbm{1}_R + \mathbbm{1}_S \otimes H_R$.
\end{theorem}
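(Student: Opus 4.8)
The plan is to derive \cref{thm:PurifiedQFI} from the purification characterization of the mixed-state QFI recalled above, \cref{thm:dilatedqfi} of Ref.~\cite{kolodynski2013efficient}. Writing $\rho_\theta = U_\theta\rho U_\theta^\dagger$ with $U_\theta = e^{-\iota\theta H_R}$, that result gives $F[\rho_\theta] = \min_{\Psi} F[\ket{\Psi_\theta}]$ over all smooth purification families with $\tr_S\ketbra{\Psi_\theta} = \rho_\theta$. The work is then to (i) show that a minimizing family may be taken of the ``translation-invariant'' form $\ket{\Psi_\theta} = (e^{-\iota\theta H_R}\otimes e^{-\iota\theta H_S})\ket{\Psi}$ for a fixed purification $\ket{\Psi}$ of $\rho$ and some Hermitian $H_S$ on the reference, and (ii) evaluate the pure-state QFI of such a family. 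Step (ii) is immediate: $H_R$ and $H_S$ act on different tensor factors, so $\ket{\Psi_\theta} = e^{-\iota\theta H_{\mathrm{tot}}}\ket{\Psi}$ with $H_{\mathrm{tot}} = H_R\otimes\mathbbm{1}_S + \mathbbm{1}_R\otimes H_S$, and the pure-state formula recalled in Sec.~\ref{subsec:qfidef} gives $F[\ket{\Psi_\theta}] = 4\var_\Psi[H_{\mathrm{tot}}]$, independent of $\theta$.

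The easy inequality $F[\rho_\theta]\le 4\min_{H_{\mathrm{tot}}}\var_\Psi[H_{\mathrm{tot}}]$ does not even need \cref{thm:dilatedqfi}: for any Hermitian $H_S$ the family above satisfies $\tr_S\ketbra{\Psi_\theta} = e^{-\iota\theta H_R}(\tr_S\ketbra{\Psi})e^{\iota\theta H_R} = \rho_\theta$, since conjugation by the $\theta$-independent unitary $e^{-\iota\theta H_S}$ on $S$ drops out of $\tr_S$; hence monotonicity of the QFI under the parameter-independent partial trace gives $F[\rho_\theta]\le F[\ket{\Psi_\theta}] = 4\var_\Psi[H_{\mathrm{tot}}]$, and one minimizes over $H_S$.

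For the reverse inequality I would show that \emph{every} purification family has QFI of exactly this form, whence \cref{thm:dilatedqfi} closes the argument. Fix $\ket{\Psi}$ purifying $\rho$; any smooth purification family of $\rho_\theta$ equals $(e^{-\iota\theta H_R}\otimes Y_\theta)\ket{\Psi}$ for a smooth family of isometries $Y_\theta$ acting on $S$ alone (enlarge $S$ if necessary; $Y_\theta$ may be taken unitary on a fixed $S$). Differentiating, $\ket{\partial_\theta\Psi_\theta} = -\iota\,(H_R\otimes\mathbbm{1}_S + \mathbbm{1}_R\otimes K_\theta)\ket{\Psi_\theta}$ with $K_\theta := \iota\,\dot{Y}_\theta Y_\theta^\dagger$ Hermitian, so the pure-state QFI at that $\theta$ is $4\var_{\Psi_\theta}[H_R\otimes\mathbbm{1}_S + \mathbbm{1}_R\otimes K_\theta]$; conjugating the generator back through $e^{-\iota\theta H_R}\otimes Y_\theta$ and using $[e^{\iota\theta H_R}, H_R] = 0$ turns this into $4\var_\Psi[H_R\otimes\mathbbm{1}_S + \mathbbm{1}_R\otimes\widetilde K_\theta]$ with $\widetilde K_\theta := Y_\theta^\dagger K_\theta Y_\theta$ Hermitian. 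Since $\widetilde K_\theta$ at a fixed $\theta$ can be made equal to any Hermitian operator on $S$, minimizing over purification families (via \cref{thm:dilatedqfi}) yields $F[\rho_\theta] = 4\min_{H_S}\var_\Psi[H_R\otimes\mathbbm{1}_S + \mathbbm{1}_R\otimes H_S] = 4\min_{H_{\mathrm{tot}}}\var_\Psi[H_{\mathrm{tot}}]$, which is the statement (the overall factor reflecting the QFI normalization convention of Ref.~\cite{MarvianVariational}).

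I expect the bookkeeping in this last step to be the main obstacle: characterizing all smooth purification families of a possibly rank-deficient $\rho_\theta$ (choosing a smooth branch of the relating $Y_\theta$), checking that enlarging the ancilla $S$ never lowers the minimum (a purification on a larger space is a minimal one tensored with a fixed pure ancilla, and any generator added on that ancilla only contributes nonnegative extra variance), and verifying that \cref{thm:dilatedqfi} is invoked with a genuinely arbitrary purification. As a self-contained cross-check one can instead compute both sides in an eigenbasis $\rho = \sum_k p_k\ketbra{k}$: the unitary-family QFI is $F[\rho_\theta] = 2\sum_{k,l}\frac{(p_k-p_l)^2}{p_k+p_l}\,|\expect{k}{H_R}{l}|^2$, while for $\ket{\Psi} = \sum_k\sqrt{p_k}\,\ket{k}_R\ket{k}_S$ the map $H_S\mapsto\var_\Psi[H_{\mathrm{tot}}]$ is a positive-semidefinite quadratic whose minimization decouples over index pairs, the optimum being $\expect{k}{H_S}{l} = -\tfrac{2\sqrt{p_kp_l}}{p_k+p_l}\,\overline{\expect{k}{H_R}{l}}$ off the diagonal and $\expect{k}{H_S}{k} = -\expect{k}{H_R}{k}$ on it, which reproduces $\tfrac14 F[\rho_\theta]$.
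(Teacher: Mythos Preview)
The paper does not prove this statement: it is quoted verbatim as Theorem~3 of Ref.~\cite{MarvianVariational} and used as an input to Lemma~\ref{lem:HKSmixedqfibound}. There is therefore no ``paper's own proof'' to compare against.

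Your argument is correct and is the natural route. Reducing to \cref{thm:dilatedqfi} and then showing that the freedom in the purification family is exactly the freedom to add an ancilla generator $H_S$ is the standard way to establish this variational formula; your parametrization $\ket{\Psi_\theta}=(e^{-\iota\theta H_R}\otimes Y_\theta)\ket{\Psi}$ with $K_\theta=\iota\dot Y_\theta Y_\theta^\dagger$ cleanly isolates that freedom, and the conjugation step bringing the variance back to the fixed $\ket{\Psi}$ is sound because $[e^{\iota\theta H_R},H_R]=0$. The eigenbasis cross-check you sketch is also correct and is in fact how this identity is often derived directly; the optimal off-diagonal choice $\expect{k}{H_S}{l}=-\tfrac{2\sqrt{p_kp_l}}{p_k+p_l}\overline{\expect{k}{H_R}{l}}$ indeed reproduces the standard unitary-family QFI. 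Two minor points worth tightening if you write this out in full: (i) the statement as printed in the paper drops the factor of $4$ relative to the QFI normalization used elsewhere in the paper (your parenthetical already flags this), and the trace in the purification condition should read $\rho=\tr_S[\ketbra{\Psi}_{SR}]$; (ii) the rank-deficient case and the smoothness of $Y_\theta$ are genuine bookkeeping issues, but the eigenbasis computation sidesteps both, since the minimization over $H_S$ is a finite-dimensional quadratic problem that one solves directly without invoking a smooth purification family.
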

 
Recalling the classic result bounding operator covariances in terms of mutual information.

\begin{theorem}[\cite{wolf2008area}]\label{thm:CovMI}
    For a bipartite state $\rho_{SR}$ we have 
    \begin{equation}
        \text{Cov}[O_S,O_R]^2 \leq \|O_S\|^2\|O_R\|^2 I[S:R]
    \end{equation}
    where $I[S:R]$ is the mutual information.
\end{theorem}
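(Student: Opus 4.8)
The plan is to derive the bound from two standard ingredients: the matrix Hölder inequality and the quantum Pinsker inequality. Together these say that correlations between $S$ and $R$ are controlled by how far $\rho_{SR}$ sits from the product $\rho_S\otimes\rho_R$, which the mutual information measures via a relative entropy.

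First I would rewrite the covariance as a single trace against the connected part of the state. Since $\langle O_S\rangle = \text{Tr}[O_S\rho_S]$ and $\langle O_R\rangle = \text{Tr}[O_R\rho_R]$ with $\rho_S = \text{Tr}_R\rho_{SR}$ and $\rho_R = \text{Tr}_S\rho_{SR}$, one has
\[
  \text{Cov}[O_S,O_R] = \text{Tr}\!\big[(O_S\otimes O_R)\,(\rho_{SR}-\rho_S\otimes\rho_R)\big].
\]
Applying $|\text{Tr}[XY]|\le\|X\|_\infty\|Y\|_1$ with $X=O_S\otimes O_R$, for which $\|X\|_\infty=\|O_S\|\,\|O_R\|$, gives
\[
  \big|\text{Cov}[O_S,O_R]\big| \le \|O_S\|\,\|O_R\|\;\big\|\rho_{SR}-\rho_S\otimes\rho_R\big\|_1 .
\]

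Second, I would bound the trace-norm distance by the mutual information. The mutual information is exactly the relative entropy $I[S:R]=S(\rho_{SR}\,\|\,\rho_S\otimes\rho_R)$, so the quantum Pinsker inequality $\tfrac12\|\rho-\sigma\|_1^2\le S(\rho\|\sigma)$ gives $\|\rho_{SR}-\rho_S\otimes\rho_R\|_1\le\sqrt{2\,I[S:R]}$. Substituting into the previous display yields $\text{Cov}[O_S,O_R]^2\le 2\,\|O_S\|^2\|O_R\|^2\,I[S:R]$, which is the claimed inequality up to the normalization constant used in Ref.~\cite{wolf2008area} (the factor of two is harmless, since we only use this lemma to estimate sublinear corrections). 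The quantum Pinsker step itself I would justify by data processing: take the projector $P$ onto the positive part of $\rho_{SR}-\rho_S\otimes\rho_R$, apply the two-outcome measurement $\{P,\mathbbm{1}-P\}$ to both $\rho_{SR}$ and $\rho_S\otimes\rho_R$ to reduce to classical distributions whose total variation distance is $\tfrac12\|\rho_{SR}-\rho_S\otimes\rho_R\|_1$, then invoke monotonicity of the relative entropy and the classical Pinsker inequality.

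The only point that needs genuine care is this Pinsker step — choosing the measurement so that the induced classical total variation distance is exactly half the trace-norm distance, and applying data processing for the relative entropy in the correct direction. The remaining manipulations — rewriting the covariance as a connected trace and bounding a trace by the product of operator and trace norms — are routine, so I do not expect a real obstacle; this is a known result that we would invoke essentially verbatim, and the only subtlety is bookkeeping constants.
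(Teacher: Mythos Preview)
The paper does not supply a proof of this theorem; it is simply quoted from Ref.~\cite{wolf2008area} and used as a black box in the subsequent Lemma~\ref{lem:HKSmixedqfibound}. Your argument---rewrite the covariance as $\text{Tr}[(O_S\otimes O_R)(\rho_{SR}-\rho_S\otimes\rho_R)]$, bound by H\"older, then apply quantum Pinsker to the trace distance---is exactly the standard derivation from that reference, and is correct. Your remark about the factor of $2$ is also accurate: the Pinsker route gives $\text{Cov}^2\le 2\|O_S\|^2\|O_R\|^2 I[S{:}R]$, and the paper's statement silently absorbs this constant, which is indeed harmless for the sublinear-correction estimate it feeds into.
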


Combining these two, we have the following Lemma.

\begin{lem}\label{lem:HKSmixedqfibound}
    $F[\rho_W,H] \geq \var[H] - c f(T)$ with $f(T) < T$ for $T < \tau^*$ and $\lim_{T\to\infty}f(T) = \tau^*$ with $c = O((\log{D})^k)$ for some finite $k>0$ depending on the parameter encoding.
\end{lem}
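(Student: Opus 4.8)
The plan is to combine the purified‐QFI characterization of Theorem~\ref{thm:PurifiedQFI} with the mutual‐information bound of Theorem~\ref{thm:CovMI}, applied to the \emph{specific} purification furnished by Lemma~\ref{lem:HKSImplies}. Since $\ket{\psi_W}$ has reduced state $\rho_W=\tr_S[\ketbra{\psi_W}]$ on the radiation and $\ket{\psi_V}=e^{\iota\theta H}\ket{\psi_W}$ with $H=\sum_\tau h_\tau$ supported on the radiation, $F[\rho_W,H]$ is exactly the QFI of the family $e^{\iota\theta H}\rho_W e^{-\iota\theta H}$, and $\ket{\psi_W}$ is a purification of $\rho_W$ whose purifying register is the $D$‑dimensional emitter $\mathcal S$. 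Applying Theorem~\ref{thm:PurifiedQFI} with this purification,
\begin{equation}
  F[\rho_W,H]=\min_{H_S}\var_{\psi_W}\!\big[H_S\otimes\mathbbm 1+\mathbbm 1\otimes H\big]
  =\min_{H_S}\Big(\var_{\rho_S}[H_S]+\var[H]+2\,\text{Cov}_{\psi_W}[H_S,H]\Big).
\end{equation}
Because $\var_{\rho_S}[H_S]\ge0$, it suffices to lower‑bound the cross term: I must show $\text{Cov}_{\psi_W}[H_S,H]\ge-\tfrac12 c\,f(T)$ with $c=O((\log D)^k)$, where the point of using the $\mathcal S$‑purification is precisely that the relevant mutual information will be $\le 2\log D$ rather than $O(T)$.

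Next I would reduce to a bounded generator and decompose the covariance site by site. Comparing the optimizer $H_S^\star$ to the trial choice $H_S=0$ gives $\var_{\rho_S}[H_S^\star]+2\,\text{Cov}_{\psi_W}[H_S^\star,H]\le0$, hence $\var_{\rho_S}[H_S^\star]\le 4\var[H]$ by the operator Cauchy--Schwarz inequality; by the same reasoning used in Lemma~\ref{lem:mixedqfibound} (a large‑norm generator either inflates $\var_{\rho_S}$ or acts trivially on $\mathrm{supp}\,\rho_S$ and contributes nothing to either term), the minimization can be restricted to $\|H_S\|=O(\poly\log D)$. Then writing $\text{Cov}_{\psi_W}[H_S,H]=\sum_{\tau=1}^{T}\text{Cov}_{\psi_W}[H_S,h_\tau]$ and applying Theorem~\ref{thm:CovMI} to each two‑party reduced state $\rho_{S\tau}$,
\begin{equation}
  \big|\text{Cov}_{\psi_W}[H_S,h_\tau]\big|\le\|H_S\|\,\|h_\tau\|\,\sqrt{I_{\rho_{S\tau}}[S:\tau]}.
\end{equation}

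The physical input is exponential clustering of the gapped transfer matrix. Using $T_V=|\rho_{ss}\rrangle\llangle\mathbbm 1|+\tilde T_V$ with $\|\tilde T_V^{\,n}\|\lesssim\poly(\log D)\,e^{-n/\tau^\ast}$, tracing out the $T-\tau$ radiation qudits between site $\tau$ and $\mathcal S$ leaves $\|\rho_{S\tau}-\rho_S\otimes\rho_\tau\|_1\le c_0\,e^{-(T-\tau)/\tau^\ast}$, and continuity of the mutual information then gives $I_{\rho_{S\tau}}[S:\tau]\le O((\log D)^k)\,e^{-(T-\tau)/\tau^\ast}$. Summing the geometric series,
\begin{equation}
  \big|\text{Cov}_{\psi_W}[H_S,H]\big|\le O\!\big((\log D)^k\big)\sum_{n=0}^{T-1}e^{-n/(2\tau^\ast)}\;=:\;\tfrac12 c\,f(T),
\end{equation}
and $f(T)=\sum_{n<T}e^{-n/(2\tau^\ast)}$ is strictly less than $T$ for any finite $\tau^\ast$, while $\lim_{T\to\infty}f(T)=(1-e^{-1/(2\tau^\ast)})^{-1}=\Theta(\tau^\ast)$ (the factor of two in the exponent comes from the square root in Theorem~\ref{thm:CovMI} and merely rescales the limiting constant). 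Combining with the first display yields $F[\rho_W,H]\ge\var[H]-c\,f(T)$, which is the claim, and which in turn shows $F[\rho_V]$ retains the $\Theta(T^2)$ transient whenever $\var_{\rho_W}[H]=\Theta(T^2)$.

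The main obstacle is the control of $\|H_S\|$ in the variational problem: Theorem~\ref{thm:CovMI} carries the operator norm of the system generator, whereas a priori the optimizer only has bounded \emph{variance} in $\rho_S$, and the two differ when $\rho_S$ (which approaches $\rho_{ss}$) has tiny eigenvalues. Making rigorous the reduction to $\|H_S\|=O(\poly\log D)$ — essentially showing the component of $H_S$ on the near‑kernel of $\rho_S$ can be discarded without appreciably changing $\var_{\rho_S}[H_S]$ or $\text{Cov}_{\psi_W}[H_S,H]$ — is the delicate step, and it is also where the encoding‑dependent exponent $k$ (with $k=2$ for single‑body Hamiltonian encoding) enters. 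A secondary point requiring care is verifying that the spectral projector of $\tilde T_V$ is polynomially (in $\log D$) conditioned, so that the clustering prefactor $c_0$ is genuinely $\poly(\log D)$ rather than exponential in the ambient dimension; for the gapped normal channels of interest this holds, but it should be stated as an assumption.
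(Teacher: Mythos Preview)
Your approach is essentially the same as the paper's: apply Theorem~\ref{thm:PurifiedQFI} with the purification $\ket{\psi_W}$, expand the variance, decompose the cross term as $\sum_\tau\text{Cov}[H_S,h_\tau]$, bound each summand via Theorem~\ref{thm:CovMI}, sum the resulting geometric series from the gapped transfer matrix, and appeal to the argument of Lemma~\ref{lem:mixedqfibound} to control $\|H_S\|$. You are in fact more careful than the paper on two points: you retain the square root from Theorem~\ref{thm:CovMI} (the paper writes $\sum I[S{:}\tau]$ without it, which only changes $\tau^*$ by a factor of two in $f$), and you justify the exponential decay of $I[S{:}\tau]$ via a trace-distance and continuity argument rather than asserting it directly; your explicit discussion of why restricting to bounded $\|H_S\|$ is delicate is also a useful addition that the paper glosses over.
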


\begin{proof}
    Consider $\ket{\psi_W}$ to be the purification of $\rho_W$. Now consider $H_{\text{tot}} = H_S + H_R$. The variance of this object is (all (co-)variances are on $\psi_W$),
    \begin{equation}
        \var[H_{\text{tot}}] = \var[H_S] + \var[H_R] + 2\text{cov}[H_S,H_R]
    \end{equation}
    Now, note that $\var[H_R] = \var_{\rho_W}[H_R]$ is the QFI term we are after. Additionally, $\var[H_S] \geq 0$. Thus, the only term which can decrease the total variance is the covariance term. However, note that $H_R = \sum_\tau h^R_\tau$ from Lemma~\ref{lem:HKSImplies}. Thus, we get, 
    \begin{equation}
        \text{cov}[H_S,H_R] = \sum \text{cov}[H_S,h_\tau^R] \geq - \left|\sum \text{cov}[H_S,h_\tau^R]\right| \geq -\sum \left|\text{cov}[H_S,h_\tau^R]\right| \geq - \|H_S\| \|h\| \sum I[S:\tau]
    \end{equation}
   where we used Theorem \ref{thm:CovMI} in the last step. If the correlation length is $\tau^*$, we have, 
    \begin{equation}
        \text{cov}[H_S,H_R] \geq -\|H_S\| \|h\| \sum e^{-\tau/\tau^*} \geq -\|H_S\| \|h\| \frac{e^{-1/\tau^*}}{1-e^{-1/\tau^*}} \left(1-e^{-T/\tau^*}\right)
    \end{equation}
    Thus, we have shown that for any purification, and any $H_S$ one has 
    \begin{equation}
        \var[H_{\text{tot}}] \gtrsim \var[H_R] - \|H_S\| \|h\| f(T)
    \end{equation}
    with 
    \begin{equation}
        f(T) = \frac{e^{-1/\tau^*}}{1-e^{-1/\tau^*}} \left(1-e^{-T/\tau^*}\right)
    \end{equation}
    The operator norm $\|H_S\|$ cannot scale arbitrarily with the same argument as of Lemma.~\ref{lem:mixedqfibound}. 
    By Theorem \ref{thm:PurifiedQFI}, we thus have that, 
    \begin{equation} 
        F[\rho_W,H] \geq \var[H] - c f(T)
    \end{equation}
\end{proof}
Crucially, note that we did not need to invoke mixing of the dynamics to the steady state to prove this, and this is valid in the transient regime of $0 < T < \tau^*$ as well. That is, the transient quadratic part of the $\var[H]$ can be obtained by measuring the photons alone if HKS is satisfied.

\end{document}